\documentclass[11pt]{article}
\usepackage[left=.8in, right=.8in, top=.9in, bottom=.9in]{geometry}
\usepackage{amssymb}
\usepackage{amsmath} 
\usepackage{amsthm}
\usepackage{stackengine}
\usepackage{verbatim}
\usepackage{mathtools}
\usepackage{enumerate}
\usepackage{hyperref}
\hypersetup{
	colorlinks=true,
	citecolor=blue,
	linkcolor=blue,
	urlcolor=blue
}
\usepackage{cleveref}

\DeclareMathOperator{\U}{U}
\DeclareMathOperator{\T}{T}

\DeclareMathOperator{\Tr}{Tr}

\DeclareMathOperator{\rank}{rank}

\DeclareMathOperator{\herm}{Herm}
\DeclareMathOperator{\pos}{Pos}
\DeclareMathOperator{\proj}{Proj}
\DeclareMathOperator{\den}{Den}
\DeclareMathOperator{\chan}{Chan}
\DeclareMathOperator{\quantop}{QuantOp}
\let\L\relax
\DeclareMathOperator{\L}{L}

\newcommand{\bra}[1]{\left\langle#1\right|}
\newcommand{\ket}[1]{\left|#1\right\rangle}
\newcommand{\braket}[2]{\left\langle#1\vert#2\right\rangle}

\def\Balancedhookarrowleft{\hbox{$\ensurestackMath{\stackanchor[.42pt]{%
				\scriptscriptstyle-\mkern-10mu-}{\scriptscriptstyle\leftarrow}}\mkern-6mu%
		\raisebox{1.82pt}{$\scriptscriptstyle\supset$}$}}

\def\Balancedhookarrowright{\reflectbox{\Balancedhookarrowleft}}

\DeclareMathAlphabet{\mathbbm}{U}{bbold}{m}{n}

\newtheorem{theorem}{Theorem}
\newtheorem{corollary}[theorem]{Corollary}
\newtheorem{lemma}[theorem]{Lemma}
\newtheorem{proposition}[theorem]{Proposition}

\theoremstyle{remark}
\newtheorem*{remark}{Remark}

\theoremstyle{definition}
\newtheorem{definition2}[theorem]{Definition}

\theoremstyle{remark}
\newtheorem{openProb}{Open Problem}

\title{Lower Bounds on the Running Time of Two-Way Quantum Finite Automata and Sublogarithmic-Space Quantum Turing Machines}
\author{Zachary Remscrim \\ Department of Computer Science, The University of Chicago \\ remscrim@uchicago.edu}
\date{}
\begin{document}

\maketitle

\begin{abstract}
	
	The two-way finite automaton with quantum and classical states (2QCFA), defined by Ambainis and Watrous, is a model of quantum computation whose quantum part is extremely limited; however, as they showed, 2QCFA are surprisingly powerful: a 2QCFA with only a single-qubit can recognize the language $L_{pal}=\{w \in \{a,b\}^*:w \text{ is a palindrome}\}$ with bounded error in expected time $2^{O(n)}$. 
	
	We prove that their result cannot be improved upon: a 2QCFA (of any size) cannot recognize $L_{pal}$ with bounded error in expected time $2^{o(n)}$. This is the first example of a language that can be recognized with bounded error by a 2QCFA in exponential time but not in subexponential time. Moreover, we prove that a quantum Turing machine (QTM) running in space $o(\log n)$ and expected time $2^{n^{1-\Omega(1)}}$ cannot recognize $L_{pal}$ with bounded error; again, this is the first lower bound of its kind. Far more generally, we establish a lower bound on the running time of any 2QCFA or $o(\log n)$-space QTM that recognizes any language $L$ in terms of a natural ``hardness measure'' of $L$. This allows us to exhibit a large family of languages for which we have asymptotically matching lower and upper bounds on the running time of any such 2QCFA or QTM recognizer.
	
\end{abstract}

\section{Introduction}\label{sec:intro}

Quantum algorithms, such as Shor's quantum polynomial time integer factorization algorithm \cite{shor1994algorithms}, Grover's algorithm for unstructured search \cite{grover1996fast}, and the linear system solver of Harrow, Hassidim, and Lloyd \cite{harrow2009quantum}, provide examples of natural problems on which quantum computers seem to have an advantage over their classical counterparts. However, these algorithms are designed to be run on a quantum computer that has the full power of a quantum Turing machine, whereas current experimental quantum computers only possess a rather limited quantum part. In particular, current state-of-the-art quantum computers have a very small amount of quantum memory. For example, Google's ``Sycamor'' quantum computer, used in their famous recent quantum supremacy experiment \cite{arute2019quantum}, operates on only $53$ qubits.

In this paper, we study the power quantum computers that have only a small amount of memory. We begin by considering two-way finite automata with quantum and classical states (2QCFA), originally defined by Ambainis and Watrous \cite{ambainis2002two}. Informally, a 2QCFA is a two-way deterministic finite automaton (2DFA) that has been augmented by a quantum register of constant size. 2QCFA are surprisingly powerful, as originally demonstrated by Ambainis and Watrous, who showed that a 2QCFA, with only a single-qubit quantum register, can recognize, with bounded error, the language $L_{eq}=\{a^m b^m:m \in \mathbb{N}\}$ in expected time $O(n^4)$ and the language $L_{pal}=\{w \in \{a,b\}^*:w \text{ is a palindrome}\}$ in expected time $2^{O(n)}$. In a recent paper \cite{remscrim2019power}, we presented further evidence of the power of few qubits by showing that 2QCFA are capable of recognizing many group word problems with bounded error. 

It is known that 2QCFA are more powerful than 2DFA and two-way probabilistic finite automata (2PFA). A 2DFA can only recognize regular languages \cite{rabin1959finite}. A 2PFA can recognize some nonregular languages with bounded error, given sufficient running time: in particular, a 2PFA can recognize $L_{eq}$ with bounded error in expected time $2^{O(n)}$ \cite{freivalds1981probabilistic}. However, a 2PFA cannot recognize $L_{eq}$ with bounded error in expected time $2^{o(n)}$, by a result of Greenberg and Weiss \cite{greenberg1986lower}; moreover, a 2PFA cannot recognize $L_{pal}$ with bounded error in any time bound \cite{dwork1992finite}. More generally, the landmark result of Dwork and Stockmeyer \cite{dwork1990time} showed that a 2PFA cannot recognize any nonregular language in expected time $2^{n^{o(1)}}$. In order to prove this statement, they defined a particular ``hardness measure'' $D_L:\mathbb{N} \rightarrow \mathbb{N}$ of a language $L$. They showed that, if a 2PFA recognizes some language $L$ with bounded error in expected time at most $T(n)$ on all inputs of length at most $n$, then there is a positive real number $a$ (that depends only on the number of states of the 2PFA), such that $T(n)=\Omega\left(2^{D_L(n)^a}\right)$ \cite[Lemma 4.3]{dwork1990time}; we will refer to this statement as the ``Dwork-Stockmeyer lemma.''  

Very little was known about the limitations of 2QCFA. Are there any languages that a single-qubit 2QCFA can recognize with bounded error in expected exponential time but not in expected subexponential time? In particular, is it possible for a single-qubit 2QCFA to recognize $L_{pal}$ in subexponential time, or perhaps even in polynomial time? More generally, are there any languages that a 2QCFA (that is allowed to have a quantum register of any constant size) can recognize with bounded error in exponential time but not in subexponential time? These natural questions, to our knowledge, were all open (see, for instance, \cite{ambainis2002two,ambainis2015automata,yakaryilmaz2010succinctness} for previous discussions of these questions). 

In this paper, we answer these and other related questions. We first prove an analogue of the Dwork-Stockmeyer lemma for 2QCFA. 

\begin{theorem}\label{thm:intro:2qfaDworkstockmeyer}
	If a 2QCFA recognizes some language $L$ with bounded error in expected time at most $T(n)$ on all inputs of length at most $n$, then there a positive real number $a$ (that depends only on the number of states of the 2QCFA), such that $T(n)=\Omega\left(D_L(n)^a\right)$.
\end{theorem}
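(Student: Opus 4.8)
The plan is to adapt the crossing-sequence analysis underlying the Dwork--Stockmeyer lemma to the quantum--classical setting, replacing their stochastic transfer matrices by completely positive, trace-non-increasing \emph{transfer superoperators}. Fix a 2QCFA $M$ with classical state set $Q$ and a $d$-dimensional quantum register that recognizes $L$ with error $\epsilon < 1/2$ in expected time $\le T(n)$. The relevant state of $M$ while its head occupies the region holding a word $x$ is naturally an element of $V = \bigoplus_{q \in Q} \herm(\mathbb{C}^d)$ (a subnormalized density operator for each classical state, the total trace being a probability). To each prefix $x$ I would associate its \emph{transfer data} $\mathcal{T}_x$: the completely positive, trace-non-increasing map on $V$ describing ``head enters the right boundary of the $x$-region $\to$ head next exits to the right,'' together with the two functionals on $V$ recording the probability that $M$ instead accepts, respectively rejects, during such an excursion. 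The crucial structural point is that $\mathcal{T}_x$ lives in a \emph{fixed} finite-dimensional normed space of dimension $D_0 = O\!\left((|Q|d^2)^2\right)$, and the attainable $\mathcal{T}_x$ form a bounded set, since each component is a contraction in an appropriate norm.

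Next I would express the probability that $M$ accepts an input $xz$ purely in terms of $\mathcal{T}_x$ and $\mathcal{T}_z$, by writing the computation as a sequence of excursions alternating between the $x$-region and the $z$-region and resumming over all crossing sequences of the boundary between them. This is the step I expect to be the main obstacle: unlike the one-way case, the head may cross the boundary an unbounded number of times, so I must control the resummation and show the acceptance probability is Lipschitz in $\mathcal{T}_x$. The key estimate is that, after truncating at $k$ crossings, the truncated acceptance probability changes by at most $k\,\lVert \mathcal{T}_x - \mathcal{T}_y \rVert$ under the replacement $x \mapsto y$: this follows from a hybrid argument over the $\le k$ crossings, where swapping one application of the transfer map perturbs the current (subnormalized) state by at most $\lVert \mathcal{T}_x - \mathcal{T}_y \rVert$, a perturbation that only contracts under the remaining trace-non-increasing maps and unit functionals. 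Since each crossing costs at least one step and $M$ halts in expected time $O(T)$, Markov's inequality bounds the discarded probability mass by $O(T/k)$; choosing $k \asymp T$ large enough to make this at most $\tfrac{1}{2}(1-2\epsilon)$ shows that if $\lVert \mathcal{T}_x - \mathcal{T}_y \rVert \le \gamma$ then $M$ accepts $xz$ and $yz$ within $O(T\gamma) + \tfrac{1}{2}(1-2\epsilon)$ of each other, for every suffix $z$.

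I would then invoke bounded error. If $x$ and $y$ are distinguished by $L$ via a suffix $z$ short enough to respect the time bound (as in the definition of $D_L$), then the acceptance probabilities of $M$ on $xz$ and $yz$ differ by at least $1-2\epsilon$; combined with the Lipschitz estimate this forces $\lVert \mathcal{T}_x - \mathcal{T}_y \rVert = \Omega(1/T(n))$. Hence the transfer data of any family of pairwise $L$-distinguishable prefixes of length $\le n$ are pairwise $\Omega(1/T(n))$-separated points of a bounded region of a space of fixed dimension $D_0$.

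Finally, a standard volumetric packing bound shows that a bounded subset of a $D_0$-dimensional normed space contains at most $\left(O(T(n))\right)^{D_0}$ points that are pairwise $\Omega(1/T(n))$-separated. Since the number of such prefixes is, by definition of the hardness measure, at least $D_L(n)$, this yields $D_L(n) \le C\,T(n)^{D_0}$ for a constant $C$ depending only on the machine, which rearranges to $T(n) = \Omega\!\left(D_L(n)^{1/D_0}\right)$. Taking $a = 1/D_0$, a positive real depending only on $|Q|$ and $d$, completes the argument.
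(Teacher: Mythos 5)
Your proposal is correct and follows essentially the same route as the paper's proof: completely positive transfer operators associated to prefixes, a hybrid/contraction argument showing that acceptance probabilities of $xz$ and $yz$ differ by at most (number of crossings) times the operator distance, Markov's inequality to truncate at $\asymp T(n)$ crossings, and a volumetric packing bound in the fixed-dimensional space of transfer data to produce two $L$-dissimilar prefixes with close transfer operators, contradicting bounded error. The only cosmetic difference is that you use non-truncated, trace-non-increasing transfer maps with separate accept/reject functionals, whereas the paper uses $m$-truncated trace-preserving channels (with accept/reject absorbed into the classical register) --- an equivalence the paper itself notes in a remark.
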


This immediately implies that the result of Ambainis and Watrous \cite{ambainis2002two} cannot be improved.

\begin{corollary}\label{cor:intro:2qfaPal}
	2QCFA (of any size) cannot recognize $L_{pal}$ with bounded error in time $2^{o(n)}$.
\end{corollary}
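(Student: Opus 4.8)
The plan is to derive the corollary directly from Theorem~\ref{thm:intro:2qfaDworkstockmeyer}, reducing everything to a single combinatorial fact about the palindrome language: its Dwork--Stockmeyer hardness measure grows exponentially, $D_{L_{pal}}(n) = 2^{\Omega(n)}$. Once this is in hand, the theorem immediately gives $T(n) = \Omega\!\left(D_{L_{pal}}(n)^a\right) = 2^{\Omega(n)}$ for some fixed real $a>0$ depending only on the number of states of the purported recognizer, and since any function that is $\Omega(2^{cn})$ for a positive constant $c$ cannot also be $2^{o(n)}$, this contradiction rules out subexponential-time recognition. It is worth noting in passing that the same reduction explains why the theorem does \emph{not} force an exponential lower bound for $L_{eq}$: there the hardness measure is only linear, so $T(n)=\Omega(D_{L_{eq}}(n)^a)$ is merely polynomial, consistent with the $O(n^4)$ upper bound of Ambainis and Watrous.

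The substance of the argument is therefore the exponential lower bound on $D_{L_{pal}}$. Recall that $D_L(n)$ is the largest number of strings of length at most $n$ that are pairwise distinguishable by $L$ within the length budget $n$, meaning that for each pair there is a continuation $z$ placing exactly one of the two completions in $L$ while respecting the length bound. I would exhibit such a family explicitly. Set $k=\lfloor n/2\rfloor$ and take the $2^{k}$ strings in $\{a,b\}^{k}$, each of length at most $n$. For distinct $x,y\in\{a,b\}^{k}$, use the continuation $z=x^{R}$, the reverse of $x$. Then $xz=xx^{R}$ is a palindrome, so $xz\in L_{pal}$, whereas $yz=yx^{R}$ is a string of length $2k$ that is a palindrome precisely when $y=x$, so $yz\notin L_{pal}$ because $y\neq x$; moreover $|xz|=|yz|=2k\le n$, so the distinguishing continuation stays inside the length budget. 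Hence these $2^{k}$ strings are pairwise distinguishable, giving $D_{L_{pal}}(n)\ge 2^{\lfloor n/2\rfloor}$, which is $2^{\Omega(n)}$ as required.

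Combining the two pieces, any 2QCFA recognizing $L_{pal}$ with bounded error in expected time $T(n)$ must satisfy $T(n)=\Omega\!\left(2^{a\lfloor n/2\rfloor}\right)$, which is incompatible with a $2^{o(n)}$ bound. I expect the only genuinely delicate point to be the bookkeeping: reconciling the informal ``distinguishable within length $n$'' description used above with the precise definition of $D_L$ adopted in the paper, in particular the exact length constraints placed on the strings and on their distinguishing continuations, and confirming that $D_L$ is normalized as the count of distinguishable strings rather than its logarithm (which the form of the theorem statement forces). The length accounting works out comfortably here, since both the length-$k$ prefixes and their length-$2k$ completions sit within $n$, so once the definition is matched the exponential lower bound, and with it the corollary, follows with no further computation.
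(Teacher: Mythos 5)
Your proof is correct and follows essentially the same route as the paper: the paper likewise lower-bounds the nonregularity via the family $\{w\,w^{\mathrm{rev}}\}$, showing all strings $w$ of a fixed length are pairwise $(L_{pal},2n)$-dissimilar (so $D_{L_{pal}}(2n)\geq 2^{n}$), and then applies the 2QCFA Dwork--Stockmeyer analogue (Theorem~\ref{thm:intro:2qfaDworkstockmeyer}). Your parametrization with $k=\lfloor n/2\rfloor$ versus the paper's $D_{L_{pal}}(2n)\geq 2^n$ is an immaterial difference, and your length-budget bookkeeping matches the paper's definition of $(L,n)$-dissimilarity.
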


One of the key tools used in our proof is a quantum version of Hennie's \cite{hennie1965one} notion of a crossing sequence, which may be of independent interest. Crossing sequences played an important role in the aforementioned 2PFA results of Dwork and Stockmeyer \cite{dwork1990time} and of Greenberg and Weiss \cite{greenberg1986lower}. We note that, while our lower bound on the running time of a 2QCFA is exponentially weaker than the lower bound on the running time of a 2PFA provided by the Dwork-Stockmeyer lemma, both lower bounds are in fact (asymptotically) tight; the exponential difference provides yet another example of a situation in which quantum computers have an exponential advantage over their classical counterparts. We also establish a lower bound on the expected running time of a 2QCFA recognizer of $L$ in terms of the one-way deterministic communication complexity of testing membership in $L$. 

We then generalize our results to prove a lower bound on the expected running time $T(n)$ of a quantum Turing machine (QTM) that uses sublogarithmic space (i.e., $o(\log n)$ space) and recognizes a language $L$ with bounded error, where this lower bound is also in terms of $D_L(n)$. In particular, we show that $L_{pal}$ cannot be recognized with bounded error by a QTM that uses sublogarithmic space and runs in expected time $2^{n^{1-\Omega(1)}}$. This result is particularly intriguing, as $L_{pal}$ can be recognized by a \textit{deterministic} TM in $O(\log n)$ space (and, trivially, polynomial time); therefore, $L_{pal}$ provides an example of a natural problem for which polynomial time \textit{quantum} TMs have no (asymptotic) advantage over polynomial time \textit{deterministic} TMs in terms of the needed amount of space.

Furthermore, we show that the class of languages recognizable with bounded error by a 2QCFA in expected polynomial time is contained in $\mathsf{L/poly}$. This result, which shows that the class of languages recognizable by a particular quantum model is contained in the class of languages recognizable by a particular classical model, is a type of \textit{dequantization} result. It is (qualitatively) similar to the Adleman-type \cite{adleman1978two} \textit{derandomization} result $\mathsf{BPL} \subseteq \mathsf{L/poly}$, where $\mathsf{BPL}$ denotes the class of languages recognizable with bounded error by a probabilistic Turing machine (PTM) that uses $O(\log n)$ space and runs in expected polynomial time. The only previous dequantization result was of a very different type: the class of languages recognizable by a 2QCFA, or more generally a QTM that uses $O(\log n)$ space, with algebraic number transition amplitudes (even with unbounded error and with no time bound), is contained in $\mathsf{DSPACE}(O(\log^2 n))$ \cite{watrous2003complexity}. This dequantization result is analogous to the derandomization result: the class of languages recognizable by a PTM that uses $O(\log n)$ space (even with unbounded error and with no time bound), is contained in $\mathsf{DSPACE}(O(\log^2 n))$ \cite{borodin1983parallel}.

We also investigate which group word problems can be recognized by 2QCFA or QTMs with particular resource bounds. Informally, the word problem of a finitely generated group is the problem of determining if the product of a sequence of elements of that group is equal to the identity element. There is a deep connection between the algebraic properties of a finitely generated group $G$ and the complexity of its word problem $W_G$, as has been demonstrated by many famous results; for example, $W_G \in \mathsf{REG} \Leftrightarrow G$ is finite \cite{anisimov1971group}, $W_G \in \mathsf{CFL} \Leftrightarrow G$ is virtually free \cite{muller1983groups,dunwoody1985accessibility}, $W_G \in \mathsf{NP} \Leftrightarrow G$ is a subgroup of a finitely presented group with polynomial Dehn function \cite{birget2002isoperimetric}. We have recently shown that if $G$ is virtually abelian, then $W_G$ may be recognized with bounded error by a single-qubit 2QCFA in polynomial time, and that, for any group $G$ in a certain broad class of groups of exponential growth, $W_G$ may be recognized with bounded error by a 2QCFA in time $2^{O(n)}$ \cite{remscrim2019power}. 

We now show that, if $G$ has exponential growth, then $W_G$ cannot be recognized by a 2QCFA with bounded error in time $2^{o(n)}$, thereby providing a broad and natural class of languages that may be recognized by a 2QCFA in time $2^{O(n)}$ but not $2^{o(n)}$. We also show that, if $W_G$ is recognizable by a 2QCFA with bounded error in expected polynomial time, then $G$ must be virtually nilpotent (i.e., $G$ must have polynomial growth), thereby obtaining progress towards an exact classification of those word problems recognizable by a 2QCFA in polynomial time. Furthermore, we show analogous results for sublogarithmic-space QTMs.

The remainder of this paper is organized as follows. In \Cref{sec:prelim}, we briefly recall the fundamentals of quantum computation and the definition of 2QCFA. In \Cref{sec:crossingSequences}, we develop our notion of a quantum crossing sequence. The Dwork-Stockmeyer hardness measure $D_L$ of a language $L$, as well as several other related hardness measures of $L$, play a key role in our lower bounds; we recall the definitions of these hardness measures in \Cref{sec:runningTimeLowerBound:nonregularity}. Then, in Section~\ref{sec:runningTimeLowerBound:dworkStockmeyer}, using our notion of a quantum crossing sequence, we prove an analogue of the Dwork-Stockmeyer lemma for 2QCFA. Using this lemma, in \Cref{sec:runningTimeLowerBound:2qcfaCompClasses}, we establish various lower bounds on the expected running time of 2QCFA for particular languages and prove certain complexity class separations and inclusions. In \Cref{sec:sublogspaceQTM}, we establish lower bounds on the expected running time of sublogarithmic-space QTMs. In \Cref{sec:ComplexityWordProb}, we study group word problems and establish lower bounds on the expected running time of 2QCFA and sublogarithmic-space QTMs that recognize certain word problems. Finally, in \Cref{sec:discussion}, we discuss some open problems related to our work.

\section{Preliminaries}\label{sec:prelim}

\subsection{Quantum Computation}\label{sec:prelim:quant}

In this section, we briefly recall the fundamentals of quantum computation needed in this paper (see, for instance, \cite{watrous2018theory,nielsen2002quantum} for a more detailed presentation of the material in this section). We begin by establishing some notation. Let $V$ denote a finite-dimensional complex Hilbert space with inner product $\langle \cdot,\cdot \rangle : V \times V \rightarrow \mathbb{C}$. We use the standard Dirac bra-ket notation throughout this paper. We denote elements of $V$ by \textit{kets}: $\ket{\psi}$, $\ket{\varphi}$, $\ket{q}$, etc. For the \textit{ket} $\ket{\psi} \in V$, we define the corresponding \textit{bra} $\bra{\psi} \in V^*$ to be the linear functional on $V$ given by $\langle \ket{\psi}, \cdot \rangle :V \rightarrow \mathbb{C}$. We write $\braket{\psi}{\varphi}$ to denote $\langle \ket{\psi}, \ket{\varphi} \rangle$. Let $\L(V)$ denote the $\mathbb{C}$-vector space consisting of all $\mathbb{C}$-linear maps of the form $A:V \rightarrow V$. For $\ket{\psi},\ket{\varphi} \in V$, we define $\ket{\psi}\bra{\varphi} \in \L(V)$ in the natural way: for $\ket{\rho} \in V$, $\ket{\psi}\bra{\varphi}(\ket{\rho})=\ket{\psi}\braket{\varphi}{\rho}=\braket{\varphi}{\rho}\ket{\psi}$. Let $\mathbbm{1}_V \in \L(V)$ denote the identity operator on $V$ and let $\mathbbm{0}_V \in \L(V)$ denote the zero operator on $V$. For $A \in \L(V)$, we define $A^{\dagger} \in \L(V)$, the \textit{Hermitian transpose} of $A$, to be the unique element of $\L(V)$ such that $\langle A\ket{\psi_1},\ket{\psi_2} \rangle =\langle \ket{\psi_1},A^{\dagger} \ket{\psi_2}$, $\forall \ket{\psi_1},\ket{\psi_2} \in V$. Let $\herm(V)=\{A \in \L(V):A=A^{\dagger}\}$, $\pos(V)=\{A^{\dagger}A:A \in \L(V)\}$, $\proj(V)=\{A \in \pos(V):A^2=A\}$, $\U(V)=\{A \in \L(V):AA^{\dagger}=\mathbbm{1}_V\}$, and $\den(V)=\{A \in \pos(V):\Tr(A)=1\}$ denote, respectively, the set of Hermitian, positive semi-definite, projection, unitary, and density operators on $V$.

A \textit{quantum register} is specified by a finite set of \textit{quantum basis states} $Q=\{q_0,\ldots,q_{k-1}\}$. Corresponding to these $k$ quantum basis states is an orthonormal basis $\{\ket{q_0},\ldots,\ket{q_{k-1}}\}$ of the finite-dimensional complex Hilbert space $\mathbb{C}^Q \cong \mathbb{C}^k$. The quantum register stores a \textit{superposition} $\ket{\psi}=\sum_q \alpha_q \ket{q} \in \mathbb{C}^Q$, where each $\alpha_q \in \mathbb{C}$ and $\sum_q \lvert \alpha_q \rvert^2=1$; in other words, a superposition $\ket{\psi}$ is simply an element of $\mathbb{C}^Q$ of norm $1$. 

Following the original definition of Ambainis and Watrous \cite{ambainis2002two}, a 2QCFA may only interact with its quantum register in two ways: by applying a \textit{unitary transformation} or performing a \textit{quantum measurement}. If the quantum register is currently in the superposition $\ket{\psi} \in \mathbb{C}^Q$, then after applying the unitary transformation $T \in \U(\mathbb{C}^Q)$, the quantum register will be in the superposition $T \ket{\psi}$. A \textit{von Neumann measurement} is specified by some $P_1,\ldots,P_l \in \proj(\mathbb{C}^Q)$, such that $P_i P_j =\mathbbm{0}_{\mathbb{C}^Q}$, $\forall i,j$ with $i \neq j$, and $\sum_j P_j=\mathbbm{1}_{\mathbb{C}^Q}$. Quantum measurement is a probabilistic process where, if the quantum register is in the superposition $\ket{\psi}$, then the \textit{result} of the measurement has the value $r \in \{1,\ldots,l\}$ with probability $\lVert P_r \ket{\psi} \rVert^2$; if the result is $r$, then the quantum register collapses to the superposition $\frac{1}{\lVert P_r \ket{\psi} \rVert} P_r \ket{\psi}$. We emphasize that quantum measurement changes the state of the quantum register. 

An \textit{ensemble of pure states} of the quantum register is a set $\{(p_i,\ket{\psi_i}):i \in I \}$, for some index set $I$, where $p_i \in [0,1]$ denotes the probability of the quantum register being in the superposition $\ket{\psi_i}$, and $\sum_i p_i=1$. This ensemble corresponds to the density operator $A=\sum_i p_i \ket{\psi_i}\bra{\psi_i} \in \den(\mathbb{C}^Q)$. Of course, many distinct ensembles correspond to the density operator $A$; however, all ensembles that correspond to a particular density operator will behave the same, for our purposes (see, for instance, \cite[Section 2.4]{nielsen2002quantum} for a detailed discussion of this phenomenon, and of the following claims). That is to say, for any ensemble described by a density operator $A \in \den(\mathbb{C}^Q)$, applying the transformation $T \in \U(\mathbb{C}^Q)$ produces an ensemble described by the density operator $T A T^{\dagger}$. Similarly, when performing the von Neumann measurement specified by some $P_1,\ldots,P_l \in \proj(\mathbb{C}^Q)$, the probability that the result of this measurement is $r$ is given by $\Tr(P_r A P_r^{\dagger})$, and if the result is $r$ then the ensemble collapses to an ensemble described by the density operator $\frac{1}{\Tr(P_r A P_r^{\dagger})} P_r A P_r^{\dagger}$. 

Let $V$ and $V'$ denote a pair of finite-dimensional complex Hilbert spaces. Let $\T(V,V')$ denote the $\mathbb{C}$-vector space consisting of all $\mathbb{C}$-linear maps of the form $\Phi:\L(V) \rightarrow \L(V')$. Define $\T(V)=\T(V,V)$ and let $\mathbbm{1}_{\L(V)} \in \T(V)$ denote the identity operator. Consider some $\Phi \in \T(V,V')$. We say that $\Phi$ is \textit{positive} if, $\forall A \in \pos(V)$, we have $\Phi(A) \in \pos(V')$. We say that $\Phi$ is \textit{completely-positive} if, for every finite-dimensional complex Hilbert space $W$, $\Phi \otimes \mathbbm{1}_{\L(W)}$ is positive, where $\otimes$ denotes the tensor product. We say that $\Phi$ is \textit{trace-preserving} if, $\forall A \in \L(V)$, we have $\Tr(\Phi(A))=\Tr(A)$. If $\Phi$ is both completely-positive and trace-preserving, then we say $\Phi$ is a \textit{quantum channel}. Let $\chan(V,V')=\{\Phi \in \T(V,V'):\Phi \text{ is a quantum channel}\}$ denote the set of all such channels, and define $\chan(V)=\chan(V,V)$.

As we wish for our lower bound to be a strong as possible, we wish to consider a variant of the 2QCFA model that is as strong as possible; in particular, we will allow a 2QCFA to perform any physically realizable quantum operation on its quantum register. Following Watrous \cite{watrous2003complexity}, a \textit{selective quantum operation} $\mathcal{E}$ is specified by a set of operators $\{E_{r,j}:r \in R, j \in \{1,\ldots,l\}\} \subseteq \L(\mathbb{C}^Q)$, where $R$ is a finite set and $l \in \mathbb{N}_{\geq 1}$ (throughout the paper, we write $\mathbb{N}_{\geq 1}$ to denote the positive natural numbers, $\mathbb{R}_{\geq 0}$ to denote the nonnegative real numbers, etc.), such that $\sum_{r,j} E_{r,j}^{\dagger}E_{r,j}=\mathbbm{1}_{\mathbb{C}^Q}$. For $r \in R$, we define $\Phi_r \in \T(\mathbb{C}^Q)$ such that, $\Phi_r(A)=\sum_j E_{r,j} A E_{r,j}^{\dagger}$, $\forall A \in \L(V)$. Then, if the quantum register is described by some density operator $A \in \den(\mathbb{C}^Q)$, applying $\mathcal{E}$ will have result $r \in R$ with probability $\Tr(\Phi_r(A))$; if the result is $r$, then the quantum register is described by density operator $\frac{1}{\Tr(\Phi_r(A))} \Phi_r(A)$. Both unitary transformations and von Neumann measurements are special cases of selective quantum operations. For any $\mathcal{E}$, one may always obtain a family of operators that represent $\mathcal{E}$ with $l \leq \lvert Q \rvert^2$ \cite[Theorem 2.22]{watrous2018theory}, and therefore with $l= \lvert Q \rvert^2$ (by defining any extraneous operators to be $\mathbbm{0}_{\mathbb{C}^Q}$). Let $\quantop(\mathbb{C}^Q,R)$ denote the set of all selective quantum operations specified by some $\{E_{r,j}:r \in R, j \in \{1,\ldots,\lvert Q \rvert^2 \}\} \subseteq \L(\mathbb{C}^Q)$.

\subsection{Definition of the 2QCFA Model}\label{sec:prelim:2qcfaDef}

Next, we define two-way finite automata with quantum and classical states (2QCFA), essentially following the original definition of Ambainis and Watrous \cite{ambainis2002two}, with a few alterations that (potentially) make the model stronger. We wish to define the 2QCFA model to be as strong as possible so that our lower bounds against this model are as general as possible. 

Informally, a 2QCFA is a two-way DFA that has been augmented with a quantum register of constant size; the machine may apply unitary transformations to the quantum register and perform (perhaps many) measurements of its quantum register during its computation. Formally, a 2QCFA is a $10$-tuple, $N=(Q,C,\Sigma,R,\theta,\delta,q_{\text{start}},c_{\text{start}},c_{\text{acc}},c_{\text{rej}})$, where $Q$ is a finite set of quantum basis states, $C$ is a finite set of classical states, $\Sigma$ is a finite input alphabet, $R$ is a finite set that specifies the possible results of selective quantum operations, $\theta$ and $\delta$ are the quantum and classical parts of the transition function, $q_{\text{start}}\in Q$ is the quantum start state, $c_{\text{start}} \in C$ is the classical start state, and $c_{\text{acc}},c_{\text{rej}}\in C$, with $c_{\text{acc}} \neq c_{\text{rej}}$, specify the classical accept and reject states, respectively. We define $\#_L,\#_R \not \in \Sigma$, with $\#_L \neq \#_R$, to be special symbols that serve as a left and right end-marker, respectively; we then define the tape alphabet $\Sigma_+=\Sigma \sqcup \{\#_L,\#_R\}$. Let $\widehat{C}=C \setminus \{c_{\text{acc}},c_{\text{rej}}\}$ denote the non-halting classical states. The components of the transition function are as follows: $\theta:\widehat{C} \times \Sigma_+ \rightarrow \quantop(\mathbb{C}^Q,R)$ specifies the selective quantum operation that is to be performed on the quantum register and $\delta:\widehat{C} \times \Sigma_+ \times R \rightarrow C \times \{-1,0,1\}$ specifies how the classical state and (classical) head position evolve.

On an input $w=w_1 \cdots w_n \in \Sigma^*$, with each $w_i \in \Sigma$, the 2QCFA $N$ operates as follows. The machine has a read-only tape that contains the string $\#_L w_1 \cdots w_n \#_R$. Initially, the classic state of $N$ is $c_{\text{start}}$, the quantum register is in the superposition $\ket{q_{\text{start}}}$, and the head is at the left end of the tape, over the left end-marker $\#_L$. On each step of the computation, if the classic state is currently $c \in \widehat{C}$ and the head is over the symbol $\sigma \in \Sigma_+$, $N$ behaves as follows. First, the selective quantum operation $\theta(c,\sigma)$ is performed on the quantum register producing some result $r \in R$. If the result was $r$, and $\delta(c,\sigma,r)=(c',d)$, where $c' \in C$ and $d \in \{-1,0,1\}$, then the classical state becomes $c'$ and the head moves left (resp.  stays put, moves right) if $d=-1$ (resp. $d=0$, $d=1$).

Due to the fact that applying a selective quantum operation is a probabilistic process, the computation of $N$ on an input $w$ is probabilistic. We say that a 2QCFA $N$ recognizes a language $L$ with \textit{two-sided bounded error} $\epsilon$ if, $\forall w \in L$, $\Pr[N \text{ accepts } w] \geq 1-\epsilon$, and, $\forall w \not \in L$, $\Pr[N \text{ accepts } w] \leq \epsilon$. We then define $\mathsf{B2QCFA}(k,d,T(n),\epsilon)$ as the class of languages $L$ for which there is a 2QCFA, with at most $k$ quantum basis states and at most $d$ classical states, that recognizes $L$ with two-sided bounded error $\epsilon$, and has expected running time at most $T(n)$ on all inputs of length at most $n$. In order to make our lower bound as strong as possible, we do \textit{not} require $N$ to halt with probability $1$ on all $w \in \Sigma^*$ (i.e., we permit $N$ to reject an input by looping).

\section{2QCFA Crossing Sequences}\label{sec:crossingSequences}

In this section, we develop a generalization of Hennie's \cite{hennie1965one} notion of crossing sequences to 2QCFA, in which we make use of several ideas from the 2PFA results of Dwork and Stockmeyer \cite{dwork1990time} and Greenberg and Weiss \cite{greenberg1986lower}. This notion will play a key role in our proof of a lower bound on the expected running time of a 2QCFA. 

When a 2QCFA $N=(Q,C,\Sigma,R,\theta,\delta,q_{\text{start}},c_{\text{start}},c_{\text{acc}},c_{\text{rej}})$ is run on an input $w=w_1 \cdots w_n \in \Sigma^*$, where each $w_i \in \Sigma$, the tape consists of $\#_Lw_1 \cdots w_n \#_R$. One may describe the configuration of \textit{a single probabilistic branch} of $N$ at any particular point in time by a triple $(A,c,h)$, where $A \in \den(\mathbb{C}^Q)$ describes the current state of the quantum register, $c \in C$ is the current classical state, and $h \in \{0,\ldots,n+1\}$ is the current head position. To clarify, each step of the computation of $N$ involves applying a selective quantum operation, which is a probabilistic process that produces a particular result $r \in R$ with a certain probability (depending on the operation that is performed and the state of the quantum register); that is to say, the 2QCFA probabilistically branches, with a child for each $r \in R$.

We partition the input as $w=xy$, in some manner to be specified later. We then imagine running $N$ beginning in the configuration $(A,c,\lvert x \rvert)$, where $\lvert x \rvert$ denotes the length of the string $x$ (i.e., the head is initially over the rightmost symbol of $\#_L x$). We wish to describe the configuration (or, more accurately, ensemble of configurations) that $N$ will be in when it ``finishes computing'' on the prefix $\#_L x$, either by ``leaving'' the string $\#_L x$ (by moving its head right when over the rightmost symbol of $\#_L x$), or by accepting or rejecting its input. Of course, $N$ may leave $\#_L x$, then later reenter $\#_L x$, then later leave $\#_L x$ again, and so on, which will naturally lead to our notion of a crossing sequence. Note that the string $y$ does not affect this subcomputation as it occurs entirely within the prefix $\#_L x$.

More generally, we consider the case in which $N$ is run on the prefix $\#_L x$, where $N$ starts in some ensemble of configurations $\{(p_i,(A_i,c_i,\lvert x \rvert)):i \in I\}$, where the probability of being in configuration $(A_i,c_i,\lvert x \rvert)$ is given by $p_i$ (note that the head position in each configuration is over the rightmost symbol of $\#_L x$); we call this ensemble a \textit{starting ensemble}. We then wish to describe the ensemble of configurations that $N$ will be in when it ``finishes computing'' on the prefix $\#_L x$, (essentially) as defined above; we call this ensemble a \textit{stopping ensemble}\footnote{We use the terms ``starting ensemble'' and ``stopping ensemble'' to make clear the similarity to the notion of a  ``starting condition'' and of a ``stopping condition'' used by  Dwork and Stockmeyer \cite{dwork1990time} in their 2PFA result.}. Much as it was the case that an ensemble of pure states of a quantum register can be described by a density operator, we may also describe an ensemble of configurations of a 2QCFA using density operators. This will greatly simplify our definition and analysis of the crossing sequence of a 2QCFA.

\subsection{Describing Ensembles of Configurations of 2QCFA}\label{sec:crossingSequences:configs}

The 2QCFA $N$ posseses both a constant-sized \textit{quantum register}, that is described by some density operator at any particular point in time, and a constant-sized \textit{classical register}, that stores a classical state $c \in C$. We can naturally interpret each $c \in C$ as an element $\ket{c} \in \mathbb{C}^C$, of a special type; that is to say, each classical state $c$ corresponds to some element $\ket{c}$ in the natural orthonormal basis of $\mathbb{C}^C$ (whereas each superposition $\ket{\psi}$ of the quantum register corresponds to an element of $\mathbb{C}^Q$ of norm $1$). One may also view $N$ as possessing a \textit{head register} that stores a (classical) head position $h \in H_x=\{0,\ldots,\lvert x \rvert+1\}$ (when computing on the prefix $\#_L x$); of course, the size of this pseudo-register grows with the input prefix $x$. We analogously interpret a head position $h \in H_x$ as being the ``classical'' element $\ket{h} \in \mathbb{C}^{H_x}$. A configuration $(A,c,h) \in \den(\mathbb{C}^Q) \times C \times H_x$ is then simply a state of the \textit{combined register}, which consists of the quantum, classical, and head registers. 

We then consider an \textit{ensemble of configurations} $\{(p_i,(A_i,c_i,h_i)):i \in I\}$, where $p_i$ denotes the probability of being in configuration $(A_i,c_i,h_i)$. We represent this ensemble (non-uniquely) by the density operator $Z=\sum_i \big(p_iA_i \otimes \ket{c_i}\bra{c_i} \otimes \ket{h_i}\bra{h_i}\big) \in \den(\mathbb{C}^Q \otimes \mathbb{C}^C \otimes \mathbb{C}^{H_x})$. Let $\widehat{i}(c,h)=\{i\in I:(c_i,h_i)=(c,h)\}$ denote the indices of those configurations in classical state $c$ and with head position $h$. We then define $p:C \times H_x \rightarrow [0,1]$ such that $p(c,h)=\sum_{i \in \widehat{i}(c,h)} p_i$ is the total probability of being in classical state $c$ and having head position $h$. We define $A:C \times H_x \rightarrow \den(\mathbb{C}^Q)$ such that, if $p(c,h) \neq 0$, then $A(c,h)=\sum_{i \in \widehat{i}(c,h)} \frac{p_i}{p(c,h)} A_i$ is the density operator obtained by ``merging'' all density operators $A_i$ that come from configurations $(A_i,c_i,h_i)$ with classical state $c_i=c$ and head position $h_i=h$; if $p(c,h)=0$, then we define $A(c,h)$ arbitrarily. Then $Z=\sum_{c,h} \big(p(c,h) A(c,h) \otimes \ket{c}\bra{c} \otimes \ket{h}\bra{h}\big)$. Let $\widehat{\den}(\mathbb{C}^Q \otimes \mathbb{C}^C \otimes \mathbb{C}^{H_x})$ denote the set of all density operators given by some $Z$ of the above form  (i.e., those density operators that respect the fact that both the classical state and head position are classical).

We also consider the case in which we are only interested in the states of the quantum and classical registers, but not the head position. We then analogously describe an ensemble $\{(p_i,(A_i,c_i)):i \in I\}$ by $Z=\sum_i \big(p_iA_i \otimes \ket{c_i}\bra{c_i}\big)\in \den(\mathbb{C}^Q \otimes \mathbb{C}^C)$, and we define $\widehat{\den}(\mathbb{C}^Q \otimes \mathbb{C}^C)$ to be the set of all such density operators. In a starting ensemble, all configurations have the same head position: $\lvert x \rvert$. We define $I_x \in \T(\mathbb{C}^Q \otimes \mathbb{C}^C, \mathbb{C}^Q \otimes \mathbb{C}^C \otimes \mathbb{C}^{H_x})$ such that $I_x(Z)=Z \otimes \ket{\lvert x \rvert}\bra{\lvert x \rvert}$. Similarly, in a stopping ensemble, all configurations either have head position $\lvert x \rvert+1$ or are accepting or rejecting configurations (in which the head position is irrelevant). Let $\Tr_{\mathbb{C}^{H_x}}=\mathbbm{1}_{\L(\mathbb{C}^Q \otimes \mathbb{C}^C)} \otimes \Tr \in \T(\mathbb{C}^Q \otimes \mathbb{C}^C \otimes \mathbb{C}^{H_x},\mathbb{C}^Q \otimes \mathbb{C}^C)$ denote the \textit{partial trace with respect to} $\mathbb{C}^{H_x}$.

\subsection{Definition and Properties of 2QCFA Crossing Sequences}\label{sec:crossingSequences:def}

We now formally define the notion of a crossing sequence of a 2QCFA and prove certain needed properties. We begin by establishing some notation.

\begin{definition2}\label{def:2qcfa:transitionDesc}
	Consider a 2QCFA $N=(Q,C,\Sigma,R,\theta,\delta,q_{\text{start}},c_{\text{start}},c_{\text{acc}},c_{\text{rej}})$. For $c \in \widehat{C}=C \setminus \{c_{\text{acc}},c_{\text{rej}}\}$, $\sigma \in \Sigma_+=\Sigma\sqcup \{\#_L,\#_R\}$, $r \in R$, and $j \in J=\{1,\ldots,\lvert Q \rvert^2\}$, we make the following definitions.
	\begin{enumerate}[(i)]
		\item\label{def:2qcfa:transitionDesc:quantOp} Define $E_{c,\sigma,r,j} \in \L(\mathbb{C}^Q)$ such that $\theta(c,\sigma)\in \quantop(\mathbb{C}^Q,R)$ is described by $\{E_{c,\sigma,r,j}:r \in R, j \in J\}$.
		\item\label{def:2qcfa:transitionDesc:partialQuantOp} Define $\Phi_{c,\sigma,r} \in \T(\mathbb{C}^Q)$ such that $\Phi_{c,\sigma,r}(A)=\sum_j E_{c,\sigma,r,j} A E_{c,\sigma,r,j}^{\dagger}$, $\forall A \in \L(\mathbb{C}^Q)$.
		\item\label{def:2qcfa:transitionDesc:classicOp} Let $\gamma_{c,\sigma,r} \in C$ and $d_{c,\sigma,r} \in \{-1,0,1\}$ denote, respectively, the new classical state and the motion of the head, if the result of applying $\theta(c,\sigma)$ is $r$; i.e., $\delta(c,\sigma,r)=(\gamma_{c,\sigma,r},d_{c,\sigma,r})$. 
	\end{enumerate}  
\end{definition2}

Consider some $x \in \Sigma^*$. Let $\widehat{H}_x=\{0,\ldots,\lvert x \rvert \}$ denote the head positions corresponding to the prefix $\#_L x$, and let $H_x=\{0,\ldots,\lvert x \rvert+1\}$ denote the set of possible positions the head of $N$ may be in until it ``finishes computing'' on the prefix $\#_L x$. We define an operator $S_x \in \T(\mathbb{C}^Q \otimes \mathbb{C}^{C} \otimes \mathbb{C}^{H_x})$ that describes a single step of the computation of $N$ on $\#_L x$, as follows. If $(c,h) \in \widehat{C} \times \widehat{H}_x$, then $S_x(A \otimes \ket{c}\bra{c} \otimes \ket{h}\bra{h})$ describes the ensemble of configurations of $N$ after running $N$ for a single step beginning in the configuration $(A,c,h)$; otherwise (i.e., if $c \in \{c_{\text{acc}},c_{\text{rej}}\}$ or $h=\lvert x \rvert+1$, which means $N$ has ``finished computing'' on $\#_L x$) $S_x$ leaves the configuration unchanged. We will observe that $S_x$ correctly describes the behavior of $N$ on an ensemble of configurations, and that $S_x$ is a quantum channel.

\begin{definition2}\label{def:2qcfa:singleStepOp}
	Using the notation of Definition~\ref{def:2qcfa:transitionDesc}, consider a 2QCFA $N$ and a string $x \in \Sigma^*$. Let $x_h \in \Sigma$ denote the symbol of $x$ at position $h$, and let $x_0=\#_L$ denote the left end-marker. 
	\begin{enumerate}[(i)]
		\item\label{def:2qcfa:singleStepOp:newKraus} For $(c,h,r,j) \in C \times H_x \times R \times J$, define $\widetilde{E}_{x,c,h,r,j} \in \L(\mathbb{C}^Q \otimes \mathbb{C}^{C} \otimes \mathbb{C}^{H_x})$ as follows. 
		\[\widetilde{E}_{x,c,h,r,j}=\begin{cases}
		E_{c,x_h,r,j} \otimes \ket{\gamma_{c,x_h,r}}\bra{c} \otimes \ket{h+d_{c,x_h,r}}\bra{h}, & \text{if } (c,h) \in \widehat{C} \times \widehat{H}\\
		\frac{1}{\sqrt{\lvert R \rvert \lvert J \rvert}}\mathbbm{1}_{\mathbb{C}^Q} \otimes \ket{c}\bra{c} \otimes \ket{h}\bra{h}, & \text{otherwise.}\end{cases}\]		 
		\item\label{def:2qcfa:singleStepOp:OpDef} Define $S_x \in \T(\mathbb{C}^Q \otimes \mathbb{C}^{C} \otimes \mathbb{C}^{H_x})$ such that \[S_x(Z)=\sum_{(c,h,r,j) \in C \times H_x \times R \times J} \widetilde{E}_{x,c,h,r,j} Z \widetilde{E}_{x,c,h,r,j}^{\dagger}, \ \ \forall Z \in \L(\mathbb{C}^Q \otimes \mathbb{C}^{C} \otimes \mathbb{C}^{H_x}).\]
	\end{enumerate}  
\end{definition2}

\begin{lemma}\label{thm:2qcfa:singleStepOp:correctlyDescribesNonHalt}
	Using the above notation, consider some $x \in \Sigma^*$ and $(A,\widehat{c},\widehat{h}) \in \den(\mathbb{C}^Q) \times \widehat{C} \times \widehat{H}_x$. Let $\widehat{Z}=A \otimes \ket{\widehat{c}}\bra{\widehat{c}} \otimes \ket{\widehat{h}}\bra{\widehat{h}}$. $S_x(\widehat{Z})$ describes the ensemble of configurations obtained after running $N$ for one step, beginning in the configuration $(A,\widehat{c},\widehat{h})$, on input prefix $\#_L x$.
\end{lemma}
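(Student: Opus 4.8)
The plan is to prove this by a direct computation that matches $S_x(\widehat{Z})$ against the density-operator encoding of the ensemble produced by one genuine step of $N$. First I would recall, from the definition of the 2QCFA model in \Cref{sec:prelim:2qcfaDef}, exactly what one step does to the pure configuration $(A,\widehat{c},\widehat{h})$ when $\widehat{c}\in\widehat{C}$ and $\widehat{h}\in\widehat{H}_x$: the selective quantum operation $\theta(\widehat{c},x_{\widehat{h}})$ is applied, yielding result $r$ with probability $p_r=\Tr(\Phi_{\widehat{c},x_{\widehat{h}},r}(A))$; conditioned on result $r$, the quantum register becomes $\frac{1}{p_r}\Phi_{\widehat{c},x_{\widehat{h}},r}(A)$, the classical state becomes $\gamma_{\widehat{c},x_{\widehat{h}},r}$, and the head moves to $\widehat{h}+d_{\widehat{c},x_{\widehat{h}},r}$. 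Using the encoding $Z=\sum_i p_i A_i\otimes\ket{c_i}\bra{c_i}\otimes\ket{h_i}\bra{h_i}$ from \Cref{sec:crossingSequences:configs}, the density operator representing this one-step ensemble (summing over $r\in R$ with nonzero $p_r$) is
\[
Z_{\mathrm{target}}=\sum_{r\in R}\Phi_{\widehat{c},x_{\widehat{h}},r}(A)\otimes\ket{\gamma_{\widehat{c},x_{\widehat{h}},r}}\bra{\gamma_{\widehat{c},x_{\widehat{h}},r}}\otimes\ket{\widehat{h}+d_{\widehat{c},x_{\widehat{h}},r}}\bra{\widehat{h}+d_{\widehat{c},x_{\widehat{h}},r}},
\]
the probability weight $p_r$ having cancelled the normalization $\frac{1}{p_r}$ of the post-measurement density operator.

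Next I would compute $S_x(\widehat{Z})$ straight from \Cref{def:2qcfa:singleStepOp}\eqref{def:2qcfa:singleStepOp:OpDef} and show it equals $Z_{\mathrm{target}}$. The key simplification is that, since $\widehat{Z}=A\otimes\ket{\widehat{c}}\bra{\widehat{c}}\otimes\ket{\widehat{h}}\bra{\widehat{h}}$, orthonormality of the bases $\{\ket{c}\}_{c\in C}$ and $\{\ket{h}\}_{h\in H_x}$ forces $\bra{c}\ket{\widehat{c}}=0$ unless $c=\widehat{c}$ and $\bra{h}\ket{\widehat{h}}=0$ unless $h=\widehat{h}$, so every term with $(c,h)\neq(\widehat{c},\widehat{h})$ vanishes and the quadruple sum collapses to a sum over $(r,j)\in R\times J$ with $(c,h)=(\widehat{c},\widehat{h})$. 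Because $(\widehat{c},\widehat{h})\in\widehat{C}\times\widehat{H}_x$, the surviving Kraus operators are given by the first case of \Cref{def:2qcfa:singleStepOp}\eqref{def:2qcfa:singleStepOp:newKraus}, namely $\widetilde{E}_{x,\widehat{c},\widehat{h},r,j}=E_{\widehat{c},x_{\widehat{h}},r,j}\otimes\ket{\gamma_{\widehat{c},x_{\widehat{h}},r}}\bra{\widehat{c}}\otimes\ket{\widehat{h}+d_{\widehat{c},x_{\widehat{h}},r}}\bra{\widehat{h}}$. Evaluating $\widetilde{E}_{x,\widehat{c},\widehat{h},r,j}\,\widehat{Z}\,\widetilde{E}_{x,\widehat{c},\widehat{h},r,j}^{\dagger}$ tensor-factor by tensor-factor, the quantum factor gives $E_{\widehat{c},x_{\widehat{h}},r,j}AE_{\widehat{c},x_{\widehat{h}},r,j}^{\dagger}$, while the classical and head factors reduce, via $\bra{\widehat{c}}\ket{\widehat{c}}=1$ and $\bra{\widehat{h}}\ket{\widehat{h}}=1$, to $\ket{\gamma_{\widehat{c},x_{\widehat{h}},r}}\bra{\gamma_{\widehat{c},x_{\widehat{h}},r}}$ and $\ket{\widehat{h}+d_{\widehat{c},x_{\widehat{h}},r}}\bra{\widehat{h}+d_{\widehat{c},x_{\widehat{h}},r}}$. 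Summing over $j\in J$ turns $\sum_j E_{\widehat{c},x_{\widehat{h}},r,j}AE_{\widehat{c},x_{\widehat{h}},r,j}^{\dagger}$ into $\Phi_{\widehat{c},x_{\widehat{h}},r}(A)$ by \Cref{def:2qcfa:transitionDesc}\eqref{def:2qcfa:transitionDesc:partialQuantOp}, and summing over $r\in R$ reproduces exactly $Z_{\mathrm{target}}$.

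This is a verification rather than a result with a deep obstacle, so the only point requiring genuine care is reconciling the normalized post-measurement density operators in the ensemble description with the unnormalized operators $\Phi_{\widehat{c},x_{\widehat{h}},r}(A)$ appearing in $S_x(\widehat{Z})$. The weight–normalization cancellation handles every $r$ with $p_r>0$; for the degenerate results with $p_r=\Tr(\Phi_{\widehat{c},x_{\widehat{h}},r}(A))=0$, where the conditional post-measurement state is undefined, I would note that $\Phi_{\widehat{c},x_{\widehat{h}},r}(A)\in\pos(\mathbb{C}^Q)$ has trace zero and hence is $\mathbbm{0}_{\mathbb{C}^Q}$, so these terms contribute nothing to either side and the arbitrary choice of state on a zero-probability branch is immaterial. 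Thus $S_x(\widehat{Z})=Z_{\mathrm{target}}$, which is precisely the claim.
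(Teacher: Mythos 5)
Your proposal is correct and follows essentially the same route as the paper's own proof: expand $S_x(\widehat{Z})$ from the Kraus representation, use orthonormality of $\{\ket{c}\}$ and $\{\ket{h}\}$ to kill every term with $(c,h)\neq(\widehat{c},\widehat{h})$, sum over $j$ to recover $\Phi_{\widehat{c},x_{\widehat{h}},r}(A)$, and extend the ensemble's sum from the nonzero-probability results to all of $R$ by noting that a positive semi-definite operator with zero trace is the zero operator. The paper performs exactly this computation (including the same treatment of the zero-probability branches), so no further comparison is needed.
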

\begin{proof}
	Let $\widetilde{R}_{x,\widehat{c},\widehat{h},A}=\{r \in R: \Tr(\Phi_{\widehat{c},x_{\widehat{h}},r}(A))\neq 0\}$. Note that $A \in \den(\mathbb{C}^Q) \subseteq \pos(\mathbb{C}^Q)$, which implies $\Phi_{\widehat{c},x_{\hat{h}},r}(A) \in \pos(\mathbb{C}^Q)$; therefore, we have $\Tr(\Phi_{\widehat{c},x_{\hat{h}},r}(A))=0$ precisely when $\Phi_{\widehat{c},x_{\hat{h}},r}(A)=\mathbbm{0}_{\mathbb{C}^Q}$. After running $N$ as described, it is in an ensemble of configurations \[\left\{\left(\Tr(\Phi_{\widehat{c},x_{\hat{h}},r}(A)),\left(\frac{1}{\Tr(\Phi_{\widehat{c},x_{\hat{h}},r}(A))} \Phi_{\widehat{c},x_{\hat{h}},r}(A),\gamma_{\widehat{c},x_{\hat{h}},r},\widehat{h}+d_{\widehat{c},x_{\hat{h}},r}\right)\right): r \in \widetilde{R}_{x,\widehat{c},\widehat{h},A} \right\}.\]
	This ensemble of configurations is described by the density operator $\widehat{Z}'$ given by
	\[\widehat{Z}'=\sum_{r \in \widetilde{R}_{x,\widehat{c},\widehat{h},A}} \left( \frac{\Tr(\Phi_{\widehat{c},x_{\hat{h}},r}(A))}{\Tr(\Phi_{\widehat{c},x_{\hat{h}},r}(A))}\Phi_{\widehat{c},x_{\hat{h}},r}(A) \otimes \ket{\gamma_{\widehat{c},x_{\hat{h}},r}}\bra{\gamma_{\widehat{c},x_{\hat{h}},r}} \otimes \ket{\widehat{h}+d_{\widehat{c},x_{\hat{h}},r}}\bra{\widehat{h}+d_{\widehat{c},x_{\hat{h}},r}} \right) \]
	\[=\sum_{r \in R} \left( \Phi_{\widehat{c},x_{\hat{h}},r}(A) \otimes \ket{\gamma_{\widehat{c},x_{\hat{h}},r}}\bra{\gamma_{\widehat{c},x_{\hat{h}},r}} \otimes \ket{\widehat{h}+d_{\widehat{c},x_{\hat{h}},r}}\bra{\widehat{h}+d_{\widehat{c},x_{\hat{h}},r}} \right).\]	
	Let $B_{x,\widehat{c},\widehat{h},r}=\ket{\gamma_{\widehat{c},x_{\hat{h}},r}}\bra{\gamma_{\widehat{c},x_{\hat{h}},r}} \otimes \ket{\widehat{h}+d_{\widehat{c},x_{\hat{h}},r}}\bra{\widehat{h}+d_{\widehat{c},x_{\hat{h}},r}}$. If $(c,h) \in \widehat{C} \times \widehat{H}_x$, then 
	\[\widetilde{E}_{x,c,h,r,j}\widehat{Z}\widetilde{E}_{x,c,h,r,j}^{\dagger}= \widetilde{E}_{x,c,h,r,j} \left(A \otimes \ket{\widehat{c}}\bra{\widehat{c}} \otimes \ket{\widehat{h}}\bra{\widehat{h}}\right) \widetilde{E}_{x,c,h,r,j}^{\dagger}\]
	\[=E_{c,x_h,r,j}A E_{c,x_h,r,j}^{\dagger} \otimes \ket{\gamma_{c,x_h,r}}\braket{c}{\widehat{c}}\braket{\widehat{c}}{c}\bra{\gamma_{c,x_h,r}} \otimes \ket{h+d_{c,x_h,r}}\braket{h}{\widehat{h}}\braket{\widehat{h}}{h}\bra{h+d_{c,x_h,r}}\]
	\[=\begin{cases}
	E_{\widehat{c},x_{\hat{h}},r,j}A E_{\widehat{c},x_{\hat{h}},r,j}^{\dagger} \otimes B_{x,\widehat{c},\widehat{h},r}, & \text{if } (c,h)=(\widehat{c},\widehat{h}) \\
	\mathbbm{0}_{\mathbb{C}^Q \otimes \mathbb{C}^{C} \otimes \mathbb{C}^{H_x}}, & \text{otherwise.}
	\end{cases}\]
	If, instead, $(c,h) \not \in \widehat{C} \times \widehat{H}_x$, then $\widetilde{E}_{x,c,h,r,j}\widehat{Z}\widetilde{E}_{x,c,h,r,j}^{\dagger}=\mathbbm{0}_{\mathbb{C}^Q \otimes \mathbb{C}^{C} \otimes \mathbb{C}^{H_x}}$. Therefore
	\[S_x(\widehat{Z})= \sum_{(r,j) \in R \times J} \sum_{(c,h) \in C \times H_x} \widetilde{E}_{x,c,h,r,j} \widehat{Z} \widetilde{E}_{x,c,h,r,j}^{\dagger}= \sum_{(r,j) \in R \times J} \left(E_{\widehat{c},x_{\hat{h}},r,j}A E_{\widehat{c},x_{\hat{h}},r,j}^{\dagger} \otimes B_{x,\widehat{c},\widehat{h},r}\right)\]
	\[=\sum_{r \in R}\bigg( \bigg(\sum_{j \in J} E_{\widehat{c},x_{\hat{h}},r,j}A E_{\widehat{c},x_{\hat{h}},r,j}^{\dagger}\bigg) \otimes B_{x,\widehat{c},\widehat{h},r}\bigg)=\sum_{r \in R}\left( \Phi_{\widehat{c},x_{\hat{h}},r}(A) \otimes B_{x,\widehat{c},\widehat{h},r}\right)=\widehat{Z}'.\qedhere \] 
\end{proof}

\begin{lemma}\label{thm:2qcfa:singleStepOp:correctlyDescribesEnsemble}
	Consider some $x \in \Sigma^*$ and $Z \in \widehat{\den}(\mathbb{C}^Q \otimes \mathbb{C}^C \otimes \mathbb{C}^{H_x})$. If $\{(p_i,(A_i,c_i,h_i)):i \in I\}$ is some ensemble of configurations described by $Z$, then $S_x(Z)$ describes the ensemble of configurations obtained by replacing each configuration with $(c_i,h_i) \in (\widehat{C} \times \widehat{H}_x)$ by the ensemble (scaled by $p_i$) of configurations obtained by running $N$ for one step beginning in the configuration $(A_i,c_i,h_i)$, and leaving each configuration with $(c_i,h_i) \not \in (\widehat{C} \times \widehat{H}_x)$ unchanged. 
\end{lemma}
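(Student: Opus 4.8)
The plan is to reduce the general ensemble to the single-configuration case already handled in Lemma~\ref{thm:2qcfa:singleStepOp:correctlyDescribesNonHalt}, exploiting that $S_x$ is a $\mathbb{C}$-linear map (it lies in $\T(\mathbb{C}^Q \otimes \mathbb{C}^C \otimes \mathbb{C}^{H_x})$) and that the density-operator representation of an ensemble is additive. Since the given ensemble is described by $Z = \sum_{i \in I} p_i A_i \otimes \ket{c_i}\bra{c_i} \otimes \ket{h_i}\bra{h_i}$, linearity of $S_x$ immediately gives $S_x(Z) = \sum_{i \in I} p_i\, S_x\!\left(A_i \otimes \ket{c_i}\bra{c_i} \otimes \ket{h_i}\bra{h_i}\right)$, so it suffices to understand $S_x$ on each single-configuration term and then observe that summing the resulting (scaled) density operators describes the union of the corresponding (scaled) sub-ensembles.

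For the terms with $(c_i, h_i) \in \widehat{C} \times \widehat{H}_x$, Lemma~\ref{thm:2qcfa:singleStepOp:correctlyDescribesNonHalt} directly identifies $S_x(A_i \otimes \ket{c_i}\bra{c_i} \otimes \ket{h_i}\bra{h_i})$ as the density operator describing the ensemble obtained by running $N$ for one step from $(A_i, c_i, h_i)$; scaling by $p_i$ gives exactly the scaled sub-ensemble claimed in the statement. The remaining work is the halting/boundary case $(c_i, h_i) \not\in \widehat{C} \times \widehat{H}_x$, where I must show $S_x$ acts as the identity on $\widehat{Z}_i := A_i \otimes \ket{c_i}\bra{c_i} \otimes \ket{h_i}\bra{h_i}$. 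Expanding $S_x(\widehat{Z}_i) = \sum_{(c,h,r,j)} \widetilde{E}_{x,c,h,r,j}\, \widehat{Z}_i\, \widetilde{E}_{x,c,h,r,j}^{\dagger}$, the factors $\braket{c}{c_i}$ and $\braket{h}{h_i}$ produced by sandwiching $\widehat{Z}_i$ (which follow from orthonormality of the classical and head bases) kill every term except $(c,h) = (c_i, h_i)$; since this index lies in the ``otherwise'' branch in Definition~\ref{def:2qcfa:singleStepOp}, the surviving Kraus operators are $\frac{1}{\sqrt{\lvert R \rvert \lvert J \rvert}}\,\mathbbm{1}_{\mathbb{C}^Q} \otimes \ket{c_i}\bra{c_i} \otimes \ket{h_i}\bra{h_i}$. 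Each such term contributes $\frac{1}{\lvert R \rvert \lvert J \rvert}\widehat{Z}_i$ (using $\braket{c_i}{c_i}=\braket{h_i}{h_i}=1$), and summing over the $\lvert R \rvert \lvert J \rvert$ choices of $(r,j)$ recovers $\widehat{Z}_i$ exactly; this is precisely the role of the normalization constant $\frac{1}{\sqrt{\lvert R \rvert \lvert J \rvert}}$.

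Finally I would assemble the pieces: the index set $I$ splits into those $i$ with $(c_i,h_i)$ non-halting and those without, $S_x(Z)$ is the sum over $i$ of the per-term contributions just computed, and because adding the density operators of disjoint scaled sub-ensembles yields the density operator of their union, $S_x(Z)$ describes exactly the ensemble in which every non-halting configuration has been replaced by its (scaled) one-step successor ensemble while every halting/boundary configuration has been left unchanged. I expect the only genuinely delicate point to be the bookkeeping in the halting case — confirming that the off-diagonal Kraus terms vanish and that the chosen normalization makes $S_x$ exactly the identity there rather than merely a scalar multiple of it; everything else is linearity plus an appeal to the preceding lemma.
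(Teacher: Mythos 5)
Your proposal is correct and follows essentially the same route as the paper, whose proof reads in full: ``This follows immediately from Lemma~\ref{thm:2qcfa:singleStepOp:correctlyDescribesNonHalt} and linearity.'' The only difference is that you explicitly verify the halting/boundary case --- checking that the ``otherwise'' Kraus operators $\frac{1}{\sqrt{\lvert R \rvert \lvert J \rvert}}\mathbbm{1}_{\mathbb{C}^Q} \otimes \ket{c}\bra{c} \otimes \ket{h}\bra{h}$ make $S_x$ act as the identity on terms with $(c_i,h_i) \not\in \widehat{C} \times \widehat{H}_x$ --- a detail the paper treats as immediate from Definition~\ref{def:2qcfa:singleStepOp}, and your computation of it (off-diagonal terms killed by orthonormality, the $\lvert R \rvert \lvert J \rvert$ surviving terms summing to exactly $\widehat{Z}_i$) is correct.
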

\begin{proof} 
	This follows immediately from Lemma~\ref{thm:2qcfa:singleStepOp:correctlyDescribesNonHalt} and linearity.
\end{proof}

\begin{lemma}\label{thm:2qcfa:singleStepOp:isChannel}
	$S_x \in \chan(\mathbb{C}^Q \otimes \mathbb{C}^C \otimes \mathbb{C}^{H_x})$, $\forall x \in \Sigma^*$.
\end{lemma}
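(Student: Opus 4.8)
The plan is to verify directly, from the operator-sum (Kraus) presentation in Definition~\ref{def:2qcfa:singleStepOp}, the two defining properties of a quantum channel: complete positivity and trace preservation. Complete positivity is immediate and requires no work beyond an appeal to a standard fact: since $S_x$ is written in the form $S_x(Z)=\sum_{(c,h,r,j)} \widetilde{E}_{x,c,h,r,j} Z \widetilde{E}_{x,c,h,r,j}^{\dagger}$, it is automatically completely-positive, as any map admitting such an operator-sum representation is completely-positive (the Choi--Kraus characterization, see \cite{watrous2018theory}). Thus the entire content of the lemma reduces to checking that $S_x$ is trace-preserving.

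A map of the form $\Phi(Z)=\sum_k E_k Z E_k^{\dagger}$ is trace-preserving precisely when $\sum_k E_k^{\dagger} E_k = \mathbbm{1}$, so I would compute $\sum_{(c,h,r,j) \in C \times H_x \times R \times J} \widetilde{E}_{x,c,h,r,j}^{\dagger}\widetilde{E}_{x,c,h,r,j}$ and show it equals $\mathbbm{1}_{\mathbb{C}^Q \otimes \mathbb{C}^{C} \otimes \mathbb{C}^{H_x}}$. The natural organization is to fix $(c,h)$, sum over $(r,j)$ first, and split according to the two branches of the piecewise definition. In the non-halting case $(c,h) \in \widehat{C} \times \widehat{H}_x$, taking the adjoint turns $\ket{\gamma_{c,x_h,r}}\bra{c} \otimes \ket{h+d_{c,x_h,r}}\bra{h}$ into $\ket{c}\bra{\gamma_{c,x_h,r}} \otimes \ket{h}\bra{h+d_{c,x_h,r}}$, and using $\braket{\gamma_{c,x_h,r}}{\gamma_{c,x_h,r}}=1$ together with $\braket{h+d_{c,x_h,r}}{h+d_{c,x_h,r}}=1$ collapses each summand to $E_{c,x_h,r,j}^{\dagger}E_{c,x_h,r,j} \otimes \ket{c}\bra{c} \otimes \ket{h}\bra{h}$; summing over $(r,j)$ and invoking the normalization $\sum_{r,j} E_{c,x_h,r,j}^{\dagger}E_{c,x_h,r,j}=\mathbbm{1}_{\mathbb{C}^Q}$, which holds because $\theta(c,x_h) \in \quantop(\mathbb{C}^Q,R)$, yields $\mathbbm{1}_{\mathbb{C}^Q} \otimes \ket{c}\bra{c} \otimes \ket{h}\bra{h}$. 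In the boundary case $(c,h) \not\in \widehat{C} \times \widehat{H}_x$, each term is $\frac{1}{\lvert R \rvert \lvert J \rvert}\mathbbm{1}_{\mathbb{C}^Q} \otimes \ket{c}\bra{c} \otimes \ket{h}\bra{h}$, and there are exactly $\lvert R \rvert \lvert J \rvert$ index pairs $(r,j)$, so the scalars combine to $1$ and one again obtains $\mathbbm{1}_{\mathbb{C}^Q} \otimes \ket{c}\bra{c} \otimes \ket{h}\bra{h}$. Finally, summing the common per-$(c,h)$ answer over all $(c,h) \in C \times H_x$ and applying the resolutions of identity $\sum_{c}\ket{c}\bra{c}=\mathbbm{1}_{\mathbb{C}^C}$ and $\sum_{h}\ket{h}\bra{h}=\mathbbm{1}_{\mathbb{C}^{H_x}}$ gives $\mathbbm{1}_{\mathbb{C}^Q} \otimes \mathbbm{1}_{\mathbb{C}^C} \otimes \mathbbm{1}_{\mathbb{C}^{H_x}}$, as required.

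There is no genuine obstacle here; the lemma is essentially a consistency check confirming that $S_x$ describes a physically valid evolution. The only points demanding a little care are the bookkeeping showing that \emph{both} branches of the piecewise definition produce the identical clean summand $\mathbbm{1}_{\mathbb{C}^Q} \otimes \ket{c}\bra{c} \otimes \ket{h}\bra{h}$, and the observation that the normalization factor $\frac{1}{\sqrt{\lvert R \rvert \lvert J \rvert}}$ in the boundary case was chosen exactly so that squaring it and summing over the $\lvert R \rvert \lvert J \rvert$ pairs $(r,j)$ recovers precisely one copy of the identity on that subspace. This is the whole reason that particular constant appears in Definition~\ref{def:2qcfa:singleStepOp}.
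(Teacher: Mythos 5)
Your proposal is correct and takes essentially the same approach as the paper: the paper also reduces the claim, via the Kraus-representation criterion from Watrous's book, to verifying $\sum_{c,h,r,j} \widetilde{E}_{x,c,h,r,j}^{\dagger}\widetilde{E}_{x,c,h,r,j}=\mathbbm{1}_{\mathbb{C}^Q \otimes \mathbb{C}^C \otimes \mathbb{C}^{H_x}}$, and performs the identical computation, fixing $(c,h)$, splitting into the two branches of the piecewise definition, summing over $(r,j)$ in each case, and then summing over $(c,h)$. The only cosmetic difference is that you separate complete positivity (automatic from the operator-sum form) from trace preservation, whereas the paper invokes the single equivalence that packages both conditions into the normalization identity.
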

\begin{proof}
	$\{\widetilde{E}_{x,c,h,r,j}:(c,h,r,j) \in C \times H_x \times R \times J\}$ is a \textit{Kraus representation} of $S_x$; therefore, $S_x \in \chan(\mathbb{C}^Q \otimes \mathbb{C}^C \otimes \mathbb{C}^{H_x}) \Leftrightarrow \sum\limits_{c,h,r,j} \widetilde{E}_{x,c,h,r,j}^{\dagger} \widetilde{E}_{x,c,h,r,j}=\mathbbm{1}$ \cite[Corollary 2.27]{watrous2018theory}. 
	
	We begin by showing that $\sum_{r,j}\widetilde{E}_{x,c,h,r,j}^{\dagger} \widetilde{E}_{x,c,h,r,j}=\mathbbm{1}_{\mathbb{C}^Q}\otimes \ket{c}\bra{c} \otimes \ket{h}\bra{h}$, $\forall (c,h) \in C \times H_x$. First, suppose $(c,h) \in \widehat{C} \times \widehat{H}_x$; we then have 
	$$\widetilde{E}_{x,c,h,r,j}^{\dagger} \widetilde{E}_{x,c,h,r,j}=\big(E_{c,x_h,r,j}^{\dagger} \otimes \ket{c}\bra{\gamma_{c,x_h,r}} \otimes \ket{h}\bra{h+d_{c,x_h,r}}\big)\big(E_{c,x_h,r,j} \otimes \ket{\gamma_{c,x_h,r}}\bra{c} \otimes \ket{h+d_{c,x_h,r}}\bra{h}\big)$$
	$$=E_{c,x_h,r,j}^{\dagger}E_{c,x_h,r,j} \otimes \ket{c}\braket{\gamma_{c,x_h,r}}{\gamma_{c,x_h,r}}\bra{c} \otimes \ket{h}\braket{h+d_{c,x_h,r}}{h+d_{c,x_h,r}}\bra{h}$$
	$$=E_{c,x_h,r,j}^{\dagger}E_{c,x_h,r,j} \otimes \ket{c}\bra{c} \otimes \ket{h}\bra{h}.$$
	This implies, 
	$$\sum_{(r,j) \in R \times J} \widetilde{E}_{x,c,h,r,j}^{\dagger} \widetilde{E}_{x,c,h,r,j}= \bigg( \sum_{(r,j) \in R \times J}  E_{c,x_h,r,j}^{\dagger}E_{c,x_h,r,j} \bigg)  \otimes \ket{c}\bra{c} \otimes \ket{h}\bra{h} = \mathbbm{1}_{\mathbb{C}^Q}\otimes \ket{c}\bra{c} \otimes \ket{h}\bra{h}.$$
	If, instead, $(c,h) \not \in \widehat{C} \times \widehat{H}_x$, then $$ \widetilde{E}_{x,c,h,r,j}^{\dagger} \widetilde{E}_{x,c,h,r,j}=  \bigg(\frac{1}{\sqrt{\lvert R \rvert \lvert J \rvert}}\mathbbm{1}_{\mathbb{C}^Q} \otimes \ket{c}\bra{c} \otimes \ket{h}\bra{h}\bigg)^{\dagger} \bigg(\frac{1}{\sqrt{\lvert R \rvert \lvert J \rvert}}\mathbbm{1}_{\mathbb{C}^Q} \otimes \ket{c}\bra{c} \otimes \ket{h}\bra{h}\bigg)$$ $$= \frac{1}{\lvert R \rvert \lvert J \rvert}\mathbbm{1}_{\mathbb{C}^Q} \otimes \ket{c}\bra{c} \otimes \ket{h}\bra{h}.$$
	This implies $$\sum_{(r,j) \in R \times J} \widetilde{E}_{x,c,h,r,j}^{\dagger} \widetilde{E}_{x,c,h,r,j}= \mathbbm{1}_{\mathbb{C}^Q}\otimes \ket{c}\bra{c} \otimes \ket{h}\bra{h}.$$	
	We then have \[\sum_{(c,h) \in C \times H_x}\sum_{(r,j) \in R \times J} \widetilde{E}_{x,c,h,r,j}^{\dagger} \widetilde{E}_{x,c,h,r,j}=\sum_{(c,h) \in C \times H_x} \mathbbm{1}_{\mathbb{C}^Q}\otimes \ket{c}\bra{c} \otimes \ket{h}\bra{h}= \mathbbm{1}_{\mathbb{C}^Q \otimes \mathbb{C}^C \otimes \mathbb{C}^{H_x}}.\qedhere\]

\end{proof}

For $m \in \mathbb{N}$, we define the $m$-\textit{truncated stopping ensemble} as the ensemble of configurations that $N$ will be in when it ``finishes computing'' on $\#_L x$, as defined earlier, with the modification that if any particular branch of $N$ runs for more than $m$ steps, the computation of that branch will be ``interrupted'' immediately before it attempts to perform the $m+1^{\text{st}}$ step and instead immediately reject. To be clear, this truncation occurs only in the \textit{analysis} of $N$; we do not modify the 2QCFA. The following truncation operator $T_x$, which terminates all branches on which $N$ has not yet ``finished computing,'' will help us do this.

\begin{definition2}\label{def:2qcfa:truncOp}
	 For $(c,h) \in (C,H_x)$, let $\widehat{E}_{x,c,h}=\mathbbm{1}_{\mathbb{C}^Q} \otimes \ket{c'}\bra{c} \otimes \ket{h}\bra{h}$, where $c'=c_{\text{rej}}$ if $(c,h) \in \widehat{C} \times \widehat{H}_x$, and $c'=c$ otherwise. We then define $T_x \in \T(\mathbb{C}^Q \otimes \mathbb{C}^C \otimes \mathbb{C}^{H_x})$ such that $T_x(Z)=\sum_{(c,h) \in C \times H_x}  \widehat{E}_{x,c,h} Z \widehat{E}_{x,c,h}^{\dagger}$. 
\end{definition2}

\begin{lemma}\label{thm:2qcfa:truncOpProp}
	Using the above notation, the following statements hold.
	\begin{enumerate}[(i)]
		\item\label{thm:2qcfa:truncOpProp:correctlyDescribes} For any $Z \in \widehat{\den}(\mathbb{C}^Q \otimes \mathbb{C}^C \otimes \mathbb{C}^{H_x})$, if $\{(p_i,(A_i,c_i,h_i)):i \in I\}$ is any ensemble of configurations described by $Z$, then $T_x(Z)$ describes the ensemble of configurations in which each configuration with $(c_i,h_i) \in \widehat{C} \times \widehat{H}_x$ is replaced by the configuration $(A_i,c_{\text{rej}},h_i)$ (i.e., all configurations in which $N$ has not yet ``finished computing'' on $\#_L x$ become rejecting configurations) and all other configurations are left unchanged. 
		\item\label{thm:2qcfa:truncOpProp:isChannel} $T_x \in \chan(\mathbb{C}^Q \otimes \mathbb{C}^C \otimes \mathbb{C}^{H_x})$.
	\end{enumerate}
\end{lemma}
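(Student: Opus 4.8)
The plan is to prove both parts by direct computation, closely mirroring the proofs of Lemma~\ref{thm:2qcfa:singleStepOp:correctlyDescribesNonHalt} and Lemma~\ref{thm:2qcfa:singleStepOp:isChannel}, since $T_x$ has exactly the same ``controlled-Kraus'' structure as $S_x$ but with a far simpler action: it merely relabels the classical state and leaves the quantum register and head position untouched.

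For part~\ref{thm:2qcfa:truncOpProp:correctlyDescribes}, I would first verify the claim on a single basis configuration and then invoke linearity. Concretely, fix $\widehat{Z}=A\otimes\ket{\widehat{c}}\bra{\widehat{c}}\otimes\ket{\widehat{h}}\bra{\widehat{h}}$ and compute $\widehat{E}_{x,c,h}\widehat{Z}\widehat{E}_{x,c,h}^{\dagger}$ for each $(c,h)$. Because $\widehat{E}_{x,c,h}$ carries the factor $\ket{c'}\bra{c}\otimes\ket{h}\bra{h}$, the inner products $\braket{c}{\widehat{c}}$ and $\braket{h}{\widehat{h}}$ force every term to vanish except the one with $(c,h)=(\widehat{c},\widehat{h})$; that surviving term equals $A\otimes\ket{c'}\bra{c'}\otimes\ket{\widehat{h}}\bra{\widehat{h}}$, where $c'=c_{\text{rej}}$ precisely when $(\widehat{c},\widehat{h})\in\widehat{C}\times\widehat{H}_x$ and $c'=\widehat{c}$ otherwise. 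Thus $T_x(\widehat{Z})$ replaces the classical state by $c_{\text{rej}}$ exactly on the unfinished configurations and leaves $A$ and $\widehat{h}$ intact, which is the asserted behavior. Since every $Z\in\widehat{\den}(\mathbb{C}^Q\otimes\mathbb{C}^C\otimes\mathbb{C}^{H_x})$ is a convex combination of such basis configurations and $T_x$ is linear, the general statement follows.

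For part~\ref{thm:2qcfa:truncOpProp:isChannel}, I would show that $\{\widehat{E}_{x,c,h}:(c,h)\in C\times H_x\}$ is a Kraus representation of $T_x$, so that the same criterion invoked in Lemma~\ref{thm:2qcfa:singleStepOp:isChannel}, namely \cite[Corollary 2.27]{watrous2018theory}, applies: it suffices to check that $\sum_{(c,h)}\widehat{E}_{x,c,h}^{\dagger}\widehat{E}_{x,c,h}=\mathbbm{1}$. A one-line computation gives $\widehat{E}_{x,c,h}^{\dagger}\widehat{E}_{x,c,h}=\mathbbm{1}_{\mathbb{C}^Q}\otimes\ket{c}\bra{c}\otimes\ket{h}\bra{h}$ for every $(c,h)$, regardless of whether $c'=c_{\text{rej}}$ or $c'=c$, since $\braket{c'}{c'}=1$ in either case; summing over $(c,h)\in C\times H_x$ then yields the identity via the resolution $\sum_{(c,h)}\ket{c}\bra{c}\otimes\ket{h}\bra{h}=\mathbbm{1}_{\mathbb{C}^C\otimes\mathbb{C}^{H_x}}$.

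I expect no serious obstacle here: because $T_x$ acts trivially on the quantum register, its completely-positive, trace-preserving character is immediate from the Kraus form. The only point that deserves a moment's care is observing that the normalization already works out to exactly the identity \emph{without} any $\frac{1}{\sqrt{\lvert R\rvert\lvert J\rvert}}$ correction factor (unlike the ``otherwise'' branch of $S_x$), precisely because $T_x$ uses a single operator per $(c,h)$ rather than a family indexed by $(r,j)$.
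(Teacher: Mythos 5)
Your proposal is correct and follows essentially the same route as the paper: part~(i) is exactly the direct computation the paper dismisses as ``immediate from definitions,'' and part~(ii) is the same Kraus-sum check $\sum_{c,h}\widehat{E}_{x,c,h}^{\dagger}\widehat{E}_{x,c,h}=\mathbbm{1}$ combined with \cite[Corollary 2.27]{watrous2018theory} that the paper invokes by analogy with Lemma~\ref{thm:2qcfa:singleStepOp:isChannel}. Your closing observation that no $\frac{1}{\sqrt{\lvert R\rvert\lvert J\rvert}}$ normalization is needed (since $T_x$ has a single Kraus operator per $(c,h)$ rather than a family indexed by $(r,j)$) is also correct.
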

\begin{proof}
	\begin{enumerate}[(i)]
		\item Immediate from definitions.	
		\item As in the proof of Lemma~\ref{thm:2qcfa:singleStepOp:isChannel}, we may straightforwardly show $\sum_{c,h} \widehat{E}_{x,c,h}^{\dagger} \widehat{E}_{x,c,h}=\mathbbm{1}_{\mathbb{C}^Q \otimes \mathbb{C}^C \otimes \mathbb{C}^{H_x}}$, which implies $T_x \in \chan(\mathbb{C}^Q \otimes \mathbb{C}^C \otimes \mathbb{C}^{H_x})$ \cite[Corollary 2.27]{watrous2018theory}. \qedhere
	\end{enumerate}
\end{proof}

The following operator converts starting ensembles to $m$-truncated stopping ensembles.

\begin{definition2}\label{def:2qcfa:truncTranOp} 
	For $x \in \Sigma^*$ and $m \in \mathbb{N}$, we define the $m$-\textit{truncated transfer operator} $N_{x,m}^{\Balancedhookarrowright}=\Tr_{\mathbb{C}^{H_x}} \circ T_x \circ S_x^m \circ I_x \in \T(\mathbb{C}^Q \otimes \mathbb{C}^C)$. For $y \in \Sigma^*$, we next consider the ``dual case'' of running $N$ on the suffix $y \#_R$ beginning in some ensemble of configurations $\{(p_i,(A_i,c_i,\lvert x \rvert+1)):i \in I\}$ (i.e., the head position of every configuration is over the leftmost symbol of $y \#_R$). We define the notion of an $m$-truncated stopping ensemble, and all other notions, symmetrically. That is to say, a branch of $N$ ``finishes computing'' on $y \#_R$ when it either ``leaves'' $y \#_R$ (by moving its head left from the leftmost symbol of $y \#_R$), or accepts or rejects the input, or runs for more than $m$ steps. We then define $N_{y,m}^{\Balancedhookarrowleft}\in \T(\mathbb{C}^Q \otimes \mathbb{C}^C)$ as the corresponding ``dual'' $m$-truncated transfer operator for $y$.
\end{definition2}

\begin{lemma}\label{thm:2qcfa:truncTranOpAndCrossProp} 
	Using the notation of Definition~\ref{def:2qcfa:truncTranOp}, the following statements hold.
	\begin{enumerate}[(i)]
		\item\label{thm:2qcfa:truncTranOpAndCrossProp:primalCorrectlyDescribes} For $Z \in \widehat{\den}(\mathbb{C}^Q \otimes \mathbb{C}^C)$, if $N$ is run on $\#_L x$ beginning in any ensemble of configurations described by $I_x(Z)$ (i.e., the head position of every configuration is over the rightmost symbol of $\#_L x$), then the $m$-truncated stopping ensemble is described by $N_{x,m}^{\Balancedhookarrowright}(Z)$. 
		\item\label{thm:2qcfa:truncTranOpAndCrossProp:dualCorrectlyDescribes} For $Z \in \widehat{\den}(\mathbb{C}^Q \otimes \mathbb{C}^C)$, if $N$ is run on $y \#_R$ beginning in any ensemble of configurations described by $I_{x+1}(Z)$, then the $m$-truncated stopping ensemble is described by $N_{y,m}^{\Balancedhookarrowleft}(Z)$.
		\item\label{thm:2qcfa:truncTranOpAndCrossProp:isChannel} We have $N_{x,m}^{\Balancedhookarrowright}, N_{y,m}^{\Balancedhookarrowleft} \in \chan(\mathbb{C}^Q \otimes \mathbb{C}^C)$, $\forall x,y \in \Sigma^*$, $\forall m \in \mathbb{N}$.
	\end{enumerate}
\end{lemma}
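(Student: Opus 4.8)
The plan is to prove (i) and (ii) by stringing together the structural lemmas already established for the constituent operators, and to prove (iii) by observing that $N_{x,m}^{\Balancedhookarrowright}$ and $N_{y,m}^{\Balancedhookarrowleft}$ are each a composition of maps that are all quantum channels. Since (ii) is the exact left-moving mirror image of (i) under the symmetric definitions of Definition~\ref{def:2qcfa:truncTranOp}, I would prove (i) in full and remark that (ii) follows by the identical argument with the roles of ``left'' and ``right'' interchanged.

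For (i), start from $I_x(Z) = Z \otimes \ket{\lvert x \rvert}\bra{\lvert x \rvert}$, which by definition describes the starting ensemble in which every configuration has its head over the rightmost symbol of $\#_L x$ (head position $\lvert x \rvert$). I would then argue by induction on $k$ that $S_x^k(I_x(Z))$ describes the ensemble obtained by running $N$ for $k$ steps on $\#_L x$ under the ``freeze finished branches'' semantics: the base case $k=0$ is immediate, and the inductive step is exactly Lemma~\ref{thm:2qcfa:singleStepOp:correctlyDescribesEnsemble}, which advances every branch with $(c_i,h_i)\in\widehat{C}\times\widehat{H}_x$ by one step while leaving every already-finished branch unchanged. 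Taking $k=m$, the branches in the resulting ensemble split into those that finished within the first $m$ steps (and have since been frozen at a configuration with $c_i \in \{c_{\text{acc}},c_{\text{rej}}\}$ or $h_i = \lvert x\rvert+1$) and those still in a non-finished configuration after exactly $m$ steps, which are precisely the branches that would attempt an $(m+1)^{\text{st}}$ step. Applying $T_x$ and invoking the first part of Lemma~\ref{thm:2qcfa:truncOpProp} converts exactly the latter branches into rejecting configurations and fixes the former, which is the definition of the $m$-truncated stopping ensemble; finally $\Tr_{\mathbb{C}^{H_x}}$ discards the now-irrelevant head register, yielding a description in $\widehat{\den}(\mathbb{C}^Q \otimes \mathbb{C}^C)$. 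Composing these observations shows that $N_{x,m}^{\Balancedhookarrowright}(Z)=\Tr_{\mathbb{C}^{H_x}}(T_x(S_x^m(I_x(Z))))$ describes the $m$-truncated stopping ensemble, as claimed.

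For (iii), I would use that the composition of quantum channels is again a quantum channel, so it suffices to check each factor. The operator $S_x$ is a channel by Lemma~\ref{thm:2qcfa:singleStepOp:isChannel} (hence so is $S_x^m$), and $T_x$ is a channel by the second part of Lemma~\ref{thm:2qcfa:truncOpProp}. The remaining two factors are standard: $I_x$ is the map $Z \mapsto Z \otimes \ket{\lvert x\rvert}\bra{\lvert x\rvert}$, i.e.\ tensoring with the fixed density operator $\ket{\lvert x\rvert}\bra{\lvert x\rvert}$, which is completely positive and trace-preserving; and $\Tr_{\mathbb{C}^{H_x}}=\mathbbm{1}_{\L(\mathbb{C}^Q\otimes\mathbb{C}^C)}\otimes\Tr$ is the partial trace, which is a channel. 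The dual case is identical.

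The argument is essentially bookkeeping once the earlier lemmas are available; I expect the only place demanding care to be the induction in (i), where one must confirm that the ``interrupt immediately before the $(m+1)^{\text{st}}$ step'' convention in the definition of the $m$-truncated stopping ensemble matches the effect of applying $T_x$ after exactly $m$ applications of $S_x$. Because $S_x$ leaves every finished configuration fixed, a branch is in a non-finished configuration after $m$ steps if and only if it has run for $m$ steps without finishing, so $T_x$ rejects precisely the branches the truncation should discard; verifying this equivalence is the crux, and it follows cleanly from the ``freeze'' behaviour of $S_x$ built into Definition~\ref{def:2qcfa:singleStepOp}.
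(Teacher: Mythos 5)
Your proposal is correct and follows essentially the same route as the paper: parts (i) and (ii) are exactly the chaining of Definition~\ref{def:2qcfa:truncTranOp} with Lemma~\ref{thm:2qcfa:singleStepOp:correctlyDescribesEnsemble} (iterated, which is the induction you spell out) and Lemma~\ref{thm:2qcfa:truncOpProp}(\ref{thm:2qcfa:truncOpProp:correctlyDescribes}), and part (iii) is the same composition-of-channels argument, with $I_x$ and $\Tr_{\mathbb{C}^{H_x}}$ handled as standard channels. The only difference is that you make explicit the induction and the matching of the truncation semantics, which the paper compresses into ``immediate.''
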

\begin{proof}
	\begin{enumerate}[(i)]
		\item Immediate by \Cref{def:2qcfa:truncTranOp}, \Cref{thm:2qcfa:singleStepOp:correctlyDescribesEnsemble}, and \Cref{thm:2qcfa:truncOpProp}(\ref{thm:2qcfa:truncOpProp:correctlyDescribes}).
		
		\item Immediate by \Cref{def:2qcfa:truncTranOp}, and analogous versions of \Cref{thm:2qcfa:singleStepOp:correctlyDescribesEnsemble}, and \Cref{thm:2qcfa:truncOpProp}(\ref{thm:2qcfa:truncOpProp:correctlyDescribes}).
		
		\item By definition, $N_{x,m}^{\Balancedhookarrowright}=\Tr_{\mathbb{C}^{H_x}} \circ T_x \circ S_x^m \circ I_x$. By \Cref{thm:2qcfa:singleStepOp:isChannel} and \Cref{thm:2qcfa:truncOpProp}(\ref{thm:2qcfa:truncOpProp:isChannel}), we have $S_x,T_x \in \chan(\mathbb{C}^Q \otimes \mathbb{C}^C \otimes \mathbb{C}^{H_x})$. It is straightforward to see that $I_x \in \chan(\mathbb{C}^Q \otimes \mathbb{C}^C, \mathbb{C}^Q \otimes \mathbb{C}^C \otimes \mathbb{C}^{H_x})$ and $\Tr_{\mathbb{C}^{H_x}} \in \chan(\mathbb{C}^Q \otimes \mathbb{C}^C \otimes \mathbb{C}^{H_x},\mathbb{C}^Q \otimes \mathbb{C}^C)$ and that the composition of quantum channels is a quantum channel (see, for instance, \cite[Section 2.2]{watrous2018theory}). The claim for $N_{y,m}^{\Balancedhookarrowleft}$ follows by an analogous argument. \qedhere
	\end{enumerate}
\end{proof}

Given a 2QCFA $N$, we produce an equivalent $N'$ of a certain convenient form, in much the same way that Dwork and Stockmeyer \cite{dwork1990time} converted a 2PFA to a convenient form. The 2QCFA $N'$ is identical to $N$, except for the addition of two new classical states, $c_{\text{start}}'$ and $c'$, where $c_{\text{start}}'$ will be the start state of $N'$. On any input, $N'$ will move its head to the right until it reaches $\#_R$, performing the trivial transformation to its quantum register along the way. When it reaches $\#_R$, $N'$ will enter $c'$; then, $N'$ will move its head to the left until it reaches $\#_L$, again performing the trivial transformation to its quantum register. When it reaches $\#_L$, $N'$ will enter the original start state $c_{\text{start}}$ and behave identically to $N$ from this point. For the remainder of the paper, we assume all 2QCFA have this form. 

Finally, we define the $m$-\textit{truncated crossing sequence}.

\begin{definition2}\label{def:2qcfa:truncCross} 
	For $x,y \in \Sigma^*$ and $m \in \mathbb{N}$, the $m$-\textit{truncated crossing sequence} of $N$ with respect to the (partitioned) input $xy$ is the sequence $Z_1,Z_2,\ldots \in \widehat{\den}(\mathbb{C}^Q \otimes \mathbb{C}^C)$, defined as follows. The density operator $Z_1$ describes the ensemble consisting of the single configuration (of the quantum register and classical register) $(\ket{q_{\text{start}}},c_{\text{start}})$ that $N$ is in when it first crosses from $\#_L x$ into $y \#_R$, which is of this simple form due to the assumed form of $N$. The sequence $Z_1,Z_2,\ldots$ is then obtained by starting with $Z_1$ and alternately applying $N_{y,m}^{\Balancedhookarrowleft}$ and $N_{x,m}^{\Balancedhookarrowright}$. To be precise, \[Z_i=\begin{cases} 
	\ket{q_{\text{start}}}\bra{q_{\text{start}}} \otimes \ket{c_{\text{start}}}\bra{c_{\text{start}}}, & i=1\\
	N_{y,m}^{\Balancedhookarrowleft}(Z_{i-1}), & i>1, i \text{ is even}\\
	N_{x,m}^{\Balancedhookarrowright}(Z_{i-1}), & i>1, i \text{ is odd.}
	\end{cases}\]
\end{definition2}

\begin{remark}
	Note that the $\{Z_i\}$ that comprise a crossing sequence do \textit{not} describe the ensemble of configurations of $N$ at particular points in time during its computation on the input $xy$; instead, $Z_i$ describes the ensemble of configurations of the set of all the probabilistic branches of $N$ at the $i^{\text{th}}$ time each branch crosses between $\#_L x$ and $y \#_R$(with the convention stated above of considering a branch that has accepting or rejected its input to ``cross'' in classic state $c_{\text{acc}}$ or $c_{\text{rej}}$, respectively, indefinitely; as well as the convention that if a given branch of $N$ attempts to run for more than $m$ steps within the prefix $\#_L x$ or within the suffix $y\#_R$, that branch is interrupted and immediately forced to reject). Of course, a given branch may not cross between $\#_L x$ and $y \#_R$ more than $i$ times within the first $i$ steps of the computation; this will allow us to use such crossing sequences to prove a lower bound on the expected running time of $N$. 
\end{remark}

\begin{remark}
	Moreover, while the $m$-truncated crossing operator $N_{x,m}^{\Balancedhookarrowright}$ completely suffices for our analysis, one could also define a \textit{non-truncated transfer operator} $N_x^{\Balancedhookarrowright} \in \chan(\mathbb{C}^Q \otimes \mathbb{C}^C)$ as an accumulation point of the sequence $(N_{x,m}^{\Balancedhookarrowright})_{m \in \mathbb{N}}$; such an accumulation point exists due to the fact that $\chan(\mathbb{C}^Q \otimes \mathbb{C}^C)$ is compact \cite[Proposition 2.28]{watrous2018theory}. Using $N_x^{\Balancedhookarrowright}$ and the symmetrically defined $N_y^{\Balancedhookarrowleft}$, one could then define the \textit{non-truncated crossing sequence} of $N$ on $xy$. The resulting analyses of these two types of crossing sequences would essentially be identical, and so we do not consider this definition further here; however, the (somewhat cleaner) non-truncated crossing sequence may be more useful in other applications. 
\end{remark}

\section{Lower Bounds on the Running Time of 2QCFA}\label{sec:runningTimeLowerBound}

Dwork and Stockmeyer proved a lower bound \cite[Lemma 4.3]{dwork1990time} on the expected running time $T(n)$ of any 2PFA that recognizes any language $L$ with bounded error, in terms of their hardness measure $D_L(n)$. We prove that an analogous claim holds for any 2QCFA. The preceding quantum generalization of a crossing sequence plays a key role in the proof, essentially taking the place of the Markov chains used both in the aforementioned result of Dwork and Stockmeyer and in the earlier result of Greenberg and Weiss \cite{greenberg1986lower}, which showed that 2PFA cannot recognize $L_{eq}$ in subexponential time. 

\subsection{Nonregularity}\label{sec:runningTimeLowerBound:nonregularity}
For a language $L$, Dwork and Stockmeyer \cite{dwork1990time} defined a particular ``hardness measure'' $D_L:\mathbb{N} \rightarrow \mathbb{N}$, which they called the \textit{nonregularity} of $L$, as follows. Let $\Sigma$ be a finite alphabet, $L \subseteq \Sigma^*$ a language, and $n \in \mathbb{N}$. Let $\Sigma^{\leq n}=\{w \in \Sigma^*:\lvert w \rvert \leq n\}$ denote the set of all strings over $\Sigma$ of length at most $n$ and consider some $x,x' \in \Sigma^{\leq n}$. We say that $x$ and $x'$ are $(L,n)$-\textit{dissimilar}, which we denote by writing $x \not \sim_{L,n} x'$,  if $\exists y \in \Sigma^{\leq n-\max(\lvert x \rvert,\lvert x' \rvert}$, such that $xy \in L \Leftrightarrow x'y \not \in L$. Recall the classic Myhill-Nerode inequivalence relation, in which $x,x' \in \Sigma^*$ are $L$-dissimilar if $\exists y \in \Sigma^*$, such that $xy \in L \Leftrightarrow x'y \not \in L$. Then $x,x' \in \Sigma^{\leq n}$ are $(L,n)$-dissimilar precisely when they are $L$-dissimilar, and the dissimilarity is witnessed by a ``short'' string $y$.  We then define $D_L(n)$ to be the largest $h \in \mathbb{N}$ such that $\exists x_1,\ldots,x_h \in \Sigma^{\leq n}$ that are pairwise $(L,n)$-dissimilar (i.e., $x_i \not \sim_{L,n} x_j$, $\forall i,j$ with $i \neq j$).

In fact, $D_L$ has been defined by many authors, both before and after Dwork and Stockmeyer, who gave many different names to this quantity and who (repeatedly) rediscovered certain basic facts about it; we refer the reader to the excellent paper of Shallit and Breitbart \cite{shallit1996automaticityI} for a detailed history of the study of $D_L$ and related hardness measures. In the remainder of this section, we briefly recall two equivalent definitions of $D_L$, as well as the definition of a certain related (inequivalent) hardness measure, which we will need in order to prove our various lower bounds in their full generality.

For some DFA (one-way deterministic finite automaton) $M$, let $\lvert M \rvert$ denote the number of states of $M$ and let $L(M)$ denote the language of $M$ (i.e., the set of strings accepted by $M$). The earliest definition of a hardness measure equivalent to Dwork-Stockmeyer nonregularity was given by Karp \cite{karp1967some}, who defined $A_L(n)=\min \{\lvert M \rvert:M \text{ is a DFA and } L(M) \cap \Sigma^{\leq n}=L \cap \Sigma^{\leq n}\}$ to be the minimum number of states of a DFA that agrees with $L$ on all strings of length at most $n$; Shallit and Breitbart use the term \textit{deterministic automaticity} to refer to $A_L$. For any language $L$, it is immediately obvious that $A_L(n) \geq D_L(n), \forall n$; somewhat less obviously, $A_L(n)=D_L(n), \forall n$ \cite{shallit1996automaticityI,kacneps1990minimal,karp1967some}, and so the notions of nonregularity and deterministic automaticity coincide.

Consider a language $L \subseteq \Sigma^*$ and two communicating parties: Alice, who knows some string $x \in \Sigma^*$, and Bob, who knows some string $y \in \Sigma^*$. Alice sends some message $A(x) \in \{0,1\}^*$ to Bob, after which Bob must be able to determine, using $A(x)$ and $y$, if the string $w=xy$ is in $L$. Let $C_L(n)$ denote the maximum, taken over all $x,y \in \Sigma^*$ such that $\lvert xy \rvert \leq n$, of the number of bits sent from Alice to Bob by the optimal such (deterministic one-way) protocol. This quantity, the \textit{one-way deterministic communication complexity} of testing membership in $L$, is related to the nonregularity of $L$; in particular, $C_L(n)=\log D_L(n), \forall n$ \cite{condon1998power}.

Lastly, we recall the definition of a related (but inequivalent) hardness measure used by Ibarra and Ravikumar \cite{ibarra1988sublogarithmic} in their study of non-uniform small-space DTMs (deterministic Turing machines). Let $\Sigma^{n}=\{w \in \Sigma^*:\lvert w \rvert = n\}$. We then consider 2DFA (two-way deterministic finite automata), and use the same notation as was used above for DFA. For a language $L$, define $A_{L,=}^{2DFA}(n)=\min \{\lvert M \rvert:M \text{ is a 2DFA and } L(M) \cap \Sigma^n=L \cap \Sigma^n\}$ to be the minimum number of states of a 2DFA that agrees with $L$ on all strings of length exactly $n$. Clearly, for any language $L$, $A_{L,=}^{2DFA}(n)\leq A_L(n)$, $\forall n$. They then defined $\mathsf{NUDSPACE}(O(S(n)))$ (non-uniform deterministic space $O(S(n))$) to be the class of languages $L$ such that $A_{L,=}^{2DFA}(n)=2^{O(S(n))}$. Note that $\mathsf{NUDSPACE}(O(S(n)))=\mathsf{DSPACE}(O(S(n)))/2^{O(S(n))}$, the class of languages recognizable by a DTM that, on any input $w$, uses space $O(S(\lvert w \rvert))$, and has access to an ``advice'' string $y_{\lvert w \rvert}$, which depends only on the length $\lvert w \rvert$ of the input and is itself of length $\lvert y_n \rvert=2^{O(S(n))}$. In particular, $\mathsf{L/poly}:=\mathsf{DSPACE}(O(\log n))/2^{O(\log n)}=\mathsf{NUDSPACE}(O(\log n))=\{L:A_{L,=}^{2DFA}(n)=n^{O(1)}\}$. 

\subsection{A 2QCFA Analogue of the Dwork-Stockmeyer Lemma}\label{sec:runningTimeLowerBound:dworkStockmeyer}

We now prove that an analogue of the Dwork-Stockmeyer lemma holds for 2QCFA. The main idea is as follows. Suppose the 2QCFA $N$ recognizes $L \subseteq \Sigma^*$, with two-sided bounded error $\epsilon$, in expected time at most $T(n)$. We show that, if $D_L(n)$ is ``large,'' then, for any $m \in \mathbb{N}$, we can find $x,x' \in \Sigma^{\leq n}$ such that $x \not \sim_{L,n} x'$ and the distance between the corresponding $m$-truncated transfer operators $N_{x,m}^{\Balancedhookarrowright}$ and $N_{x',m}^{\Balancedhookarrowright}$ is ``small.'' By definition, $\exists y \in \Sigma^{\leq n-\max(\lvert x \rvert,\lvert x' \rvert)}$, such that $xy \in L \Leftrightarrow x'y \not \in L$; note that $xy,x'y \in \Sigma^{\leq n}$. Without loss of generality, we assume $xy \in L$, and hence $x'y \not \in L$. We also show that, for $m$ sufficiently large, if the distance between $N_{x,m}^{\Balancedhookarrowright}$ and $N_{x',m}^{\Balancedhookarrowright}$ is ``small,'' then the behavior of $N$ on the partitioned inputs $xy$ and $x'y$ will be similar; in particular, if $T(n)$ is ``small,'' then $\Pr[N \text{ accepts } xy] \approx \Pr[N \text{ accepts } x'y]$. However, as $xy \in L$, we must have $\Pr[N \text{ accepts } xy] \geq 1- \epsilon$, and as $x'y \not \in L$, we must have $\Pr[N \text{ accepts } x'y] \leq \epsilon$, which is impossible. This contradiction allows us to establish a lower bound on $T(n)$ in terms of $D_L(n)$. In this section, we formalize this idea. 

For $p \in \mathbb{N}_{\geq 1}$, we define the \textit{Schatten} $p$-\textit{norm} $\lVert \cdot \rVert_p:\L(V) \rightarrow \mathbb{R}_{\geq 0}$, where $\lVert Z \rVert_p=(\Tr((Z^{\dagger}Z)^{\frac{p}{2}}))^{\frac{1}{p}}$, $\forall Z \in \L(V)$. We also use the term \textit{trace norm} to refer to the Schatten $1$-norm. We define the \textit{induced trace norm} $\lVert \cdot \rVert_1:\T(V,V') \rightarrow \mathbb{R}_{\geq 0}$, where $\lVert \Phi \rVert_1=\sup \{\lVert \Phi(Z) \rVert_1 : Z \in \L(V), \lVert Z \rVert_1 \leq 1\}$, for any $\Phi \in \T(V,V')$. For $Z,Z' \in \L(\mathbb{C}^Q \otimes \mathbb{C}^C)$, we use $\lVert Z-Z' \rVert_1$, the distance metric induced by the trace norm, to measure the distance between $Z$ and $Z'$. For $x,x' \in \Sigma^*$ and $m \in \mathbb{N}$, we use $\lVert N_{x,m}^{\Balancedhookarrowright}-N_{x',m}^{\Balancedhookarrowright} \rVert_1$, the distance metric induced by the induced trace norm, to measure the distance between $N_{x,m}^{\Balancedhookarrowright}$ and $N_{x',m}^{\Balancedhookarrowright}$. 

Suppose $N$ is run on two distinct partitioned inputs $xy$ and $x'y$, producing two distinct $m$-truncated crossing sequences, following Definition~\ref{def:2qcfa:truncCross}. We first show that if $\lVert N_{x,m}^{\Balancedhookarrowright}-N_{x',m}^{\Balancedhookarrowright} \rVert_1$ is ``small'', then these crossing sequences are similar.

\begin{lemma}\label{thm:2qcfa:distanceCrossingSeq}
	Consider a 2QCFA $N$ with quantum basis states $Q$, classical states $C$, and input alphabet $\Sigma$. For $x,x',y \in \Sigma^*$ and $m \in \mathbb{N}$, let $Z_1,Z_2,\ldots \in \widehat{\den}(\mathbb{C}^Q \otimes \mathbb{C}^C)$ (resp. $Z_1',Z_2',\ldots \in \widehat{\den}(\mathbb{C}^Q \otimes \mathbb{C}^C)$) denote the $m$-truncated crossing sequence obtained when $N$ is run on $xy$ (resp. $x'y$). Then $\lVert Z_i-Z_i' \rVert_1 \leq \lfloor \frac{i-1}{2} \rfloor \lVert N_{x,m}^{\Balancedhookarrowright}-N_{x',m}^{\Balancedhookarrowright} \rVert_1$,  $\forall i \in \mathbb{N}_{\geq 1}$.	
\end{lemma}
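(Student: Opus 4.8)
The plan is to prove the bound by induction on $i$, exploiting the recursive structure of the crossing sequences in Definition~\ref{def:2qcfa:truncCross}. Write $\Delta = \lVert N_{x,m}^{\Balancedhookarrowright}-N_{x',m}^{\Balancedhookarrowright} \rVert_1$ and $d_i = \lVert Z_i - Z_i' \rVert_1$. The first observation is that $Z_1 = Z_1' = \ket{q_{\text{start}}}\bra{q_{\text{start}}} \otimes \ket{c_{\text{start}}}\bra{c_{\text{start}}}$, since the initial term of the crossing sequence does not depend on the partition of the input; hence $d_1 = 0$, matching $\lfloor 0/2\rfloor \Delta$. The remaining terms alternate between applying the shared operator $N_{y,m}^{\Balancedhookarrowleft}$ at even indices (where both sequences use the same suffix $y$) and the differing operators $N_{x,m}^{\Balancedhookarrowright}$, $N_{x',m}^{\Balancedhookarrowright}$ at odd indices $>1$. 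The idea is that even steps cannot increase $d_i$, whereas each odd step contributes at most one additive factor of $\Delta$; counting these factors produces the floor.

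The single analytic ingredient is that a quantum channel is non-expansive in the trace norm, i.e.\ its induced trace norm is at most $1$. By Lemma~\ref{thm:2qcfa:truncTranOpAndCrossProp}(\ref{thm:2qcfa:truncTranOpAndCrossProp:isChannel}), both $N_{x,m}^{\Balancedhookarrowright}$ and $N_{y,m}^{\Balancedhookarrowleft}$ lie in $\chan(\mathbb{C}^Q \otimes \mathbb{C}^C)$; since any $\Phi \in \chan(\mathbb{C}^Q \otimes \mathbb{C}^C)$ is completely positive and trace-preserving, $\lVert \Phi(Z) \rVert_1 \leq \lVert Z \rVert_1$ for every $Z$ (decompose a Hermitian $Z$ into its positive and negative parts and use positivity together with trace preservation; complete positivity handles the general case), whence $\lVert \Phi \rVert_1 \leq 1$. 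Consequently, at an even index $i$ we have, by linearity, $Z_i - Z_i' = N_{y,m}^{\Balancedhookarrowleft}(Z_{i-1} - Z_{i-1}')$, and therefore $d_i \leq d_{i-1}$.

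For an odd index $i > 1$, I would insert the mixed term $N_{x,m}^{\Balancedhookarrowright}(Z_{i-1}')$ and apply the triangle inequality:
\[
Z_i - Z_i' = N_{x,m}^{\Balancedhookarrowright}(Z_{i-1} - Z_{i-1}') + \big(N_{x,m}^{\Balancedhookarrowright} - N_{x',m}^{\Balancedhookarrowright}\big)(Z_{i-1}').
\]
The first term has trace norm at most $d_{i-1}$ by non-expansiveness of the channel $N_{x,m}^{\Balancedhookarrowright}$, and the second has trace norm at most $\Delta \, \lVert Z_{i-1}' \rVert_1 = \Delta$, since $Z_{i-1}' \in \widehat{\den}(\mathbb{C}^Q \otimes \mathbb{C}^C)$ is a density operator and hence has unit trace norm. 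Thus $d_i \leq d_{i-1} + \Delta$. Combining the two recurrences ($d_i \leq d_{i-1}$ at even steps and $d_i \leq d_{i-1} + \Delta$ at odd steps $>1$) with $d_1 = 0$, the induction closes after the parity bookkeeping: for even $i$ one has $\lfloor (i-1)/2 \rfloor = \lfloor (i-2)/2 \rfloor$, while for odd $i$ one has $\lfloor (i-1)/2 \rfloor = \lfloor (i-2)/2 \rfloor + 1$, which exactly absorbs the extra $\Delta$. The argument is essentially routine; the only points requiring care are confirming that channels have induced trace norm at most $1$ and keeping the parity bookkeeping consistent, so that the floor (rather than a clean $(i-1)/2$) emerges, reflecting that re-entries into $\#_L x$ occur only on the odd steps.
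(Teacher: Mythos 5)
Your proof is correct and follows essentially the same route as the paper's: the identical starting point $Z_1 = Z_1'$, non-expansiveness of the channels $N_{y,m}^{\Balancedhookarrowleft}$ and $N_{x,m}^{\Balancedhookarrowright}$ for the even steps, the same insertion of the mixed term $N_{x,m}^{\Balancedhookarrowright}(Z_{i-1}')$ with the triangle inequality for the odd steps, and induction with the same parity accounting. The only cosmetic difference is that you sketch why a channel has induced trace norm at most $1$ whereas the paper cites this fact directly.
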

\begin{proof}
	By definition, $Z_1=\ket{q_{\text{start}}}\bra{q_{\text{start}}} \otimes \ket{c_{\text{start}}}\bra{c_{\text{start}}}=Z_1'$, and so $\lVert Z_1-Z_1' \rVert_1=0$. Note that $\lVert \Phi(Z) \rVert_1 \leq \lVert Z \rVert_1$, $\forall Z \in \L(\mathbb{C}^Q \otimes \mathbb{C}^C)$, $\forall \Phi \in \chan(\mathbb{C}^Q \otimes \mathbb{C}^C)$ \cite[Corollary 3.40]{watrous2018theory}. Therefore, for any $\Phi \in \chan(\mathbb{C}^Q \otimes \mathbb{C}^C)$ and any $Z,Z' \in \L(\mathbb{C}^Q \otimes \mathbb{C}^C)$, we have $\lVert \Phi(Z)-\Phi(Z') \rVert_1=\lVert \Phi(Z-Z') \rVert_1 \leq \lVert Z-Z' \rVert_1$. By \Cref{thm:2qcfa:truncTranOpAndCrossProp}(\ref{thm:2qcfa:truncTranOpAndCrossProp:isChannel}), $N_{x,m}^{\Balancedhookarrowright},N_{x',m}^{\Balancedhookarrowright},N_{y,m}^{\Balancedhookarrowleft} \in \chan(\mathbb{C}^Q \otimes \mathbb{C}^C)$. For $i$ even, $Z_i=N_{y,m}^{\Balancedhookarrowleft}(Z_{i-1})$ and $Z_i'=N_{y,m}^{\Balancedhookarrowleft}(Z_{i-1}')$. We then have \[\lVert Z_i-Z_i' \rVert_1 =\lVert N_{y,m}^{\Balancedhookarrowleft}(Z_{i-1})-N_{y,m}^{\Balancedhookarrowleft}(Z_{i-1}') \rVert_1\leq \lVert Z_{i-1}-Z_{i-1}' \rVert_1.\] For odd $i>1$,  $Z_i=N_{x,m}^{\Balancedhookarrowright}(Z_{i-1})$ and $Z_i'=N_{x',m}^{\Balancedhookarrowright}(Z_{i-1}')$. We have $\lVert Z \rVert_1=1$, $\forall Z \in \den(\mathbb{C}^Q \otimes \mathbb{C}^C)$, which implies $\lVert \Phi(Z) \rVert_1 \leq \lVert \Phi \rVert_1$, $\forall \Phi \in \T(\mathbb{C}^Q \otimes \mathbb{C}^C)$. Therefore,
	\[\lVert Z_i-Z_i' \rVert_1=\lVert N_{x,m}^{\Balancedhookarrowright}(Z_{i-1})-N_{x',m}^{\Balancedhookarrowright}(Z_{i-1}') \rVert_1\] 
	\[\leq \lVert N_{x,m}^{\Balancedhookarrowright}(Z_{i-1})-N_{x,m}^{\Balancedhookarrowright}(Z_{i-1}') \rVert_1+\lVert N_{x,m}^{\Balancedhookarrowright}(Z_{i-1}')-N_{x',m}^{\Balancedhookarrowright}(Z_{i-1}') \rVert_1\]
	\[= \lVert N_{x,m}^{\Balancedhookarrowright}(Z_{i-1}-Z_{i-1}') \rVert_1+\lVert (N_{x,m}^{\Balancedhookarrowright}-N_{x',m}^{\Balancedhookarrowright})(Z_{i-1}') \rVert_1\leq \lVert Z_{i-1}-Z_{i-1}' \rVert_1+\lVert N_{x,m}^{\Balancedhookarrowright}-N_{x',m}^{\Balancedhookarrowright} \rVert_1\]
	The claim then follows by induction on $i \in \mathbb{N}_{\geq 1}$.
\end{proof}  

\begin{lemma}\label{thm:2qcfa:runTimeCrossSequenceDist}
	Consider a language $L \subseteq \Sigma^*$. Suppose $L \in  \mathsf{B2QCFA}(k,d,T(n),\epsilon)$, for some $k,d \in \mathbb{N}_{\geq 2}$, $T:\mathbb{N} \rightarrow \mathbb{N}$, and $\epsilon \in [0,\frac{1}{2})$. If, for some $n \in \mathbb{N}$, $\exists x,x' \in \Sigma^{\leq n}$ such that $x \not \sim_{L,n} x'$, then $T(n) \geq \frac{(1-2\epsilon)^2}{2} \lVert N_{x,m}^{\Balancedhookarrowright}-N_{x',m}^{\Balancedhookarrowright} \rVert_1^{-1}$, $\forall m \geq \lceil \frac{2}{1-2\epsilon} T(n) \rceil$.
\end{lemma}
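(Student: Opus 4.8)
The plan is to read the acceptance probability directly off the crossing sequence and to argue that, were $N_{x,m}^{\Balancedhookarrowright}$ and $N_{x',m}^{\Balancedhookarrowright}$ too close, $N$ would be forced to accept $xy$ and $x'y$ with nearly equal probability, contradicting that exactly one of them lies in $L$. First I would invoke $x \not\sim_{L,n} x'$ to obtain a witness $y \in \Sigma^{\leq n-\max(\lvert x \rvert,\lvert x' \rvert)}$ with $xy \in L \Leftrightarrow x'y \notin L$; note $\lvert xy \rvert, \lvert x'y \rvert \leq n$, so the expected running time of $N$ on each is at most $T(n)$. Without loss of generality assume $xy \in L$ and $x'y \notin L$, so that $\Pr[N \text{ accepts } xy] \geq 1-\epsilon$ and $\Pr[N \text{ accepts } x'y] \leq \epsilon$. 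Let $Z_1,Z_2,\ldots$ and $Z_1',Z_2',\ldots$ be the $m$-truncated crossing sequences of $N$ on $xy$ and $x'y$, following \Cref{def:2qcfa:truncCross}.

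Second, I would introduce the \emph{accept-mass} functional. Set $P = \mathbbm{1}_{\mathbb{C}^Q} \otimes \ket{c_{\text{acc}}}\bra{c_{\text{acc}}}$ and $a_i = \Tr(P Z_i)$, the total probability mass in the classical accept state at the $i$-th crossing, with $a_i'$ defined analogously from $Z_i'$. Because a branch that has accepted remains in $c_{\text{acc}}$ ``crossing'' forever, $a_i$ is exactly the probability that $N$ accepts $xy$ within its first $i$ crossings while never exceeding $m$ steps in any single sojourn. Since $P$ is a projection, trace-norm duality gives $\lvert a_i - a_i' \rvert = \lvert \Tr(P(Z_i - Z_i')) \rvert \leq \lVert Z_i - Z_i' \rVert_1$, and combining with \Cref{thm:2qcfa:distanceCrossingSeq} yields $\lvert a_i - a_i' \rvert \leq \lfloor \frac{i-1}{2} \rfloor \lVert N_{x,m}^{\Balancedhookarrowright} - N_{x',m}^{\Balancedhookarrowright} \rVert_1$ for every $i$.

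Third, and this is the main point, I would bound the gap between the truncated quantity $a_K$ and the true acceptance probability for the specific choice $K = \lceil \frac{2}{1-2\epsilon} T(n) \rceil$. In the true (untruncated) computation, any branch counted by $\Pr[N \text{ accepts } xy]$ but not by $a_K$ must either accept only after more than $K$ crossings or suffer a sojourn longer than $m$ before accepting; either event forces that branch to run for more than $\min(K,m) \geq \frac{2}{1-2\epsilon} T(n)$ steps. By Markov's inequality applied to the running time (whose expectation on $xy$ is at most $T(n)$), the probability of this event is at most $T(n)/\min(K,m) \leq \frac{1-2\epsilon}{2}$. Hence $a_K \geq (1-\epsilon) - \frac{1-2\epsilon}{2} = \frac{1}{2}$, while truncation only ever decreases acceptance, so $a_K' \leq \Pr[N \text{ accepts } x'y] \leq \epsilon$; therefore $a_K - a_K' \geq \frac{1-2\epsilon}{2}$.

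Finally, I would combine the two estimates on $a_K - a_K'$. Using $\lfloor \frac{K-1}{2} \rfloor \leq \frac{K-1}{2} < \frac{T(n)}{1-2\epsilon}$ (from $K-1 < \frac{2}{1-2\epsilon} T(n)$), the chain $\frac{1-2\epsilon}{2} \leq a_K - a_K' \leq \frac{T(n)}{1-2\epsilon} \lVert N_{x,m}^{\Balancedhookarrowright} - N_{x',m}^{\Balancedhookarrowright} \rVert_1$ rearranges to $T(n) \geq \frac{(1-2\epsilon)^2}{2} \lVert N_{x,m}^{\Balancedhookarrowright} - N_{x',m}^{\Balancedhookarrowright} \rVert_1^{-1}$, as desired (the inequality also certifies that the norm is nonzero, so the inverse is well defined). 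The hard part is the third step: one must couple the true and $m$-truncated computations carefully enough that the ``lost'' acceptance mass is provably attributable to either an overlong sojourn or an excessive number of crossings, each of which forces a large running time and is thus controllable by Markov's inequality. The decision to examine only $K \approx \frac{2}{1-2\epsilon} T(n)$ crossings, rather than all $m$ of them, is exactly what converts the per-step error $\lVert N_{x,m}^{\Balancedhookarrowright} - N_{x',m}^{\Balancedhookarrowright} \rVert_1$ into the claimed bound, and it dictates the hypothesis $m \geq \lceil \frac{2}{1-2\epsilon} T(n) \rceil$.
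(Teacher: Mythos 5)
Your proposal is correct and follows essentially the same route as the paper's proof: both extract the witness $y$, read off the accept-state mass from the $m$-truncated crossing sequences (your $\Tr(PZ_i)$ equals the paper's $\Tr(E_{c_{\text{acc}}} Z_{m,i} E_{c_{\text{acc}}}^{\dagger})$), bound its difference via \Cref{thm:2qcfa:distanceCrossingSeq}, and use Markov's inequality at $s=\lceil \frac{2}{1-2\epsilon}T(n)\rceil$ crossings to relate truncated to true acceptance probability. Your contrapositive phrasing of the coupling step (lost acceptance mass forces either $>K$ crossings or a $>m$-step sojourn, hence a long running time) is just a restatement of the paper's direct claim that $p_N(xy,s)\leq p_{m,s}(c_{\text{acc}})$ when $s \leq m$, so there is no substantive difference.
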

\begin{proof}
	By definition, $x \not \sim_{L,n} x'$ precisely when $\exists y \in \Sigma^*$ such that $xy,x'y \in \Sigma^{\leq n}$, and $xy \in L \Leftrightarrow x'y \not \in L$. Fix such a $y$, and assume, without loss of generality, that $xy \in L$ (and hence $x'y \not \in L$). For $m \in \mathbb{N}$, suppose that, when $N$ is run on the partitioned input $xy$ (resp. $x'y$), we obtain the $m$-truncated crossing sequence $Z_{m,1},Z_{m,2},\ldots \in \widehat{\den}(\mathbb{C}^Q \otimes \mathbb{C}^C)$ (resp. $Z_{m,1}',Z_{m,2}',\ldots \in \widehat{\den}(\mathbb{C}^Q \otimes \mathbb{C}^C)$). For $c \in C$, let $E_c=\mathbbm{1}_{\mathbb{C}^Q} \otimes \ket{c}\bra{c} \in \L(\mathbb{C}^Q \otimes \mathbb{C}^C)$. For $s \in \mathbb{N}_{\geq 1}$, define $p_{m,s},p_{m,s}':C \rightarrow [0,1]$ such that $p_{m,s}(c)=\Tr(E_c Z_{m,s} E_c^{\dagger})$ and $p_{m,s}'(c)=\Tr(E_c Z_{m,s}' E_c^{\dagger})$. Then, for any $c \in C$, \Cref{thm:2qcfa:distanceCrossingSeq} implies \[\lvert p_{m,s}(c)-p_{m,s}'(c) \rvert =\lvert \Tr(E_c Z_{m,s} E_c^{\dagger})-\Tr(E_c Z_{m,s}' E_c^{\dagger}) \rvert=\lvert \Tr(E_c (Z_{m,s}-Z_{m,s}') E_c^{\dagger})\rvert\] \[\leq \lVert Z_{m,s} - Z_{m,s}' \rVert_1 \leq \frac{s-1}{2} \lVert N_{x,m}^{\Balancedhookarrowright}-N_{x',m}^{\Balancedhookarrowright} \rVert_1.\]		 
	
	Notice that $p_{m,s}(c_{\text{acc}})$ (resp. $p_{m,s}'(c_{\text{acc}})$) is the probability that $N$ accepts $xy$ (resp. $x'y$) within the first $s$ times (on a given branch of the computation) the head of $N$ crosses the boundary between $x$ (resp. $x'$) and $y$, where any branch that runs for more than $m$ steps between consecutive boundary crossings is forced to halt and reject immediately before attempting to perform the $m+1^{\text{st}}$ such step. Let $p_N(w)$ denote the probability that $N$ accepts an input $w \in \Sigma^*$, let $p_N(w,s)$ denote the probability that $N$ accepts $w$ within $s$ steps, and let $h_N(w,s)$ denote the probability that $N$ halts on input $w$ within $s$ steps. 
	
	Note that $x'y \not \in L$ implies $p_N(x'y) \leq \epsilon$. Clearly, $p_{m,s}'(c_{\text{acc}}) \leq p_N(x'y)$, for any $m$ and $s$, as all branches that attempt to perform more than $m$ steps (between consecutive crossings) are considered to reject the input in the $m$-truncated crossing sequence. Suppose $s \leq m$. Any branch that runs for a total of at most $s$ steps before halting is unaffected by $m$-truncation. Moreover, if a branch accepts within $s$ steps, it will certainly accept within $s$ crossings between $\#_L x$ and $y \#_R$. This implies $p_N(xy,s) \leq p_{m,s}(c_{\text{acc}})$.  Therefore, if $s \leq m$,  \[p_N(xy,s) \leq p_{m,s}(c_{\text{acc}}) \leq p_{m,s}'(c_{\text{acc}})+\lvert p_{m,s}(c_{\text{acc}})-p_{m,s}'(c_{\text{acc}})\rvert \leq \epsilon + \frac{s-1}{2} \lVert N_{x,m}^{\Balancedhookarrowright}-N_{x',m}^{\Balancedhookarrowright} \rVert_1.\]
	
	The expected running time of $N$ on input $xy$ is at most $T(\lvert xy \rvert)$. By Markov's inequality, $1-h_N(xy,s) \leq \frac{T(\lvert xy \rvert)}{s}$. Note that $xy \in L$ implies $p_N(xy) \geq 1-\epsilon$. Thus, for any $m \geq s \geq 1$,  \[1-\epsilon \leq p_N(xy) \leq p_N(xy,s)+(1-h_N(xy,s)) \leq \epsilon +  \frac{s-1}{2} \lVert N_{x,m}^{\Balancedhookarrowright}-N_{x',m}^{\Balancedhookarrowright} \rVert_1+\frac{T(\lvert xy \rvert)}{s}.\]  Set $s=\lceil \frac{2}{1-2\epsilon} T(n) \rceil$, and notice that $\lvert xy \rvert \leq n$ implies $T(\lvert xy \rvert) \leq T(n)$. For any $m \geq s$, \[1-2 \epsilon \leq \frac{\lceil \frac{2}{1-2\epsilon} T(n) \rceil-1}{2} \lVert N_{x,m}^{\Balancedhookarrowright}-N_{x',m}^{\Balancedhookarrowright} \rVert_1+\frac{T(\lvert xy \rvert)}{\lceil \frac{2}{1-2\epsilon} T(n) \rceil}\leq \frac{T(n)}{1- 2 \epsilon}\lVert N_{x,m}^{\Balancedhookarrowright}-N_{x',m}^{\Balancedhookarrowright} \rVert_1+\frac{1-2\epsilon}{2}.\] Therefore, $T(n) \geq \frac{(1-2\epsilon)^2}{2}\lVert N_{x,m}^{\Balancedhookarrowright}-N_{x',m}^{\Balancedhookarrowright} \rVert_1^{-1}, \ \  \forall m \geq \bigg\lceil \frac{2}{1-2\epsilon} T(n) \bigg\rceil$. \qedhere
\end{proof}

\begin{lemma}\label{thm:2qcfa:largeSetClosePair}
	Consider a 2QCFA $N=(Q,C,\Sigma,R,\theta,\delta,q_{\text{start}},c_{\text{start}},c_{\text{acc}},c_{\text{rej}})$. Let $k=\lvert Q \rvert$ and $d=\lvert C \rvert$. Consider any finite $X \subseteq \Sigma^*$ such that $\lvert X \rvert \geq 2$. Then $\forall m \in \mathbb{N}$, $\exists x,x' \in X$ such that $x \neq x'$ and $\lVert N_{x,m}^{\Balancedhookarrowright}-N_{x',m}^{\Balancedhookarrowright} \rVert_1 \leq 4\sqrt{2} k^4 d^2 \left(\lvert X \rvert^{\frac{1}{k^4 d^2}}-1 \right)^{-1}$. 
\end{lemma}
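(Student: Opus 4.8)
The plan is to realize each transfer operator $N_{x,m}^{\Balancedhookarrowright}$ as a point in a real vector space of dimension \emph{exactly} $k^4 d^2$, to note that all of these points lie in a bounded box, and then to extract two nearby ones by a pigeonhole/packing argument; the stated estimate is precisely the packing bound in dimension $k^4 d^2$. The first and most important step is a dimension reduction. Every Kraus operator of $S_x$ and of $T_x$ has the form $E\otimes\ket{c'}\bra{c}\otimes\ket{h'}\bra{h}$, i.e. it reads and rewrites a definite classical state, so $S_x$, $T_x$, and hence $N_{x,m}^{\Balancedhookarrowright}=\Tr_{\mathbb{C}^{H_x}}\circ T_x\circ S_x^m\circ I_x$ annihilate every off-diagonal block of the $\mathbb{C}^C$ register and output an element of $\widehat{\den}(\mathbb{C}^Q\otimes\mathbb{C}^C)$. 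Thus $N_{x,m}^{\Balancedhookarrowright}$ factors through the dephasing channel $\Pi(Z)=\sum_c(\mathbbm{1}_{\mathbb{C}^Q}\otimes\ket{c}\bra{c})\,Z\,(\mathbbm{1}_{\mathbb{C}^Q}\otimes\ket{c}\bra{c})$; since $\Pi$ is trace-norm contractive, the induced trace norm is attained on the classical-respecting subspace, so $\lVert N_{x,m}^{\Balancedhookarrowright}-N_{x',m}^{\Balancedhookarrowright}\rVert_1=\sup\{\lVert(N_{x,m}^{\Balancedhookarrowright}-N_{x',m}^{\Balancedhookarrowright})(W)\rVert_1:W\text{ classical-respecting},\ \lVert W\rVert_1\le 1\}$. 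Writing $\mathcal V=\bigoplus_{c\in C}\herm(\mathbb{C}^Q)$ (embedded via $\bigoplus_c H_c\mapsto\sum_c H_c\otimes\ket{c}\bra{c}$) for the real space of classical-respecting Hermitian operators, one has $\dim_{\mathbb R}\mathcal V=dk^2$, and since each $N_{x,m}^{\Balancedhookarrowright}$ is Hermiticity-preserving it is completely described by its \emph{real} matrix $M^{(x)}_{\alpha\beta}=\Tr\!\big(B_\alpha\,N_{x,m}^{\Balancedhookarrowright}(B_\beta)\big)$ in a fixed Hilbert-Schmidt-orthonormal Hermitian basis $\{B_\alpha\}_{\alpha=1}^{dk^2}$ of $\mathcal V$; thus each operator is a point $M^{(x)}\in\mathbb{R}^{k^4 d^2}$.

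Next I would choose the basis so its elements have small Schatten norms: for each $c$ take the generalized Gell-Mann-type basis of $\herm(\mathbb{C}^Q)$ tensored with $\ket{c}\bra{c}$, so that every $B_\alpha$ satisfies $\lVert B_\alpha\rVert_\infty\le 1$ and $\lVert B_\alpha\rVert_1\le\sqrt2$. Note $\{B_\alpha\}$ is simultaneously an orthonormal $\mathbb{R}$-basis of $\mathcal V$ and a $\mathbb{C}$-basis of $\bigoplus_c\L(\mathbb{C}^Q)$, so a general classical-respecting $W$ expands as $W=\sum_\beta w_\beta B_\beta$ with complex $w_\beta=\Tr(B_\beta W)$ while the matrices $M^{(x)}$ stay real. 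Two estimates follow from the fact that each $N_{x,m}^{\Balancedhookarrowright}$ is a channel, hence trace-norm contractive by \cite[Corollary 3.40]{watrous2018theory}. First, by H\"older, $\lvert M^{(x)}_{\alpha\beta}\rvert\le\lVert B_\alpha\rVert_\infty\lVert N_{x,m}^{\Balancedhookarrowright}(B_\beta)\rVert_1\le\lVert B_\beta\rVert_1\le\sqrt2$, so every $M^{(x)}$ lies in the cube $[-\sqrt2,\sqrt2]^{k^4 d^2}$. Second, writing $\Delta=M^{(x)}-M^{(x')}$ (the matrix of $N_{x,m}^{\Balancedhookarrowright}-N_{x',m}^{\Balancedhookarrowright}$) and using $\lvert w_\beta\rvert\le\lVert B_\beta\rVert_\infty\lVert W\rVert_1\le 1$, a triangle-inequality bound gives $\lVert(N_{x,m}^{\Balancedhookarrowright}-N_{x',m}^{\Balancedhookarrowright})(W)\rVert_1\le\big(\sum_\alpha\lVert B_\alpha\rVert_1\big)\big(\sum_\beta\lvert w_\beta\rvert\big)\lVert\Delta\rVert_{\max}\le\sqrt2\,k^4 d^2\,\lVert\Delta\rVert_{\max}$, so the induced trace norm is controlled by the largest matrix entry.

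Finally I would apply pigeonhole in the cube. Partition $[-\sqrt2,\sqrt2]^{k^4 d^2}$ into $t^{k^4 d^2}$ congruent subcubes of side $2\sqrt2/t$ with $t=\lceil\lvert X\rvert^{1/(k^4 d^2)}\rceil-1$; since $\lvert X\rvert\ge 2$ we have $t\ge 1$ and $t^{k^4 d^2}<\lvert X\rvert$, so two of the $\lvert X\rvert$ points $M^{(x)}$ share a subcube, yielding $x\ne x'$ with $\lVert M^{(x)}-M^{(x')}\rVert_{\max}\le 2\sqrt2/t\le 2\sqrt2\,(\lvert X\rvert^{1/(k^4 d^2)}-1)^{-1}$. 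Substituting into the operator-norm estimate and tracking the Schatten-norm constants yields $\lVert N_{x,m}^{\Balancedhookarrowright}-N_{x',m}^{\Balancedhookarrowright}\rVert_1\le 4\sqrt2\,k^4 d^2\,(\lvert X\rvert^{1/(k^4 d^2)}-1)^{-1}$, as claimed.

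The one genuinely delicate point — and the key to getting the exponent $k^4 d^2$ rather than something larger — is the dimension reduction of the first paragraph. One must use \emph{both} that the classical register decoheres (reducing the operator space from complex dimension $k^2 d^2$ to $dk^2$, so superoperators drop from complex dimension $k^4 d^4$ to $k^4 d^2$) \emph{and} that Hermiticity-preservation makes the representing matrix real, halving the real dimension from $2k^4 d^2$ down to exactly $k^4 d^2$. Everything else is a standard covering argument together with bookkeeping of the Schatten-norm constants of the chosen operator basis, which fixes the leading factor $4\sqrt2$.
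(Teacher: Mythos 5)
Your strategy coincides with the paper's: use the decoherence of the classical register to represent each $N_{x,m}^{\Balancedhookarrowright}$ by exactly $k^4d^2$ real parameters lying in a bounded region, extract two nearby representatives by a packing argument, and convert proximity of representatives back into proximity in the induced trace norm. The paper implements this with the Choi matrix $J(N_{x,m}^{\Balancedhookarrowright})$ --- whose entries vanish unless the classical indices agree pairwise (this is your dephasing factorization $N_{x,m}^{\Balancedhookarrowright}=N_{x,m}^{\Balancedhookarrowright}\circ\Pi$, stated in Choi form), whose positivity supplies the realness that you instead obtain from Hermiticity-preservation, and which is related to the induced trace norm via $\lVert\Phi\rVert_1\le\lVert J(\Phi)\rVert_1\le\sqrt{h}\,\lVert J(\Phi)\rVert_2$ --- followed by a Euclidean ball-packing; you instead use the matrix of the superoperator in a Hermitian operator basis together with a sup-norm cube pigeonhole. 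These are interchangeable implementations of the same idea, and your outline is sound.

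However, one step fails as written: the generalized Gell-Mann basis does not satisfy $\lVert B_\alpha\rVert_1\le\sqrt2$. The normalized identity component $\frac{1}{\sqrt{k}}\mathbbm{1}_{\mathbb{C}^Q}\otimes\ket{c}\bra{c}$ has trace norm $\sqrt{k}$, and the traceless diagonal elements have trace norm $2\sqrt{l/(l+1)}>\sqrt{2}$ for $l\ge 2$. Both your cube bound $\lvert M^{(x)}_{\alpha\beta}\rvert\le\lVert B_\beta\rVert_1$ and your conversion factor $\sum_\alpha\lVert B_\alpha\rVert_1$ rest on this false claim, so the argument as written produces a final constant worse than $4\sqrt{2}\,k^4d^2$ by a factor of order $\sqrt{k}$, and hence does not establish the stated inequality once $k$ exceeds a small constant. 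The repair is easy and leaves everything else intact: for the diagonal part of each classical block take the rank-one projectors $\ket{q}\bra{q}\otimes\ket{c}\bra{c}$ rather than the identity plus traceless diagonal matrices. Together with the symmetrized and antisymmetrized off-diagonal elements these still form a Hilbert--Schmidt orthonormal Hermitian basis, now with every element of operator norm at most $1$ and trace norm at most $\sqrt{2}$; your computation then goes through verbatim and in fact yields the constant $4\le 4\sqrt{2}$, proving the lemma.
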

\begin{proof}
	For $q,q' \in Q$ and $c,c' \in C$, let $F_{q,q',c,c'}=\ket{q}\bra{q'} \otimes \ket{c}\bra{c'} \in \L(\mathbb{C}^Q \otimes \mathbb{C}^C )$. Let $J:\T(\mathbb{C}^Q \otimes \mathbb{C}^C)\rightarrow \L(\mathbb{C}^Q \otimes \mathbb{C}^C \otimes \mathbb{C}^Q \otimes \mathbb{C}^C)$ denote the \textit{Choi isomorphism}, which is given by $J(\Phi)=\sum_{(q,q',c,c') \in Q^2 \times C^2} F_{q,q',c,c'} \otimes \Phi(F_{q,q',c,c'}), \forall \Phi \in \T(\mathbb{C}^Q \otimes \mathbb{C}^C)$. Consider any $x \in \Sigma^*$ and $m \in \mathbb{N}$. We first show that, if $(c_1,c_2) \neq (c_1',c_2')$, then $\bra{q_2 c_2}  N_{x,m}^{\Balancedhookarrowright}(F_{q_1,q_1',c_1,c_1'}) \ket{q_2' c_2'}=0$. To see this, recall that, by Definition~\ref{def:2qcfa:truncTranOp}, $N_{x,m}^{\Balancedhookarrowright}=\Tr_{\mathbb{C}^{H_x}} \circ T_x \circ S_x^m \circ I_x$. If $c_1 \neq c_1'$, then $N_{x,m}^{\Balancedhookarrowright}(F_{q_1,q_1',c_1,c_1'})=\mathbbm{0}_{\mathbb{C}^Q \otimes \mathbb{C}^C}$, which implies $\bra{q_2 c_2}  N_{x,m}^{\Balancedhookarrowright}(F_{q_1,q_1',c_1,c_1'}) \ket{q_2' c_2'}=0$. If $c_2 \neq c_2'$, then $\bra{q_2 c_2}\Tr_{\mathbb{C}^{H_x}}(T_x(Z))\ket{q_2' c_2'}=0, \forall Z$, which implies $\bra{q_2 c_2}  N_{x,m}^{\Balancedhookarrowright}(F_{q_1,q_1',c_1,c_1'}) \ket{q_2' c_2'}=0$.
	
	Therefore, $\bra{q_2 c_2}  N_{x,m}^{\Balancedhookarrowright}(F_{q_1,q_1',c_1,c_1'}) \ket{q_2' c_2'}$ is only potentially non-zero at the $k^4 d^2$ elements where $(c_1,c_2)=(c_1',c_2')$. By \Cref{thm:2qcfa:truncTranOpAndCrossProp}(\ref{thm:2qcfa:truncTranOpAndCrossProp:isChannel}), $N_{x,m}^{\Balancedhookarrowright} \in \chan(\mathbb{C}^Q \otimes \mathbb{C}^C)$, which implies $J(N_{x,m}^{\Balancedhookarrowright}) \in \pos(\mathbb{C}^Q \otimes \mathbb{C}^C \otimes \mathbb{C}^Q \otimes \mathbb{C}^C)$ \cite[Corollary 2.27]{watrous2018theory}. Therefore, the elements where $(q_1,q_2) \neq (q_1',q_2')$ come in conjugate pairs, and the elements with $(q_1,q_2) \neq (q_1',q_2')$ are real. We define the function $g_{N,m}:\Sigma^* \rightarrow \mathbb{R}^{k^4 d^2}$ such that $g_{N,m}(x)$ encodes all the potentially non-zero $\bra{q_2 c_2}  N_{x,m}^{\Balancedhookarrowright}(F_{q_1,q_1',c_1,c_1'}) \ket{q_2' c_2'}$, without redundancy (only encoding one element of a conjugate pair). To be precise, the first $k^2 d^2$ entries of $g_{N,m}(x)$ are given by $\{\bra{q_2 c_2}  N_{x,m}^{\Balancedhookarrowright}(F_{q_1,q_1,c_1,c_1}) \ket{q_2 c_2}:q_1,q_2 \in Q,c_1,c_2 \in C\} \subseteq \mathbb{R}$. Establish some total order $\geq$ on $Q$, and let $\widehat{Q^4}=\{(q_1,q_1',q_2,q_2') \in Q^4:q_1' > q_1 \text{ or } (q_1'=q_1 \text{ and } q_2'>q_2)\}$. The remaining $k^4 d^2-k^2 d^2$ entries are given by encoding each of the $\frac{1}{2}(k^4 d^2-k^2 d^2)$ potentially non-zero entries $\{\bra{q_2 c_2}  N_{x,m}^{\Balancedhookarrowright}(F_{q_1,q_1',c_1,c_1}) \ket{q_2' c_2}:(q_1,q_1',q_2,q_2') \in \widehat{Q^4},c_1,c_2 \in C\} \subseteq \mathbb{C}$ as the pair of real numbers that comprise their real and imaginary parts.  
	
	Let $h=k^4 d^2$. Let $\lVert \cdot \rVert:\mathbb{R}^h \rightarrow \mathbb{R}_{\geq 0}$ denote the Euclidean $2$-norm and $\lVert \cdot \rVert_2:\L(V) \rightarrow \mathbb{R}_{\geq 0}$ denote the Schatten $2$-norm. Note that $\lVert \Phi \rVert_1 \leq \lVert J(\Phi) \rVert_1$, $\forall \Phi$ \cite[Section 3.4]{watrous2018theory}. We have, \[\lVert N_{x,m}^{\Balancedhookarrowright}-N_{x',m}^{\Balancedhookarrowright} \rVert_1 \leq \lVert J(N_{x,m}^{\Balancedhookarrowright}-N_{x',m}^{\Balancedhookarrowright}) \rVert_1 \leq \sqrt{\rank(J(N_{x,m}^{\Balancedhookarrowright}-N_{x',m}^{\Balancedhookarrowright}))}\lVert J(N_{x,m}^{\Balancedhookarrowright}-N_{x',m}^{\Balancedhookarrowright}) \rVert_2\] 
	\[\leq \sqrt{h} \lVert J(N_{x,m}^{\Balancedhookarrowright})-J(N_{x',m}^{\Balancedhookarrowright}) \rVert_2 \leq \sqrt{2h} \lVert g_{N,m}(x)-g_{N,m}(x') \rVert.\] 
	
	Note that $N_{x,m}^{\Balancedhookarrowright} \in \chan(\mathbb{C}^Q \otimes \mathbb{C}^C)$, which implies $\lVert N_{x,m}^{\Balancedhookarrowright}\rVert_1=1$ \cite[Corollary 3.40]{watrous2018theory}. Then, $\forall q,q' \in Q, \forall c \in C$, we have $\lVert F_{q,q',c,c} \rVert_1=1$, which implies $\lVert N_{x,m}^{\Balancedhookarrowright}(F_{q,q',c,c})\rVert_1\leq 1$. Therefore, \[\lVert g_{N,m}(x) \rVert \leq \lVert J(N_{x,m}^{\Balancedhookarrowright}) \rVert_2\leq \lVert J(N_{x,m}^{\Balancedhookarrowright}) \rVert_1 \leq \sum_{q,q' \in Q, c \in C} \lVert N_{x,m}^{\Balancedhookarrowright}(F_{q,q',c,c})\rVert_1 \leq k^2 d=\sqrt{h}.\]  
	
	For $v_0 \in \mathbb{R}^h$ and $r \in \mathbb{R}_{>0}$, let $B(v_0,r)=\{v \in \mathbb{R}^h: \lVert v_0-v \rVert \leq r\}$ denote the closed ball centered at $v_0$ of radius $r$ in $\mathbb{R}^h$, which has volume $\text{vol}(B(v_0,r))=c_h r^h$, for some constant $c_h \in \mathbb{R}_{>0}$. By the above, $\lVert g_{N,m}(x) \rVert \leq \sqrt{h}$, which implies that $B(g_{N,m}(x),\delta) \subseteq B(0,\sqrt{h}+\delta)$, $\forall \delta \in \mathbb{R}_{>0}$. Suppose $\forall x,x' \in X$ with $x \neq x'$, we have $B(g_{N,m}(x),\delta) \cap B(g_{N,m}(x'),\delta)=\emptyset$. Then $\sqcup_{x \in X} B(g_{N,m}(x),\delta) \subseteq B(0,\sqrt{h}+\delta)$, which implies $\lvert X \rvert c_h \delta^h \leq c_h (\sqrt{h}+\delta)^h$. Set $\delta=\frac{2\sqrt{h}}{\lvert X \rvert^{1/h}-1}$. Then $\exists x,x' \in X$, with $x \neq x'$, such that $B(g_{N,m}(x),\delta) \cap B(g_{N,m}(x'),\delta) \neq \emptyset$, which implies $\lVert g_{N,m}(x)-g_{N,m}(x') \rVert \leq 2 \delta$. Therefore, \[\lVert N_{x,m}^{\Balancedhookarrowright}-N_{x',m}^{\Balancedhookarrowright} \rVert_1 \leq \sqrt{2h} \lVert g_{N,m}(x)-g_{N,m}(x') \rVert \leq \sqrt{2h} 2\delta \leq 4\sqrt{2} k^4 d^2 \left(\lvert X \rvert^{\frac{1}{k^4 d^2}}-1 \right)^{-1}.\qedhere\]
\end{proof}  

We now prove a 2QCFA analogue of the Dwork-Stockmeyer lemma. 

\begin{theorem}\label{thm:2qcfa:lowerBoundRunningTime}
	If $L \in \mathsf{B2QCFA}(k,d,T(n),\epsilon)$, for some $k,d \in \mathbb{N}_{\geq 2}$, $T:\mathbb{N} \rightarrow \mathbb{N}$, and $\epsilon \in [0,\frac{1}{2})$, then $\exists N_0 \in \mathbb{N}$ such that $T(n) \geq \frac{(1-2\epsilon)^2}{16 \sqrt{2} k^4 d^2} D_L(n)^{\frac{1}{k^4 d^2}}$, $\forall n \geq N_0$.
\end{theorem}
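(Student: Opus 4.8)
The plan is to assemble the theorem from the two preceding lemmas together with the definition of the nonregularity $D_L$. Write $h=k^4 d^2$ for brevity. Fix $n$ and let $X=\{x_1,\ldots,x_{D_L(n)}\}\subseteq \Sigma^{\leq n}$ be a maximal family of pairwise $(L,n)$-dissimilar strings, so that $\lvert X\rvert=D_L(n)$. The crucial point is that \emph{any} two distinct members of $X$ are $(L,n)$-dissimilar, so whichever close pair the packing argument produces will automatically be a legal input to the crossing-sequence distance bound.

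Concretely, I would first commit to the truncation parameter $m=\lceil \frac{2}{1-2\epsilon}T(n)\rceil$, which is exactly the threshold required by \Cref{thm:2qcfa:runTimeCrossSequenceDist}. Since \Cref{thm:2qcfa:largeSetClosePair} holds for \emph{every} $m$, applying it to $X$ with this particular $m$ yields distinct $x,x'\in X$ with $\lVert N_{x,m}^{\Balancedhookarrowright}-N_{x',m}^{\Balancedhookarrowright}\rVert_1 \leq 4\sqrt{2}\,h\,(D_L(n)^{1/h}-1)^{-1}$. Because $x\not\sim_{L,n}x'$, \Cref{thm:2qcfa:runTimeCrossSequenceDist} applies verbatim with this same $m$ and gives $T(n)\geq \frac{(1-2\epsilon)^2}{2}\lVert N_{x,m}^{\Balancedhookarrowright}-N_{x',m}^{\Balancedhookarrowright}\rVert_1^{-1}$. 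Chaining the two inequalities produces the raw bound $T(n)\geq \frac{(1-2\epsilon)^2}{8\sqrt{2}\,h}\big(D_L(n)^{1/h}-1\big)$.

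It then remains to absorb the additive $-1$ into the clean multiplicative constant of the statement. When $D_L(n)\geq 2^{h}$ we have $D_L(n)^{1/h}\geq 2$, hence $D_L(n)^{1/h}-1\geq \tfrac12 D_L(n)^{1/h}$, which turns the raw bound into precisely $T(n)\geq \frac{(1-2\epsilon)^2}{16\sqrt{2}\,h}D_L(n)^{1/h}$. In the complementary regime $D_L(n)<2^{h}$, the target right-hand side is at most $\frac{(1-2\epsilon)^2}{8\sqrt{2}\,h}<1$ (using $\epsilon<\tfrac12$ and $h=k^4 d^2\geq 64$), while $T(n)\geq 1$ because every halting branch runs for at least one step and $T(n)$ upper-bounds the expected running time; so the bound holds trivially there. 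To extract the threshold $N_0$ as stated, I would invoke the standard fact that $D_L$ is unbounded exactly when $L$ is nonregular: for nonregular $L$ there is an $N_0$ past which $D_L(n)\geq 2^{h}$ and the first regime applies, and for regular $L$ the quantity $D_L(n)$ stays bounded and the trivial regime applies for all $n$.

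The substance of the argument is carried entirely by the two lemmas, so the remaining work is bookkeeping; the one point deserving care is the apparently circular dependence of $m$ on $T(n)$. This is harmless precisely because \Cref{thm:2qcfa:largeSetClosePair} is uniform in $m$: I may fix $m=\lceil \frac{2}{1-2\epsilon}T(n)\rceil$ first and only then extract the close pair, so a single $m$ feeds both lemmas. The close pair $(x,x')$ does depend on this $m$, and hence on $T(n)$, but since the distance estimate of \Cref{thm:2qcfa:largeSetClosePair} is itself independent of $m$, that dependence never enters the final inequality.
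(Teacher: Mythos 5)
Your proposal is correct and follows essentially the same route as the paper's own proof: apply \Cref{thm:2qcfa:largeSetClosePair} to a maximal pairwise $(L,n)$-dissimilar family, feed the resulting close pair (which is automatically $(L,n)$-dissimilar) into \Cref{thm:2qcfa:runTimeCrossSequenceDist}, and absorb the additive $-1$ using $D_L(n)\geq 2^{k^4d^2}$. Two cosmetic differences are actually in your favor: your instantiation $m=\left\lceil \frac{2}{1-2\epsilon}T(n)\right\rceil$ is the correct one (the paper's proof writes $m=\left\lceil \frac{1-2\epsilon}{2}T(n)\right\rceil$, apparently a typo, since that value need not meet the threshold hypothesis of \Cref{thm:2qcfa:runTimeCrossSequenceDist}), and your pointwise case split on whether $D_L(n)\geq 2^{k^4d^2}$ yields the bound at every $n$, whereas the paper organizes the same bookkeeping as a regular/nonregular dichotomy with $N_0$ chosen so that $D_L(n)\geq 2^{k^4d^2}$ beyond it.
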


\begin{proof}	
	Consider some $L \subseteq \Sigma^*$. By \cite[Lemma 3.1]{dwork1990time}, $L \in \mathsf{REG} \Leftrightarrow \exists b \in \mathbb{N}_{\geq 1}$ such that $D_L(n) \leq b$, $\forall n \in \mathbb{N}$. Thus, if $L \in \mathsf{REG}$, the claim is immediate (recall that $T(n)\geq n$). Next, suppose $L \not \in \mathsf{REG}$. For $n \in \mathbb{N}$, define $X_n=\{x_1,\cdots,x_{D_L(n)}\} \subseteq \Sigma^{\leq n}$ such that the $x_i$ are pairwise $(L,n)$-dissimilar. As $D_L(n)$ is not bounded above by any constant, $\exists N_0 \in \mathbb{N}$ such that $D_L(N_0) \geq 2^{k^4 d^2}$. Then, $\forall n \geq N_0$, we have $\lvert X_n \rvert =D_L(n) \geq D_L(N_0) \geq 2^{k^4 d^2}$. Fix $n \geq N_0$ and set $m=\lceil \frac{1- 2 \epsilon}{2} T(n)\rceil$. By \Cref{thm:2qcfa:largeSetClosePair}, $\exists x,x' \in X_n$ such that $x \neq x'$ and \[\lVert N_{x,m}^{\Balancedhookarrowright}-N_{x',m}^{\Balancedhookarrowright} \rVert_1 \leq 4\sqrt{2} k^4 d^2 \left(\lvert X_n \rvert^{\frac{1}{k^4 d^2}}-1 \right)^{-1} \leq 8\sqrt{2} k^4 d^2\lvert X_n \rvert^{-\frac{1}{k^4 d^2}}=8\sqrt{2} k^4 d^2 D_L(n)^{-\frac{1}{k^4 d^2}}.\] Fix such a pair $x,x'$, and note that $x \not \sim_{L,n} x'$, by construction. By \Cref{thm:2qcfa:runTimeCrossSequenceDist}, \[T(n) \geq  \frac{(1-2\epsilon)^2}{2}\lVert N_{x,m}^{\Balancedhookarrowright}-N_{x',m}^{\Balancedhookarrowright} \rVert_1^{-1} \geq \frac{(1-2\epsilon)^2}{16 \sqrt{2} k^4 d^2} D_L(n)^{\frac{1}{k^4 d^2}}.\qedhere\] 
\end{proof}

\subsection{2QCFA Running Time Lower Bounds and Complexity Class Separations}\label{sec:runningTimeLowerBound:2qcfaCompClasses}

Let $\mathsf{B2QCFA}(T(n))=\cup_{k,d \in \mathbb{N}_{\geq 2},\epsilon \in [0,\frac{1}{2})} \mathsf{B2QCFA}(k,d,T(n),\epsilon)$ denote the class of languages recognizable with two-sided bounded error by a 2QCFA with any constant number of quantum and classical states, in expected time at most $T(n)$. For a family $\mathcal{T}$ of functions of the form $T: \mathbb{N} \rightarrow \mathbb{N}$, let $\mathsf{B2QCFA}(\mathcal{T})=\cup_{T \in \mathcal{T}} \mathsf{B2QCFA}(T(n))$. We then write, for example, $\mathsf{B2QCFA}(2^{o(n)})$ to denote the union, taken over every function $T: \mathbb{N} \rightarrow \mathbb{N}$ such that $T(n)=2^{o(n)}$, of $\mathsf{B2QCFA}(T(n))$. We immediately obtain the following corollaries of \Cref{thm:2qcfa:lowerBoundRunningTime}.

\begin{corollary}\label{thm:2qcfa:coarseLowerBoundRunningTime}
	If $L \in \mathsf{B2QCFA}(T(n))$, then $D_L(n)=T(n)^{O(1)}$ and $C_L(n)=O(\log T(n))$.
\end{corollary}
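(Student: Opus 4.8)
The plan is to read the bound straight off \Cref{thm:2qcfa:lowerBoundRunningTime} by solving for $D_L(n)$. By the definition of $\mathsf{B2QCFA}(T(n))$, the hypothesis $L \in \mathsf{B2QCFA}(T(n))$ supplies constants $k,d \in \mathbb{N}_{\geq 2}$ and $\epsilon \in [0,\tfrac{1}{2})$ with $L \in \mathsf{B2QCFA}(k,d,T(n),\epsilon)$. Invoking \Cref{thm:2qcfa:lowerBoundRunningTime} then yields an $N_0$ such that, for all $n \geq N_0$,
\[T(n) \geq \frac{(1-2\epsilon)^2}{16\sqrt{2}\,k^4 d^2}\, D_L(n)^{1/(k^4 d^2)}.\]
Rearranging, for all $n \geq N_0$,
\[D_L(n) \leq \left(\frac{16\sqrt{2}\,k^4 d^2}{(1-2\epsilon)^2}\right)^{k^4 d^2} T(n)^{k^4 d^2}.\]
Since $k$, $d$, and $\epsilon$ depend only on the fixed recognizing 2QCFA and not on $n$, both the leading factor and the exponent $k^4 d^2$ are absolute constants, so this is precisely a bound of the form $D_L(n) = O\!\left(T(n)^{k^4 d^2}\right) = T(n)^{O(1)}$ for all $n \geq N_0$.

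To upgrade this from ``$n \geq N_0$'' to all $n$, I would handle the finitely many lengths $n < N_0$ using the convention (already used in the proof of \Cref{thm:2qcfa:lowerBoundRunningTime}) that $T(n) \geq n \geq 1$, so each $T(n)^{k^4 d^2} \geq 1$; as there are only finitely many such $n$ and each $D_L(n)$ is finite, I can enlarge the leading constant to a single $C$ for which $D_L(n) \leq C\, T(n)^{k^4 d^2}$ holds for \emph{all} $n$, giving $D_L(n) = T(n)^{O(1)}$ unconditionally.

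For the communication-complexity claim I would apply the identity $C_L(n) = \log D_L(n)$ recalled in \Cref{sec:runningTimeLowerBound:nonregularity} (from \cite{condon1998power}). Taking logarithms of $D_L(n) \leq C\, T(n)^{k^4 d^2}$ gives $C_L(n) = \log D_L(n) \leq \log C + k^4 d^2 \log T(n)$; because $T(n) \geq n$ forces $\log T(n) \to \infty$, the additive constant $\log C$ is absorbed and $C_L(n) = O(\log T(n))$.

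I do not anticipate any genuine obstacle: the corollary is essentially a cosmetic restatement of \Cref{thm:2qcfa:lowerBoundRunningTime} in which the explicit constants are replaced by asymptotic notation. The only point deserving emphasis is that the $O(1)$ in the exponent $T(n)^{O(1)}$, and likewise the implied constants, depend on the recognizing 2QCFA (through $k$, $d$, and $\epsilon$), which is legitimate precisely because that machine is fixed once the hypothesis $L \in \mathsf{B2QCFA}(T(n))$ is assumed.
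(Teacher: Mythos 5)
Your proposal is correct and matches the paper's intent exactly: the paper treats this corollary as an immediate consequence of \Cref{thm:2qcfa:lowerBoundRunningTime}, and your argument is precisely that derivation spelled out --- unpack the definition of $\mathsf{B2QCFA}(T(n))$, rearrange the theorem's inequality to bound $D_L(n)$ by a constant times $T(n)^{k^4 d^2}$, and convert to the communication bound via the identity $C_L(n)=\log D_L(n)$ recalled in \Cref{sec:runningTimeLowerBound:nonregularity}. Your extra care with the finitely many $n < N_0$ and the absorption of additive constants (using $T(n)\geq n$) is harmless and consistent with the paper's conventions.
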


\begin{corollary}\label{thm:2qcfa:expNonregularity}
	If a language $L$ satisfies $D_L(n) =2^{\Omega(n)}$, then $L \not \in \mathsf{B2QCFA}(2^{o(n)})$. 
\end{corollary}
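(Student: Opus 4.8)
The plan is to argue by contradiction, applying \Cref{thm:2qcfa:lowerBoundRunningTime} directly. Suppose, toward a contradiction, that $L \in \mathsf{B2QCFA}(2^{o(n)})$. Unfolding the definition of the class $\mathsf{B2QCFA}(2^{o(n)})$, this means there exist constants $k, d \in \mathbb{N}_{\geq 2}$, an $\epsilon \in [0, \tfrac{1}{2})$, and a function $T: \mathbb{N} \to \mathbb{N}$ with $T(n) = 2^{o(n)}$ such that $L \in \mathsf{B2QCFA}(k, d, T(n), \epsilon)$.

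Next I would invoke \Cref{thm:2qcfa:lowerBoundRunningTime} to obtain an $N_0 \in \mathbb{N}$ such that $T(n) \geq \frac{(1-2\epsilon)^2}{16\sqrt{2}\, k^4 d^2}\, D_L(n)^{1/(k^4 d^2)}$ for all $n \geq N_0$. The crucial observation is that $k$, $d$, and $\epsilon$ are \emph{fixed constants} (depending only on the fixed 2QCFA that recognizes $L$), so both the exponent $a := 1/(k^4 d^2) > 0$ and the leading coefficient are positive constants.

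I would then substitute the hardness hypothesis $D_L(n) = 2^{\Omega(n)}$, i.e., there is a constant $c > 0$ with $D_L(n) \geq 2^{cn}$ for all sufficiently large $n$. This gives $D_L(n)^{a} \geq 2^{can}$, and hence $T(n) \geq C\, 2^{can}$ for a positive constant $C$ and all large $n$; that is, $T(n) = 2^{\Omega(n)}$. Taking logarithms, $\frac{1}{n}\log_2 T(n) \geq ca + o(1)$, which is bounded below by a positive constant, directly contradicting $T(n) = 2^{o(n)}$ (which forces $\frac{1}{n}\log_2 T(n) \to 0$). This contradiction establishes $L \notin \mathsf{B2QCFA}(2^{o(n)})$.

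Since this corollary is an essentially immediate consequence of \Cref{thm:2qcfa:lowerBoundRunningTime}, there is no substantive obstacle; the only point requiring a modicum of care is the bookkeeping of the asymptotic notation—ensuring that the constant exponent $1/(k^4 d^2)$ arising from the fixed machine does not blur the strict separation between $2^{\Omega(n)}$ growth and $2^{o(n)}$ growth. Because that exponent is a positive constant, raising $D_L(n) = 2^{\Omega(n)}$ to it preserves exponential growth, which is precisely what drives the contradiction.
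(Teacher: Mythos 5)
Your proof is correct and is essentially the paper's own argument: the paper states this corollary as an immediate consequence of \Cref{thm:2qcfa:lowerBoundRunningTime}, and your write-up simply spells out that immediate step (fixing $k,d,\epsilon$, noting the exponent $1/(k^4d^2)$ is a positive constant, and deriving $T(n)=2^{\Omega(n)}$ in contradiction with $T(n)=2^{o(n)}$). No gaps; the bookkeeping you flag is exactly the only point of care.
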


Notice that $D_L(n)=2^{O(n)}$, for any $L$. We next exhibit a language for which $D_L(n) =2^{\Omega(n)}$, thereby yielding a strong lower bound on the running time of any 2QCFA that recognizes $L$. For $w=w_1 \cdots w_n \in \Sigma^*$, let $w^{\text{rev}}=w_n \cdots w_1$ denote the reversal of the string $w$. Let $L_{pal}=\{w \in \{a,b\}^*:w=w^{\text{rev}}\}$ consist of all palindromes over the alphabet $\{a,b\}$.

\begin{corollary}\label{thm:2qcfa:palindromeLowerBoundRunningTime}
	$L_{pal} \not \in \mathsf{B2QCFA}(2^{o(n)})$.
\end{corollary}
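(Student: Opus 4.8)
The plan is to reduce everything to the nonregularity bound already in hand: by \Cref{thm:2qcfa:expNonregularity}, it suffices to show that $D_{L_{pal}}(n) = 2^{\Omega(n)}$, after which the conclusion $L_{pal} \not\in \mathsf{B2QCFA}(2^{o(n)})$ is immediate. So the entire task is to exhibit, for each $n$, an exponentially large family of pairwise $(L_{pal},n)$-dissimilar strings in $\{a,b\}^{\leq n}$.

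First I would take $X_n = \{a,b\}^{\lfloor n/2 \rfloor}$, the set of all strings of length exactly $\lfloor n/2 \rfloor$, so that $\lvert X_n \rvert = 2^{\lfloor n/2 \rfloor}$. Fix any two distinct $x, x' \in X_n$ and set the witness $y = x^{\text{rev}}$; since $\lvert y \rvert = \lfloor n/2 \rfloor \leq n - \lfloor n/2 \rfloor = n - \max(\lvert x \rvert, \lvert x' \rvert)$, this $y$ is short enough to witness $(L_{pal},n)$-dissimilarity. Then $xy = x\,x^{\text{rev}}$ is visibly a palindrome, so $xy \in L_{pal}$, while $x'y = x'\,x^{\text{rev}}$ is not: its reversal equals $x\,(x')^{\text{rev}}$, and equality of these two length-$2\lfloor n/2\rfloor$ strings would force their length-$\lfloor n/2 \rfloor$ prefixes to agree, i.e.\ $x = x'$, contradicting $x \neq x'$. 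Hence $xy \in L_{pal} \Leftrightarrow x'y \not\in L_{pal}$, so $x \not\sim_{L_{pal},n} x'$.

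Since the pair was arbitrary, all of $X_n$ is pairwise $(L_{pal},n)$-dissimilar, giving $D_{L_{pal}}(n) \geq 2^{\lfloor n/2 \rfloor} = 2^{\Omega(n)}$; combined with the trivial upper bound $D_L(n) = 2^{O(n)}$ this is in fact tight, but only the lower bound is needed here. Applying \Cref{thm:2qcfa:expNonregularity} then completes the argument.

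I expect no genuine obstacle: the work is the one-line combinatorial verification that appending $x^{\text{rev}}$ separates $x$ from every other string of the same length. The only points demanding (minor) care are bookkeeping ones, namely checking that $y$ satisfies the length constraint $\lvert y \rvert \leq n - \max(\lvert x\rvert,\lvert x'\rvert)$ built into the definition of $(L,n)$-dissimilarity, and noting that the dissimilarity relation is symmetric so that the single witness $y = x^{\text{rev}}$ suffices for the unordered pair.
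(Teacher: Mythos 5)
Your proposal is correct and follows essentially the same route as the paper: both exhibit exponentially many pairwise dissimilar strings of length about $n/2$ using the witness $y = x^{\text{rev}}$ (the paper phrases this as $D_{L_{pal}}(2n) \geq \lvert \{a,b\}^n \rvert = 2^n$, yours as $D_{L_{pal}}(n) \geq 2^{\lfloor n/2 \rfloor}$, which are equivalent), and then invoke \Cref{thm:2qcfa:expNonregularity}. Your write-up is in fact slightly more careful than the paper's, since you explicitly verify the length constraint $\lvert y \rvert \leq n - \max(\lvert x \rvert, \lvert x' \rvert)$ from the definition of $(L,n)$-dissimilarity and prove, rather than merely assert, that $x' x^{\text{rev}}$ is not a palindrome.
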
 

\begin{proof}
	For $n \in \mathbb{N}$, let $W_n=\{w\in \{a,b\}^*: \lvert w \rvert = n\}$ denote all words over the alphabet $\{a,b\}$ of length $n$. For any $w,w' \in W_n$, with $w \neq w'$, we have $\lvert w w^{\text{rev}} \rvert=2n=\lvert w' w^{\text{rev}} \rvert$, $w w^{\text{rev}} \in L_{pal}$, and $w' w^{\text{rev}} \not \in L_{pal}$; therefore, $w \not \sim_{L_{pal},2n} w'$, $\forall w,w' \in W_n$ such that $w \neq w'$. This implies that $D_{L_{pal}}(2n) \geq \lvert W_n \rvert =2^n$. \Cref{thm:2qcfa:expNonregularity} then implies $L_{pal} \not \in \mathsf{B2QCFA}(T(n))$.
\end{proof} 

We define $\mathsf{BQE2QCFA}=\mathsf{B2QCFA}(2^{O(n)})$ to be the class of languages recognizable with two-sided bounded error in expected exponential time (with linear exponent) by a 2QCFA. Next, we say that a 2QCFA $N$ recognizes a language $L$ with \textit{negative one-sided bounded error} $\epsilon \in \mathbb{R}_{>0}$ if, $\forall w \in L$, $\Pr[N \text{ accepts } w] =1$, and, $\forall w \not \in L$, $\Pr[N \text{ accepts } w] \leq \epsilon$. We define $\mathsf{coR2QCFA}(k,d,T(n),\epsilon)$ as the class of languages recognizable with negative one-sided bounded error $\epsilon$ by a 2QCFA, with at most $k$ quantum basis states and at most $d$ classical states, that has expected running time at most $T(n)$ on all inputs of length at most $n$. We define $\mathsf{coR2QCFA}(T(n))$ and $\mathsf{coRQE2QCFA}$ analogously to the two-sided bounded error case.

Ambainis and Watrous \cite{ambainis2002two} showed that $L_{pal} \in \mathsf{coRQE2QCFA}$; in fact, their 2QCFA recognizer for $L_{pal}$ has only a single-qubit. Clearly, $\mathsf{coR2QCFA}(T(n)) \subseteq \mathsf{B2QCFA}(T(n))$, for any $T$, and $\mathsf{coRQE2QCFA} \subseteq \mathsf{BQE2QCFA}$. Therefore, the class of languages recognizable by a 2QCFA with bounded error in \textit{subexponential} time is properly contained in the class of languages recognizable by a 2QCFA in \textit{exponential} time.  

\begin{corollary}\label{thm:2qcfa:complexitySep}
	$\mathsf{B2QCFA}(2^{o(n)}) \subsetneq \mathsf{BQE2QCFA}$ and $\mathsf{coR2QCFA}(2^{o(n)}) \subsetneq \mathsf{coRQE2QCFA}$.		
\end{corollary}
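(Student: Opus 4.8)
The plan is to observe that both asserted inclusions $\subseteq$ are immediate from the definitions, so that the entire content of the corollary lies in exhibiting a single separating language; I would take this witness to be $L_{pal}$. First I would record the trivial containments: since any bound of the form $2^{o(n)}$ is in particular of the form $2^{O(n)}$, every 2QCFA running in expected time $2^{o(n)}$ runs in expected time $2^{O(n)}$, giving $\mathsf{B2QCFA}(2^{o(n)}) \subseteq \mathsf{B2QCFA}(2^{O(n)}) = \mathsf{BQE2QCFA}$ and, by the identical argument applied to the negative one-sided error case, $\mathsf{coR2QCFA}(2^{o(n)}) \subseteq \mathsf{coRQE2QCFA}$.

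For properness I would invoke the two facts already in hand. On the upper-bound side, the result of Ambainis and Watrous gives $L_{pal} \in \mathsf{coRQE2QCFA}$, and since $\mathsf{coRQE2QCFA} \subseteq \mathsf{BQE2QCFA}$ (as noted just above the statement), we also have $L_{pal} \in \mathsf{BQE2QCFA}$. On the lower-bound side, \Cref{thm:2qcfa:palindromeLowerBoundRunningTime} gives $L_{pal} \not\in \mathsf{B2QCFA}(2^{o(n)})$.

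I would then assemble these. For the two-sided statement, $L_{pal}$ lies in $\mathsf{BQE2QCFA}$ but not in $\mathsf{B2QCFA}(2^{o(n)})$, witnessing the proper containment. For the one-sided statement, $L_{pal}$ lies in $\mathsf{coRQE2QCFA}$; and since $\mathsf{coR2QCFA}(2^{o(n)}) \subseteq \mathsf{B2QCFA}(2^{o(n)})$, the fact that $L_{pal} \not\in \mathsf{B2QCFA}(2^{o(n)})$ forces $L_{pal} \not\in \mathsf{coR2QCFA}(2^{o(n)})$, again witnessing a proper containment.

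There is no genuine obstacle here: all the difficulty resides in the previously established lower bound (\Cref{thm:2qcfa:palindromeLowerBoundRunningTime}, which rests on the Dwork--Stockmeyer-type \Cref{thm:2qcfa:lowerBoundRunningTime} together with the estimate $D_{L_{pal}}(2n) \geq 2^n$) and in the Ambainis--Watrous exponential-time recognizer. The only point requiring any care is bookkeeping about \emph{which} class the witness belongs to: it must lie in the larger class by the upper bound and fail to lie in the smaller class by the lower bound, and for the one-sided statement one additionally uses that negative one-sided bounded error is a special case of two-sided bounded error, so that the subexponential lower bound transfers from $\mathsf{B2QCFA}(2^{o(n)})$ down to $\mathsf{coR2QCFA}(2^{o(n)})$.
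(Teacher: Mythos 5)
Your proposal is correct and follows exactly the paper's argument: the paper likewise uses $L_{pal}$ as the sole witness, combining the Ambainis--Watrous upper bound $L_{pal} \in \mathsf{coRQE2QCFA} \subseteq \mathsf{BQE2QCFA}$ with the lower bound $L_{pal} \not\in \mathsf{B2QCFA}(2^{o(n)})$ from \Cref{thm:2qcfa:palindromeLowerBoundRunningTime}, and the inclusion $\mathsf{coR2QCFA}(2^{o(n)}) \subseteq \mathsf{B2QCFA}(2^{o(n)})$ to transfer the lower bound to the one-sided class.
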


We next define $\mathsf{BQP2QCFA}=\mathsf{B2QCFA}(n^{O(1)})$ to be the class of languages recognizable with two-sided bounded error in expected polynomial time by a 2QCFA. 

\begin{corollary}\label{thm:2qcfa:polyTime}
	If $L \in \mathsf{BQP2QCFA}$, then $D_L(n) =n^{O(1)}$. Therefore, $\mathsf{BQP2QCFA} \subseteq \mathsf{L/poly}$.
\end{corollary}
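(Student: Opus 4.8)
The plan is to derive both claims as direct consequences of the already-established lower bound together with the facts about hardness measures recalled in \Cref{sec:runningTimeLowerBound:nonregularity}; no new machinery is needed. For the first claim, I would unpack the hypothesis: $L \in \mathsf{BQP2QCFA} = \mathsf{B2QCFA}(n^{O(1)})$ means there exist constants $k, d \in \mathbb{N}_{\geq 2}$, some $\epsilon \in [0, \tfrac{1}{2})$, and a polynomially bounded $T : \mathbb{N} \rightarrow \mathbb{N}$ with $T(n) = n^{O(1)}$ such that $L \in \mathsf{B2QCFA}(k, d, T(n), \epsilon)$.

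Applying \Cref{thm:2qcfa:lowerBoundRunningTime} then yields an $N_0$ such that $T(n) \geq \frac{(1-2\epsilon)^2}{16\sqrt{2}\,k^4 d^2}\, D_L(n)^{1/(k^4 d^2)}$ for all $n \geq N_0$. I would rearrange this to $D_L(n) \leq \bigl(\tfrac{16\sqrt{2}\,k^4 d^2}{(1-2\epsilon)^2}\, T(n)\bigr)^{k^4 d^2}$. The key point, which is the only place care is needed, is the quantifier structure of the $O(1)$ constants: the exponent $k^4 d^2$ depends only on the fixed automaton recognizing $L$ and is therefore a constant, so raising the polynomially bounded quantity $T(n)$ to this constant power leaves a polynomially bounded function. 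Hence $D_L(n) = n^{O(1)}$, establishing the first claim. (This also follows from \Cref{thm:2qcfa:coarseLowerBoundRunningTime}, which already records $D_L(n) = T(n)^{O(1)}$.)

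For the inclusion $\mathsf{BQP2QCFA} \subseteq \mathsf{L/poly}$, I would chain together the three facts recalled in \Cref{sec:runningTimeLowerBound:nonregularity}. First, deterministic automaticity coincides with nonregularity, $A_L(n) = D_L(n)$ for all $n$. Second, since a 2DFA is at least as powerful as a (one-way) DFA and agreement on strings of length exactly $n$ is a weaker requirement than agreement on all strings of length at most $n$, we have $A_{L,=}^{2DFA}(n) \leq A_L(n)$ for all $n$. Third, $\mathsf{L/poly} = \{L : A_{L,=}^{2DFA}(n) = n^{O(1)}\}$. Combining these with the first claim gives $A_{L,=}^{2DFA}(n) \leq A_L(n) = D_L(n) = n^{O(1)}$, so $L \in \mathsf{L/poly}$.

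There is no substantial obstacle here; the corollary is an assembly of the main theorem with standard equivalences between hardness measures. The only subtlety worth flagging explicitly in the write-up is the one already noted above, namely that the polynomial bound on $D_L$ survives the constant-power exponentiation precisely because $k$ and $d$ are fixed constants of the recognizing 2QCFA rather than growing with $n$.
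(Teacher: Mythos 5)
Your proposal is correct and follows essentially the same route as the paper: the paper also obtains the first claim as a special case of \Cref{thm:2qcfa:coarseLowerBoundRunningTime} (which you invoke as an alternative to unwinding \Cref{thm:2qcfa:lowerBoundRunningTime} directly) and proves the inclusion via the same chain $A_{L,=}^{2DFA}(n) \leq A_L(n) = D_L(n) = n^{O(1)}$ together with the characterization $\mathsf{L/poly}=\{L:A_{L,=}^{2DFA}(n)=n^{O(1)}\}$. Your explicit attention to the quantifier structure of the constants $k,d,\epsilon$ is a sound elaboration of what the paper leaves implicit.
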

\begin{proof}
	The first statement is a special case of \Cref{thm:2qcfa:coarseLowerBoundRunningTime}. To see that $\mathsf{BQP2QCFA} \subseteq \mathsf{L/poly}$, recall that, as noted in \Cref{sec:runningTimeLowerBound:nonregularity}, $\mathsf{L/poly}=\{L:A_{L,=}^{2DFA}(n)=n^{O(1)}\}$; clearly, for any $L$ and any $n \in \mathbb{N}$, $A_{L,=}^{2DFA}(n) \leq A_L(n)=D_L(n)$.
\end{proof}

Of course, there are many languages $L$ for which one can establish a strong lower bound on $D_L(n)$, and thereby establish a strong lower bound on the expected running time $T(n)$ of any 2QCFA that recognizes $L$. In Section~\ref{sec:ComplexityWordProb}, we consider the case in which $L$ is the word problem of a group, and we show that very strong lower bounds can be established on $D_L(n)$. In the current section, we consider two especially interesting languages; the relevance of these languages was brought to our attention by Richard Lipton (personal communication). For $p \in \mathbb{N}$, let $\langle p \rangle_2 \in \{0,1\}^*$ denote its binary representation; let $L_{primes}=\{\langle p \rangle_2: p \text{ is prime}\}$. Note that $D_{L_{primes}}(n)=2^{\Omega(n)}$ \cite{shallit1996automaticityIV}, which immediately implies the following.

\begin{corollary}\label{thm:2qcfa:primeLowerBoundRunningTime}
	$L_{primes} \not \in \mathsf{B2QCFA}(2^{o(n)})$.
\end{corollary}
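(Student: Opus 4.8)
The plan is to reduce the statement to a lower bound on the Dwork--Stockmeyer nonregularity $D_{L_{primes}}$ and then invoke \Cref{thm:2qcfa:expNonregularity}. Recall that \Cref{thm:2qcfa:expNonregularity} asserts that any language $L$ with $D_L(n) = 2^{\Omega(n)}$ satisfies $L \not\in \mathsf{B2QCFA}(2^{o(n)})$. Thus it suffices to establish that $D_{L_{primes}}(n) = 2^{\Omega(n)}$, after which the corollary follows in a single line, exactly mirroring the structure of the palindrome argument in \Cref{thm:2qcfa:palindromeLowerBoundRunningTime}.

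For the key nonregularity bound, the most economical route is to cite it directly: $D_{L_{primes}}(n) = 2^{\Omega(n)}$ is precisely the (exponential) automaticity lower bound for the set of binary representations of the primes, established in \cite{shallit1996automaticityIV}; recall from \Cref{sec:runningTimeLowerBound:nonregularity} that deterministic automaticity $A_L$ and nonregularity $D_L$ coincide, so the automaticity bound is literally a bound on $D_{L_{primes}}$. To see the shape of such an argument (without reproving it), one exhibits $2^{\Omega(n)}$ strings $x$, thought of as the high-order bits of a would-be prime, that are pairwise $(L_{primes}, n)$-dissimilar: for a pair $x \neq x'$ one must produce a short suffix $y$ so that exactly one of the two integers obtained by appending $y$ (i.e.\ $x \cdot 2^{\lvert y \rvert} + \mathrm{val}(y)$ and $x' \cdot 2^{\lvert y \rvert} + \mathrm{val}(y)$) is prime. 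The existence of such distinguishing suffixes, for exponentially many prefixes simultaneously, is guaranteed by the density and distribution of the primes.

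The genuine obstacle is entirely number-theoretic and lives inside the cited result: unlike the palindrome case, where the pairwise-dissimilar family $\{w w^{\text{rev}}\}$ was produced by inspection, here one cannot simply write down the distinguishing suffixes, and establishing that a $2^{\Omega(n)}$-sized pairwise-dissimilar family exists requires quantitative information about primes in short intervals or arithmetic progressions. Since this content is standard and available in \cite{shallit1996automaticityIV}, I would offload it to that citation rather than reproduce it. Everything on the automata-theoretic side is then immediate: combining the bound $D_{L_{primes}}(n) = 2^{\Omega(n)}$ with \Cref{thm:2qcfa:expNonregularity} yields $L_{primes} \not\in \mathsf{B2QCFA}(2^{o(n)})$.
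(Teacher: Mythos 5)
Your proposal is correct and is essentially identical to the paper's own argument: the paper simply cites \cite{shallit1996automaticityIV} for the bound $D_{L_{primes}}(n)=2^{\Omega(n)}$ (via the coincidence of nonregularity and deterministic automaticity) and then applies \Cref{thm:2qcfa:expNonregularity}. Your additional sketch of where the number-theoretic difficulty lives inside the citation is accurate but not part of the paper's proof, which offloads that content exactly as you do.
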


Say a string $w=w_1\cdots w_n \in \{0,1\}^n$ has a length-$3$ arithmetic progression (3AP) if $\exists i,j,k \in \mathbb{N}$ such that $1 \leq i <j <k \leq n$, $j-i=k-j$, and $w_i=w_j=w_k=1$; let $L_{3ap}=\{w\in \{0,1\}^*:w \text{ has a 3AP}\}$. It is straightforward to show the lower bound $D_{L_{3ap}}(n)=2^{n^{1-o(1)}}$, as well as the upper bound $D_{L_{3ap}}(n)=2^{n^{o(n)}}$. Therefore, one obtains the following lower bound on the running time of a 2QCFA that recognizes $L_{3ap}$, which, while still quite strong, is not as strong as that of $L_{pal}$ or $L_{primes}$.

\begin{corollary}\label{thm:2qcfa:3apLowerBoundRunningTime}
	$L_{3ap} \not \in \mathsf{B2QCFA}\left(2^{n^{1-\Omega(1)}}\right)$.
\end{corollary}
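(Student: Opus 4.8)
The plan is to reduce the statement to a lower bound on the nonregularity $D_{L_{3ap}}$ and then invoke the coarse running-time lower bound \Cref{thm:2qcfa:coarseLowerBoundRunningTime}. Concretely, I would first establish that $D_{L_{3ap}}(n) \geq 2^{n^{1-o(1)}}$ (only this direction is needed). Granting this, suppose for contradiction that $L_{3ap} \in \mathsf{B2QCFA}(2^{n^{1-\Omega(1)}})$, so that $L_{3ap} \in \mathsf{B2QCFA}(T(n))$ for some $T(n) \leq 2^{n^{1-c}}$, for a fixed $c>0$ and all large $n$. \Cref{thm:2qcfa:coarseLowerBoundRunningTime} then gives $D_{L_{3ap}}(n)=T(n)^{O(1)}=2^{O(n^{1-c})}$. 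On the other hand, the nonregularity bound gives, for every fixed $\eta>0$, that $D_{L_{3ap}}(n) \geq 2^{n^{1-\eta}}$ for all large $n$; taking $\eta=c/2$ and noting $n^{1-c/2}/n^{1-c}=n^{c/2}\to\infty$, this contradicts $D_{L_{3ap}}(n)=2^{O(n^{1-c})}$.

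The heart of the argument is thus the nonregularity lower bound, for which I would exhibit, with $n=5\ell$, a family of $2^{\ell^{1-o(1)}}$ strings in $\Sigma^{\leq n}$ that are pairwise $(L_{3ap},n)$-dissimilar. The main external input is Behrend's construction of a progression-free set $A \subseteq \{1,\dots,\ell\}$ with $\lvert A \rvert=\ell^{1-o(1)}$. For each $T \subseteq A$, let $x_T \in \{0,1\}^\ell$ be the indicator string of $T$; since $T \subseteq A$ is itself free of three-term progressions, every $x_T$ lies outside $L_{3ap}$. This yields $2^{\lvert A \rvert}=2^{\ell^{1-o(1)}}$ candidate strings, and it remains only to witness their pairwise dissimilarity by short suffixes.

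To do this I would use scale separation to eliminate cross-interference. For $a \in A$, let $y_a$ be the suffix whose only two $1$s sit at absolute positions $a+2\ell$ and $a+4\ell$ in $x_S y_a$; then $\lvert y_a \rvert=a+3\ell \leq 4\ell = n-\ell$, so $y_a$ is an admissible witness. The key observation is that any three-term progression completed by a pair of data $1$s (positions in $\{1,\dots,\ell\}$) has its third term at a position $<2\ell$, whereas both $1$s of $y_a$ lie beyond $2\ell$; hence no progression can combine two data $1$s with a $y_a$ $1$, nor two $y_a$ $1$s with an extraneous data $1$. A short case check then shows the only possible progression in $x_S y_a$ is $a,\,a+2\ell,\,a+4\ell$, which occurs exactly when $a \in S$. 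Given distinct $T,T'\subseteq A$, choosing $a \in T \triangle T'$ (say $a \in T\setminus T'$) gives $x_T y_a \in L_{3ap}$ and $x_{T'} y_a \notin L_{3ap}$, so $x_T \not\sim_{L_{3ap},n} x_{T'}$; therefore $D_{L_{3ap}}(5\ell)\geq 2^{\lvert A\rvert}=2^{\ell^{1-o(1)}}$, and substituting $\ell=n/5$ yields $D_{L_{3ap}}(n)=2^{n^{1-o(1)}}$. I expect the only delicate points to be Behrend's estimate and the bookkeeping of the $o(1)$ term through $\ell=n/5$; the dissimilarity verification itself is elementary once the query $1$s are pushed past position $2\ell$.
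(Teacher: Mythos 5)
Your proposal follows essentially the same route as the paper: establish the nonregularity lower bound $D_{L_{3ap}}(n) = 2^{n^{1-o(1)}}$ and then conclude via \Cref{thm:2qcfa:coarseLowerBoundRunningTime}; the paper simply asserts the nonregularity bound as ``straightforward,'' whereas you correctly supply the details with a Behrend-based construction whose scale-separated suffixes $y_a$ indeed force $x_S y_a \in L_{3ap} \Leftrightarrow a \in S$. Both the construction and the final contradiction bookkeeping (taking $\eta = c/2$ against $T(n) \leq 2^{n^{1-c}}$) check out.
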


\begin{remark}
	While $L_{primes}$ and $L_{3ap}$ provide two more examples of natural languages for which our method yields strong lower bound on the running time of any 2QCFA recognizer, they also suggest the potential of proving a stronger lower bound for certain languages. That is to say, for $L_{pal}$, one has (essentially) matching lower and upper bounds on the running time of any 2QCFA recognizer; this is certainly not the case for $L_{primes}$ and $L_{3ap}$. In fact, we currently do not know if either $L_{primes}$ or $L_{3ap}$ can be recognized by a 2QCFA with bounded error \textit{at all} (i.e., regardless of time bound).
\end{remark}

\subsection{Transition Amplitudes of 2QCFA}\label{sec:runningTimeLowerBound:2qcfaTransitionAmplitudes}

As in Definition~\ref{def:2qcfa:transitionDesc}, for some 2QCFA $N=(Q,C,\Sigma,R,\theta,\delta,q_{\text{start}},c_{\text{start}},c_{\text{acc}},c_{\text{rej}})$, let $\{E_{c,\sigma,r,j}:r \in R, j \in J\} \subseteq \L(\mathbb{C}^Q)$ denote the set of operators that describe the selective quantum operation $\theta(c,\sigma)\in \quantop(\mathbb{C}^Q,R)$ that is applied to the quantum register when the classical state of $N$ is $c \in \widehat{C}$ and the head of $N$ is over the symbol $\sigma \in \Sigma_+$. The \textit{transition amplitudes} of $N$ are the set of numbers $\{\bra{q}E_{c,\sigma,r,j}\ket{q'}:c \in \widehat{C},\sigma \in \Sigma_+,r\in R, j \in J, q,q' \in Q\}\subseteq \mathbb{C}$.

While other types of finite automata are often defined without any restriction on their transition amplitudes, for 2QCFA, and other types of QFA, the allowed class of transition amplitudes strongly affects the power of the model. For example, using non-computable transition amplitudes, a 2QCFA can recognize certain undecidable languages with bounded error in expected polynomial time \cite{say2017magic}. Our lower bound holds even in this setting of unrestricted transition amplitudes. For $\mathbb{F} \subseteq \mathbb{C}$, we define complexity classes $\mathsf{coR2QCFA}_{\mathbb{F}}(k,d,T(n),\epsilon)$, $\mathsf{coRQE2QCFA}_{\mathbb{F}}$, etc., that are variants of the corresponding complexity class in which the 2QCFA are restricted to have transition amplitudes in $\mathbb{F}$. Using our terminology, Ambainis and Watrous \cite{ambainis2002two} showed that $L_{pal} \in \mathsf{coRQE2QCFA}_{\overline{\mathbb{Q}}}$, where $\overline{\mathbb{Q}}$ denotes the algebraic numbers, which are, arguably, the natural choice for the permitted class of transition amplitudes of a quantum model of computation. Therefore, $L_{pal}$ can be recognized with negative one-sided bounded error by a single-qubit 2QCFA with transition amplitudes that are all algebraic numbers in expected exponential time; however, $L_{pal}$ cannot be recognized with two-sided bounded error (and, therefore, not with one-sided bounded error) by a 2QCFA (of any constant size) in subexponential time, regardless of the permitted transition amplitudes. 

\section{Lower Bounds on the Running Time of Small-Space QTMs}\label{sec:sublogspaceQTM}

We next show that our technique also yields a lower bound on the expected running time of a quantum Turing machine (QTM) that uses sublogarithmic space (i.e., $o(\log n)$ space). The key idea is that a QTM $M$ that uses $S(n)$ space can be viewed as a sequence $(M_n)_{n \in \mathbb{N}}$ of 2QCFA, where $M_n$ has $2^{O(S(n))}$ (classical and quantum) states and $M_n$ simulates $M$ on all inputs of length at most $n$ (therefore, $M_n$ and $M$ have the same probability of acceptance and the same expected running time on any such input). The techniques of the previous section apply to 2QCFA with a sufficiently slowly growing number of states.

We consider the \textit{classically controlled} space-bounded QTM model that allows \textit{intermediate measurements}, following the definition of Watrous \cite{watrous2003complexity}. While several such QTM models have been defined, we focus on this model as we wish to prove our lower bound in the greatest generality possible. We note that the definitions of such QTM models by, for instance, Ta-Shma \cite{ta2013inverting}, Watrous \cite[Section VII.2]{watrous2009encyclopedia}, and (essentially, without the use of random access) van Melkebeek and Watson\cite{melkebeek2012time} are special cases of the QTM model that we consider. In the case of time-bounded quantum computation, it is well-known that allowing a QTM to perform intermediate measurements provably does not increase the power of the model; very recently, this fact has also been shown to hold in the simultaneously time-bounded and space-bounded setting \cite{fefferman2020eliminating}. 

A QTM has three tapes: (1) a classical read-only input tape, where each cell stores a symbol from the input alphabet (with special end-markers at the left and right ends), (2) a classical one-way infinite work tape, where each cell stores a symbol from some potentially larger (finite) alphabet, and (3) a one-way infinite quantum work tape, where each cell contains a single qubit. Each tape has a bidirectional (classical) head. A QTM also has a finite set of classical states that serve as its finite control, and a finite-size quantum register. 

The computation of a QTM is entirely \textit{classically controlled}. Each step of the computation consists of a \textit{quantum phase} followed by a \textit{classical phase}. In the quantum phase, depending on the current classical state and the symbols currently under the heads of the input tape and of the classical work tape, a QTM performs a selective quantum operation on the combined register consisting of its internal quantum register and the single qubit currently under the head of the quantum work tape. In the classical phase, depending on the current classical state, the symbols currently under the heads of the input tape and of the classical work tape, and the result of the operation performed in the quantum phase, a QTM updates its configuration as follows: a new classical state is entered, a symbol is written on the cell of the classical work tape under the head, and the heads of all tapes move at most one cell in either direction. 

A (branch of the computation of a) QTM halts and accepts/rejects its input by entering a special classical accept/reject state. As we wish to make our lower bound as strong as possible, we wish to be as generous as possible with the rejecting criteria of a QTM, and so we allow a QTM to also reject by looping (as we did with 2QCFA); similarly, no restriction is placed on the transition amplitudes of the QTM (see the discussion in Section~\ref{sec:runningTimeLowerBound:2qcfaTransitionAmplitudes}). Let $\mathsf{BQTISP}_{\epsilon}(T(n),S(n))$ denote the class of languages recognizable with two-sided bounded error $\epsilon \in [0,1/2)$ by a QTM that runs in at most $T(n)$ expected time, and uses at most $S(n)$ space, on all inputs of length at most $n$; of course, only the space used on the (classical and quantum) work tapes is counted. Furthermore, let $\mathsf{BQTISP}(T(n),S(n))=\cup_{\epsilon \in [0,1/2)} \mathsf{BQTISP}_{\epsilon}(T(n),S(n))$. 

As noted at the beginning of this section, we may view a QTM $M$ that operates in space $S(n)$ as a sequence of 2QCFA with a growing number of states. This yields the following analogue of Theorem~\ref{thm:2qcfa:lowerBoundRunningTime} for sublogarithmic-space QTMs.

\begin{theorem}\label{thm:qtm:lowerBoundRunningTime}
	Suppose $L \in \mathsf{BQTISP}(T(n),S(n))$, and suppose further that $S(n)=o(\log \log D_L(n))$. Then $\exists b_0 \in \mathbb{R}_{>0}$ such that, $T(n) =\Omega\big( 2^{-b_0 S(n)} D_L(n)^{2^{-b_0 S(n)}}\big)$.
\end{theorem}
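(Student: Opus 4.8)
The plan is to make precise the remark that an $S(n)$-space QTM $M$ is a sequence $(M_n)_{n \in \mathbb{N}}$ of 2QCFA of slowly-growing size, and then to rerun the argument behind \Cref{thm:2qcfa:lowerBoundRunningTime} length-by-length while tracking the dependence of every bound on the number of states. First I would fix, for each $n$, a 2QCFA $M_n$ that agrees with $M$ on every input $w \in \Sigma^{\leq n}$ (same acceptance probability and same expected running time). Since $M$ uses at most $S(n)$ space on such inputs, everything except $M$'s input-tape head fits into $2^{O(S(n))}$ data: I let the classical states of $M_n$ encode the finite control of $M$ together with the contents and head position of the classical work tape and the head position of the quantum work tape, and I let the quantum register of $M_n$ be the tensor product of $M$'s internal quantum register with the $O(S(n))$ qubits of the quantum work tape, while the two-way input head of $M_n$ tracks the input head of $M$. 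Thus $M_n$ has $k_n = 2^{O(S(n))}$ quantum basis states and $d_n = 2^{O(S(n))}$ classical states, so $k_n^4 d_n^2 = 2^{O(S(n))}$, and I fix $b_0 \in \mathbb{R}_{>0}$ with $k_n^4 d_n^2 \leq 2^{b_0 S(n)}$ for all large $n$. Putting $M_n$ into the normal form assumed in \Cref{sec:crossingSequences} costs only a constant number of extra states, so this estimate is unaffected.

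Next I would run the crossing-sequence argument on $M_n$ for each fixed $n$. Because $L \in \mathsf{BQTISP}(T(n),S(n))$ is witnessed by some error $\epsilon \in [0,\frac12)$, the automaton $M_n$ recognizes $L$ on $\Sigma^{\leq n}$ with two-sided bounded error $\epsilon$ in expected time at most $T(n)$. Choosing $X_n \subseteq \Sigma^{\leq n}$ to be a set of $D_L(n)$ pairwise $(L,n)$-dissimilar strings and applying \Cref{thm:2qcfa:largeSetClosePair} to $M_n$ with $X = X_n$ produces distinct $x,x' \in X_n$ with
\[\lVert N_{x,m}^{\Balancedhookarrowright}-N_{x',m}^{\Balancedhookarrowright} \rVert_1 \leq 4\sqrt{2}\, k_n^4 d_n^2 \left(D_L(n)^{\frac{1}{k_n^4 d_n^2}}-1\right)^{-1}\]
for every $m$. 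These $x,x'$ are $(L,n)$-dissimilar, so taking $m = \lceil \tfrac{2}{1-2\epsilon} T(n)\rceil$ and invoking \Cref{thm:2qcfa:runTimeCrossSequenceDist} gives $T(n) \geq \tfrac{(1-2\epsilon)^2}{2}\lVert N_{x,m}^{\Balancedhookarrowright}-N_{x',m}^{\Balancedhookarrowright} \rVert_1^{-1}$.

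It remains to simplify using the hypothesis $S(n) = o(\log\log D_L(n))$. Since $k_n^4 d_n^2 = 2^{O(S(n))}$, this hypothesis forces $k_n^4 d_n^2 \leq \log_2 D_L(n)$, equivalently $D_L(n) \geq 2^{k_n^4 d_n^2}$, for all sufficiently large $n$; consequently $D_L(n)^{1/(k_n^4 d_n^2)} \geq 2$ and $(D_L(n)^{1/(k_n^4 d_n^2)}-1)^{-1} \leq 2 D_L(n)^{-1/(k_n^4 d_n^2)}$. Combining this with the two inequalities above yields $T(n) \geq \tfrac{(1-2\epsilon)^2}{16\sqrt{2}\, k_n^4 d_n^2}\, D_L(n)^{1/(k_n^4 d_n^2)}$, exactly as in \Cref{thm:2qcfa:lowerBoundRunningTime} but with $k,d$ replaced by the length-dependent $k_n,d_n$. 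Finally, since $k_n^4 d_n^2 \leq 2^{b_0 S(n)}$ and $D_L(n) \geq 1$, I would use $1/(k_n^4 d_n^2) \geq 2^{-b_0 S(n)}$ (so that $D_L(n)^{1/(k_n^4 d_n^2)} \geq D_L(n)^{2^{-b_0 S(n)}}$) to conclude $T(n) = \Omega\big(2^{-b_0 S(n)} D_L(n)^{2^{-b_0 S(n)}}\big)$.

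The conceptual obstacle — and the reason this does not follow from \Cref{thm:2qcfa:lowerBoundRunningTime} as a black box — is that the number of states of $M_n$ grows with $n$, whereas that theorem is stated for a single 2QCFA of constant size, with a threshold $N_0$ and leading constant that depend on $k,d$. The fix is to work at the level of the size-explicit \Cref{thm:2qcfa:largeSetClosePair} and \Cref{thm:2qcfa:runTimeCrossSequenceDist}, each applied to the genuinely finite automaton $M_n$ for a fixed $n$, carrying $k_n,d_n$ through the estimate. The growth condition $S(n) = o(\log\log D_L(n))$ is precisely what makes the ball-packing step of \Cref{thm:2qcfa:largeSetClosePair} effective: it guarantees that $D_L(n)$ is large enough, relative to the ambient dimension $k_n^4 d_n^2$ of the Choi-vector encoding, to force a pair of strings whose transfer operators are close (i.e. that $D_L(n)^{1/(k_n^4 d_n^2)} \geq 2$). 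The only other point requiring mild care is verifying that the simulation $M \mapsto M_n$ preserves acceptance probability and expected running time exactly on $\Sigma^{\leq n}$, which follows from the fact that $M$ never uses more than $S(n)$ space on those inputs.
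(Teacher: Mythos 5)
Your proposal is correct and takes essentially the same approach as the paper's proof: simulate the QTM on $\Sigma^{\leq n}$ by a 2QCFA $M_n$ with $k_n^4 d_n^2 = 2^{O(S(n))} \leq 2^{b_0 S(n)}$ states, apply the size-explicit \Cref{thm:2qcfa:largeSetClosePair} and \Cref{thm:2qcfa:runTimeCrossSequenceDist} to $M_n$ for each fixed $n$, and use $S(n)=o(\log\log D_L(n))$ exactly to guarantee $D_L(n)^{1/(k_n^4 d_n^2)} \geq 2$ before simplifying. One small wording fix: choose $m=\lceil \tfrac{2}{1-2\epsilon}T(n)\rceil$ \emph{before} invoking \Cref{thm:2qcfa:largeSetClosePair}, since that lemma produces a pair $x,x'$ that may depend on $m$ (your subsequent use only needs the pair for this single $m$, so the argument is unaffected).
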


\begin{proof}
	By definition, there is some QTM $M$ that recognizes $L$ with two-sided bounded error $\epsilon$, for some $\epsilon \in [0,1/2)$, where $M$ runs in expected time at most $T(n)$, and uses at most $S(n)$ space, on all inputs of length at most $n$. Let $F$ (resp. $P$) denote the finite set of classical states (resp. quantum basis states) of $M$, and let $\Sigma$ (resp. $\Gamma$) denote the finite input alphabet (resp. classical work tape alphabet) of $M$.
	
	For each $n \in \mathbb{N}$, we define a 2QCFA $M_n$ that correctly simulates $M$ on any $w \in \Sigma^{\leq n}$, in the obvious way. The (only) head of the 2QCFA $M_n$ (on its read-only input tape) directly simulates the head of the QTM $M$ on its read-only input tape. $M_n$ uses its classical states $C_n$ to keep track of the state $f \in F$ of the finite control of $M$, the string $y \in \Gamma^{S(n)}$ that appears in the first $S(n)$ cells of the classical work tape, and the positions $h_{c-work},h_{q-work} \in \{1,\ldots,S(n)\}$ of the heads on the (classical and quantum) work tapes. $M_n$ uses its quantum register, which has quantum basis states $Q_n$, to store the first $S(n)$ qubits of the quantum work tape and the $\log \lvert P \rvert$ qubits of the internal quantum register. The transition function of $M_n$ is defined such that, if $M_n$ is in a classic state $c \in C_n$ which (along with the head position on the input tape) completely specifies the classical part of a configuration of $M$, then $M_n$ performs the same quantum phase and classical phase that $M$ would in this configuration. Clearly, for any $w \in \Sigma^{\leq n}$, $M_n$ and $M$ have the same probability of acceptance and expected running time.
	
	Let $k_n=\lvert Q_n \rvert =\lvert P \rvert 2^{S(n)}$ denote the number of quantum basis states of $M_n$ and let $d_n=\lvert C_n \rvert =\lvert F \rvert \lvert \Gamma \rvert^{S(n)} S(n)^2$ denote the number of classical states of $M_n$. Then, $\exists b_0 \in \mathbb{R}_{>0}, \exists \widehat{N}_0 \in \mathbb{N}$ such that, $\forall n \geq \widehat{N}_0$, we have $k_n^4 d_n^2 \leq 2^{b_0 S(n)}$. Moreover, as $S(n)=o(\log \log D_L(n))$, $\exists \widetilde{N}_0 \in \mathbb{N}$ such that, $\forall n \geq \widetilde{N}_0$, $D_L(n)^{2^{-b_0 S(n)}} \geq 2$. Set $N_0=\max(\widehat{N}_0,\widetilde{N}_0)$. For any $n \geq N_0$, we may then construct $X_n \subseteq \Sigma^{\leq n}$ such that $\lvert X_n \rvert=D_L(n) \geq 2$ and the elements of $X_n$ are pairwise $(L,n)$-dissimilar. By Lemma~\ref{thm:2qcfa:largeSetClosePair}, $\exists x,x' \in X_n$ such that $x \neq x'$ and $$\lVert N_{x,m}^{\Balancedhookarrowright}-N_{x',m}^{\Balancedhookarrowright} \rVert_1 \leq 4\sqrt{2} k_n^4 d_n^2 \left(D_L(n)^{\frac{1}{k_n^4 d_n^2}}-1 \right)^{-1} \leq (4\sqrt{2})2^{b_0 S(n)} \left(D_L(n)^{2^{-b_0 S(n)}}-1 \right)^{-1}.$$   
	Let $a_{\epsilon}=\frac{(1-2\epsilon)^2}{2} \in \mathbb{R}_{>0}$. By Lemma~\ref{thm:2qcfa:runTimeCrossSequenceDist}, \[T(n) \geq a_{\epsilon}\lVert N_{x,m}^{\Balancedhookarrowright}-N_{x',m}^{\Balancedhookarrowright} \rVert_1^{-1} \geq \frac{a_{\epsilon}}{4 \sqrt{2}} 2^{-b_0 S(n)} \left(D_L(n)^{2^{-b_0 S(n)}}-1 \right)\geq \frac{a_{\epsilon}}{8 \sqrt{2}} 2^{-b_0 S(n)} D_L(n)^{2^{-b_0 S(n)}}.\qedhere\]
\end{proof}

\begin{remark}
	Recall that, for any language $L$, $D_L(n)=2^{O(n)}$; therefore, the supposition of the above theorem that $S(n)=o(\log \log D_L(n))$ implies $S(n)=o(\log n)$, and so this theorem only applies to QTMs that use sublogarithmic space. Moreover, this requirement also implies that $D_L(n)=\omega(1)$, and hence $L \not \in \mathsf{REG}$ \cite[Lemma 3.1]{dwork1990time}; of course, for any $L \in \mathsf{REG}$, we trivially have $L \in \mathsf{BQTISP}(n,O(1))$. 
\end{remark}

Note that, if $S(n)=o(\log n)$, then for any constants $b_1,b_2 \in \mathbb{R}_{>0}$, $2^{-b_1 S(n)} \geq n^{-b_2}$, for all sufficiently large $n$. We therefore obtain the following corollary.

\begin{corollary}\label{thm:qtm:coarseLowerBoundRunningTime}
	If $D_L(n) =2^{\Omega(n)}$, then $L \not \in \mathsf{BQTISP}\left(2^{n^{1-\Omega(1)}},o(\log n)\right)$. In particular, as $D_{L_{pal}}(n) =2^{\Omega(n)}$, $L_{pal} \not \in \ \mathsf{BQTISP}\left(2^{n^{1-\Omega(1)}},o(\log n)\right)$. 
\end{corollary}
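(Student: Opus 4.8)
The plan is to argue by contradiction, reducing the claim directly to the quantitative bound of \Cref{thm:qtm:lowerBoundRunningTime}. Suppose, for contradiction, that $L \in \mathsf{BQTISP}(2^{n^{1-\Omega(1)}},o(\log n))$; unwinding the definition of this class, there is a constant $\delta \in (0,1)$ and a space bound $S(n)=o(\log n)$ such that $L \in \mathsf{BQTISP}(T(n),S(n))$ for some $T$ with $T(n)\leq 2^{n^{1-\delta}}$ for all sufficiently large $n$. The first step is to verify that the hypotheses of \Cref{thm:qtm:lowerBoundRunningTime} are met. Since $D_L(n)=2^{\Omega(n)}$, there is $c>0$ with $D_L(n)\geq 2^{cn}$ for large $n$, and since $D_L(n)=2^{O(n)}$ always holds, we get $\log\log D_L(n)=\Theta(\log n)$; hence $S(n)=o(\log n)$ gives $S(n)=o(\log\log D_L(n))$, exactly as the theorem requires.

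Applying \Cref{thm:qtm:lowerBoundRunningTime} then yields a constant $b_0>0$ with
\[ T(n)=\Omega\!\big(2^{-b_0 S(n)}\,D_L(n)^{2^{-b_0 S(n)}}\big). \]
The next step converts the awkward factor $2^{-b_0 S(n)}$ into a clean polynomial lower bound using the observation recorded immediately before the statement: because $S(n)=o(\log n)$, for every constant $b_2>0$ one has $2^{-b_0 S(n)}\geq n^{-b_2}$ for all sufficiently large $n$. Substituting this estimate, both into the prefactor and into the exponent, together with $D_L(n)\geq 2^{cn}$, gives, for large $n$,
\[ T(n)=\Omega\!\big(n^{-b_2}\,2^{c\,n^{1-b_2}}\big). \]

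To close the argument I would choose the free constant $b_2$ in terms of the fixed $\delta$, taking any $b_2<\delta$ (for instance $b_2=\delta/2$). Then $1-b_2>1-\delta$, so the exponent $c\,n^{1-b_2}$ grows strictly faster than $n^{1-\delta}$, and the polynomial prefactor $n^{-b_2}$ is negligible against this exponential; the displayed lower bound therefore forces $T(n)\geq 2^{n^{1-\delta}}$ for all large $n$, contradicting the assumed upper bound $T(n)\leq 2^{n^{1-\delta}}$. This contradiction establishes the first claim. The ``in particular'' clause is then immediate: the proof of \Cref{thm:2qcfa:palindromeLowerBoundRunningTime} already showed $D_{L_{pal}}(2n)\geq 2^n$, i.e.\ $D_{L_{pal}}(n)=2^{\Omega(n)}$, so $L_{pal}$ falls under the first claim. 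There is no deep obstacle here, as this is essentially a bookkeeping corollary of \Cref{thm:qtm:lowerBoundRunningTime}; the only real subtlety is the order of the quantifiers, since the $\Omega(1)$ in $2^{n^{1-\Omega(1)}}$ pins down $\delta$ \emph{first}, after which $b_2$ must be chosen depending on $\delta$ so that $1-b_2$ strictly exceeds $1-\delta$. One must also confirm that the mild polynomial loss in $2^{-b_0 S(n)}\geq n^{-b_2}$ cannot erode the exponential gain, which holds precisely because $1-b_2>0$.
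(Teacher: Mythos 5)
Your proposal is correct and follows essentially the same route as the paper: the paper derives this corollary directly from \Cref{thm:qtm:lowerBoundRunningTime} together with the observation, stated immediately before the corollary, that $S(n)=o(\log n)$ implies $2^{-b_0 S(n)}\geq n^{-b_2}$ for any constant $b_2>0$ and all large $n$. Your write-up merely makes explicit the details the paper leaves implicit — checking $S(n)=o(\log\log D_L(n))$ via $\log\log D_L(n)=\Theta(\log n)$, and fixing $\delta$ before choosing $b_2<\delta$ — all of which is sound.
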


\begin{remark}
	Of course, $L_{pal}$ can be recognized by a \textit{deterministic} TM in $O(\log n)$ space (and, trivially, polynomial time). Therefore, the previous corollary exhibits a natural problem for which polynomial time \textit{quantum} TM cannot outperform polynomial time \textit{deterministic} TM in terms of the amount of space used.
\end{remark}

\section{The Word Problem of a Group}\label{sec:ComplexityWordProb}

We begin by formally defining the word problem of a group; for further background, see, for instance \cite{loh2017geometric}. For a set $S$, let $F(S)$ denote the free group on $S$. For sets $S,R$ such that $R \subseteq F(S)$, let $N$ denote the normal closure of $R$ in $F(S)$; for a group $G$, if $G \cong F(S)/N$, then we say that $G$ \textit{has presentation} $\langle S|R \rangle$, which we denote by writing $G=\langle S|R \rangle$. Suppose $G=\langle S|R \rangle$, with $S$ finite; we now define $W_{G=\langle S|R \rangle}$, \textit{the word problem of} $G$ \textit{with respect to the presentation} $\langle S|R \rangle$. We define the set of formal inverses $S^{-1}$, such that, for each $s \in S$, there is a unique corresponding $s^{-1} \in S^{-1}$, and $S \cap S^{-1}= \emptyset$. Let $\Sigma=S \sqcup S^{-1}$, let $\Sigma^*$ denote the free monoid over $\Sigma$, and let $\phi:\Sigma^* \rightarrow G$ be the natural (monoid) homomorphism that takes each string in $\Sigma^*$ to the element of $G$ that it represents. We use $1_G$ to denote the identity element of $G$. Then $W_{G=\langle S|R \rangle}=\phi^{-1}(1_G)$. Note that the definition of the word problem does depend on the choice presentation. However, if $\mathcal{L}$ is any complexity class that is closed under inverse homomorphism, then if $\langle S|R \rangle$ and $\langle S'|R' \rangle$ are both presentations of some group $G$, and $S$ and $S'$ are both finite, then $W_{G=\langle S|R \rangle} \in \mathcal{L} \Leftrightarrow W_{G=\langle S'|R' \rangle} \in \mathcal{L}$ \cite{herbst1991subclass}. As all complexity classes considered in this paper are easily seen to be closed under inverse homomorphism, we will simply write $W_G \in \mathcal{L}$ to mean that $W_{G=\langle S|R \rangle} \in \mathcal{L}$, for every presentation $G=\langle S|R \rangle$, with $S$ finite. We note that the languages $L_{pal}$ and $L_{eq}$, which Ambainis and Watrous \cite{ambainis2002two} showed satisfy $L_{pal} \in \mathsf{coRQE2QCFA}_{\overline{\mathbb{Q}}}$ and $L_{eq} \in \mathsf{BQP2QCFA}$, are closely related to the word problems of the groups $F_2$ and $\mathbb{Z}$, respectively.

\subsection{The Growth Rate of a Group and Nonregularity}\label{sec:ComplexityWordProb:growthRate}

Consider a group $G = \langle S|R \rangle$, with $S$ finite. Define $\Sigma$ and $\phi$ as in the previous section. For $g \in G$, let $l_S(g)$ denote the smallest $m \in \mathbb{N}$ such that $\exists \sigma_1,\ldots,\sigma_m \in \Sigma$ such that $g=\phi(\sigma_1 \cdots \sigma_m)$. For $n \in \mathbb{N}$, we define $B_{G,S}(n)=\{g \in G:l_S(g) \leq n\}$ and we further define $\beta_{G,S}(n)=\lvert B_{G,S}(n) \rvert$, which we call the \textit{growth rate of} $G$ \textit{with respect to} $S$. The following straightforward lemma demonstrates an important relationship between $\beta_{G,S}$ and $D_{W_{G=\langle S | R \rangle}}$.

\begin{lemma}\label{thm:growthRateDistinguish}
	Suppose $G = \langle S|R \rangle$ with $S$ finite. Using the notation established above, let $W_G:=W_{G=\langle S |R \rangle}=\phi^{-1}(1_G)$ denote the word problem of $G$ with respect to this presentation. Then, $\forall n \in \mathbb{N}$, $D_{W_G}(2n) \geq \beta_{G,S}(n)$.
\end{lemma}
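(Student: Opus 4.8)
The plan is to exhibit, for each $n$, a family of $\beta_{G,S}(n)$ strings in $\Sigma^{\leq 2n}$ that are pairwise $(W_G,2n)$-dissimilar; by the definition of $D_{W_G}$ as the largest size of such a family, this immediately yields $D_{W_G}(2n) \geq \beta_{G,S}(n)$. The natural family is indexed by the group elements in the ball $B_{G,S}(n)$: for each $g \in B_{G,S}(n)$ I would fix a geodesic representative $x_g \in \Sigma^*$ with $\phi(x_g)=g$ and $\lvert x_g \rvert = l_S(g) \leq n$, so that each $x_g \in \Sigma^{\leq n} \subseteq \Sigma^{\leq 2n}$. Since $\phi(x_g)=g$, distinct group elements yield distinct strings, so this is a family of exactly $\beta_{G,S}(n)$ strings.

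The key reduction is that membership in $W_G$ of a concatenation is controlled entirely by the group structure: $xy \in W_G$ if and only if $\phi(x)\phi(y)=1_G$, i.e.\ if and only if $\phi(y)=\phi(x)^{-1}$. So to witness $x_g \not \sim_{W_G,2n} x_{g'}$ for distinct $g,g' \in B_{G,S}(n)$, I would choose $y$ to be the formal inverse of the string $x_g$ (reverse $x_g$ and replace each symbol by its formal inverse in $\Sigma = S \sqcup S^{-1}$), so that $\phi(y)=g^{-1}$ and $\lvert y \rvert = \lvert x_g \rvert \leq n$. Then $\phi(x_g y)=g g^{-1}=1_G$ gives $x_g y \in W_G$, while $\phi(x_{g'} y)=g' g^{-1}\neq 1_G$ (since $g'\neq g$) gives $x_{g'} y \notin W_G$, which is exactly the exclusive-or condition required for dissimilarity.

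Finally I would verify the length bookkeeping built into the definition of $(W_G,2n)$-dissimilarity, namely that the witness $y$ must lie in $\Sigma^{\leq 2n-\max(\lvert x_g \rvert,\lvert x_{g'} \rvert)}$. Since $\lvert x_g \rvert, \lvert x_{g'} \rvert \leq n$, we have $2n-\max(\lvert x_g \rvert,\lvert x_{g'} \rvert) \geq n \geq \lvert y \rvert$, so the constraint holds; this is precisely why the statement is phrased with argument $2n$ rather than $n$, as one needs room for both a prefix of length up to $n$ and a suffix of length up to $n$. Combining these observations shows the $x_g$ are pairwise $(W_G,2n)$-dissimilar and establishes the lemma.

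There is essentially no obstacle here; the only subtle point to check is the symmetry of word length under inversion, $l_S(g^{-1})=l_S(g)$, which holds because $\Sigma$ is closed under formal inverses and is exactly what guarantees the witness $y$ is short enough to satisfy the length bound above.
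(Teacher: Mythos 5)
Your proof is correct and is essentially the paper's own argument: both take geodesic representatives $w_g$ of the elements of $B_{G,S}(n)$ and use the formal-inverse string $w_g^{-1}$ as the dissimilarity witness, checking that all concatenations have length at most $2n$. (One tiny remark: you do not even need $l_S(g^{-1})=l_S(g)$; the witness $y$ is the formal inverse of the \emph{string} $x_g$, so $\lvert y\rvert=\lvert x_g\rvert\leq n$ holds by construction regardless of whether $y$ is geodesic.)
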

\begin{proof}
	Fix $n \in \mathbb{N}$, let $k=\beta_{G,S}(n)$, and let $B_{G,S}(n)=\{g_1,\ldots,g_k\}$. For a string $x=x_1 \cdots x_m \in \Sigma^*$, where each $x_j \in \Sigma$, let $\lvert x \rvert=m$ denote the (string) length of $x$ and define $x^{-1}=x_m^{-1} \cdots x_1^{-1}$. Note that, $\forall g \in G$, $l_S(g)=\min_{w \in \phi^{-1}(g)} \lvert w \rvert$. Therefore, for each $i \in \{1,\ldots,k\}$ we may define $w_i \in \phi^{-1}(g_i)$ such that $\lvert w_i \rvert =l_S(g_i)$. Observe that $w_i w_i^{-1} \in W_G$ and $\lvert w_i w_i^{-1}\lvert=2\lvert w_i \lvert=2l_S(g_i)\leq 2n$; moreover, for each $j \neq i$, we have  $w_j w_i^{-1} \not \in W_G$ and $\lvert w_j w_i^{-1} \rvert=\lvert w_j \rvert +\lvert w_i \rvert =l_S(g_j)+l_S(g_i)\leq 2n$. Therefore, $w_1,\ldots,w_k$ are pairwise $(W_G,2n)$-dissimilar, which implies $D_{W_G}(2n) \geq k =\beta_{G,S}(n)$.
\end{proof}

\begin{remark}
	In fact, one may also easily show that $D_{W_G}(2n) \leq \beta_{G,S}(n)+1$, though we do not need this here. Essentially, $ \beta_{G,S}(n)$ is (another) equivalent characterization of the nonregularity $D_{W_G}(2n)$ (see Section~\ref{sec:runningTimeLowerBound:nonregularity} for a discussion of the many such characterizations of nonregularity).
\end{remark}

For a pair of non-decreasing functions $f_1,f_2: \mathbb{R}_{\geq 0} \rightarrow \mathbb{R}_{\geq 0}$, we write $f_1 \prec f_2$ if $\exists C_1,C_2 \in \mathbb{R}_{>0}$ such that $\forall r \in \mathbb{R}_{\geq 0}$, $f_1(r) \leq C_1 f_2(C_1 r+C_2)+C_2$;  we write $f_1 \sim f_2$ if both $f_1 \prec f_2$ and $f_2 \prec f_1$. Suppose $\langle S|R \rangle$ and $\langle S'|R' \rangle$ are both presentations of $G$, with $S$ and $S'$ finite. It is straightforward to show that $\beta_{G,S}$ and $\beta_{G,S'}$ are non-decreasing, and that $\beta_{G,S}\sim \beta_{G,S'}$ \cite[Proposition 6.2.4]{loh2017geometric}. For this reason, we will simply write $\beta_G$ to denote the growth rate of $G$. 

\begin{definition2}\label{def:growthTypes}
	Suppose $G$ is a finitely generated group. If $\beta_G \sim (n \mapsto e^n)$, we say $G$ \textit{has exponential growth}. If $\exists c \in \mathbb{R}_{\geq 0}$ such that $\beta_G \prec (n \mapsto n^c)$, we say $G$ \textit{has polynomial growth}. Otherwise, we say $G$ \textit{has intermediate growth}. Note that, for any finitely generated group $G$, we have $\beta_G \prec (n \mapsto e^n)$, and so the term ``intermediate growth'' is justified. 
\end{definition2}

\subsection{Word Problems Recognizable by 2QCFA and Small-Space QTMs}\label{sec:ComplexityWordProb:recBy2QCFAandSmallQTM}

By making use of two very powerful results in group theory, the Tits' Alternative \cite{tits1972free} and Gromov's theorem on groups of polynomial growth \cite{gromov1981groups}, we exhibit useful lower bounds on $D_{W_G}$, which in turn allows us to show a strong lower bound on the expected running time of a 2QCFA that recognizes $W_G$. 

\begin{theorem}\label{thm:2qcfa:wordProbRunTimeMain}
	For any finitely generated group $G$, the following statements hold.
	\begin{enumerate}[(i)]
		\item\label{thm:2qcfa:wordProbRunTimeMain:basic} If $W_G \in \mathsf{B2QCFA}(k,d,T(n),\epsilon)$, then $\beta_G \prec (n \mapsto T(n)^{k^4 d^2})$.
		\item\label{thm:2qcfa:wordProbRunTimeMain:exp} If $G$ has exponential growth, then $W_G \not \in \mathsf{B2QCFA}(2^{o(n)})$.
		\item\label{thm:2qcfa:wordProbRunTimeMain:linGroupNotVirtNilp} If $G$ is a linear group over a field of characteristic $0$, and $G$ is not virtually nilpotent, then $W_G \not \in \mathsf{B2QCFA}(2^{o(n)})$.
		\item\label{thm:2qcfa:wordProbRunTimeMain:poly} If $W_G \in \mathsf{BQP2QCFA}$, then $G$ is virtually nilpotent.
	\end{enumerate}
\end{theorem}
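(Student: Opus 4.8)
The plan is to reduce all four parts to the nonregularity lower bounds already established, using \Cref{thm:growthRateDistinguish} as the bridge between the growth rate $\beta_G$ and the Dwork--Stockmeyer measure $D_{W_G}$, and then invoking the deep structural dichotomies of geometric group theory to convert growth conditions into algebraic conditions. Throughout I would fix any finite presentation $\langle S \mid R\rangle$ of $G$; since every complexity class in play is closed under inverse homomorphism and $\beta_G$ is well-defined up to $\sim$, the choice of presentation is immaterial.

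For part~(\ref{thm:2qcfa:wordProbRunTimeMain:basic}), I would set $h = k^4 d^2$. \Cref{thm:2qcfa:lowerBoundRunningTime} rearranges to give a constant $K = \big(\tfrac{16\sqrt{2}\,h}{(1-2\epsilon)^2}\big)^{h}$ and an $N_0$ with $D_{W_G}(n) \leq K\, T(n)^{h}$ for all $n \geq N_0$, while \Cref{thm:growthRateDistinguish} gives $\beta_{G,S}(n) \leq D_{W_G}(2n)$. Chaining these yields $\beta_{G,S}(n) \leq K\, T(2n)^{h}$ for all sufficiently large $n$; the finitely many small values of $n$ are absorbed into the additive constant of the $\prec$ relation, while $K$ and the argument-doubling are absorbed into its multiplicative constant, giving $\beta_G \prec (n \mapsto T(n)^{h})$.

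For part~(\ref{thm:2qcfa:wordProbRunTimeMain:exp}), exponential growth means $(n \mapsto e^n) \prec \beta_G$, which unwinds to $\beta_{G,S}(n) = 2^{\Omega(n)}$; by \Cref{thm:growthRateDistinguish} this forces $D_{W_G}(n) = 2^{\Omega(n)}$, and \Cref{thm:2qcfa:expNonregularity} immediately gives $W_G \notin \mathsf{B2QCFA}(2^{o(n)})$. Part~(\ref{thm:2qcfa:wordProbRunTimeMain:linGroupNotVirtNilp}) then reduces to part~(\ref{thm:2qcfa:wordProbRunTimeMain:exp}) via the Tits Alternative \cite{tits1972free}: a finitely generated linear group either contains a nonabelian free subgroup---forcing exponential growth, since $F_2$ has exponential growth and a subgroup of exponential growth does too---or is virtually solvable, in which case the Milnor--Wolf theorem says it has polynomial growth exactly when virtually nilpotent and exponential growth otherwise. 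As $G$ is assumed not virtually nilpotent, both branches yield exponential growth, so part~(\ref{thm:2qcfa:wordProbRunTimeMain:exp}) applies. Finally, for part~(\ref{thm:2qcfa:wordProbRunTimeMain:poly}), \Cref{thm:2qcfa:polyTime} gives $D_{W_G}(n) = n^{O(1)}$, whence \Cref{thm:growthRateDistinguish} yields $\beta_{G,S}(n) \leq D_{W_G}(2n) = n^{O(1)}$, i.e.\ $G$ has polynomial growth; Gromov's theorem \cite{gromov1981groups} then concludes that $G$ is virtually nilpotent.

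The genuinely hard content lives entirely in the cited group-theoretic results---the Tits Alternative, Milnor--Wolf, and Gromov's theorem---rather than in the argument itself, which is a routine chaining of already-established inequalities. The only real technical care needed is in the $\prec$/$\sim$ bookkeeping: verifying that the polynomial and exponential bounds pass cleanly through the quasi-equivalence relation, and checking that the ``$\forall n \geq N_0$'' caveat in \Cref{thm:2qcfa:lowerBoundRunningTime} is harmless because it constrains only finitely many values of $n$, exactly what the additive constant in the definition of $\prec$ is designed to absorb.
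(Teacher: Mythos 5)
Your proof is correct and takes essentially the same approach as the paper's: each part is the chain of a nonregularity bound (from \Cref{thm:2qcfa:lowerBoundRunningTime} and its corollaries) with \Cref{thm:growthRateDistinguish}, followed by the Tits' Alternative together with Milnor--Wolf for part (iii) and Gromov's theorem for part (iv). The only differences are cosmetic: you cite \Cref{thm:2qcfa:lowerBoundRunningTime}, \Cref{thm:2qcfa:expNonregularity}, and \Cref{thm:2qcfa:polyTime} directly where the paper routes parts (ii)--(iv) through part (i), and you spell out the explicit constants and the free-subgroup/virtually-solvable case split that the paper compresses into its citations.
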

\begin{proof}
	\begin{enumerate}[(i)]
		\item Follows immediately from Lemma~\ref{thm:growthRateDistinguish} and Corollary~\ref{thm:2qcfa:coarseLowerBoundRunningTime}.
		\item Follows immediately from Definition~\ref{def:growthTypes} and part (i) of this theorem.
		\item As a consequence of the famous Tits' Alternative \cite{tits1972free}, every finitely generated linear group over a field of characteristic $0$ either has polynomial growth or exponential growth, and has polynomial growth precisely when it is virtually nilpotent (\cite[Corollary 1]{tits1972free},\cite{wolf1968growth}). The claim then follows by part (ii) of this theorem.
		\item If $W_G \in \mathsf{BQP2QCFA}$, then $W_G \in \mathsf{B2QCFA}(k,d,n^c,\epsilon)$ for some $k,d,c \in \mathbb{N}_{\geq 1},\epsilon \in [0,\frac{1}{2})$. By part (i) of this theorem, $\beta_G \prec (n \mapsto n^{ck^4 d^2})$, which implies $G$ has polynomial growth. By Gromov's theorem on groups of polynomial growth \cite{gromov1981groups}, a finitely generated group has polynomial growth precisely when it is virtually nilpotent. \qedhere
	\end{enumerate}
\end{proof}

\begin{remark}
	All \textit{known} $G$ of intermediate growth have $\beta_G \sim (n \mapsto e^{n^c})$, for some $c \in (1/2,1)$. Therefore, a strong lower bound may be established on the running time of any 2QCFA that recognizes $W_G$, for any known group of intermediate growth. We also note that one may show that the conclusion of \Cref{thm:2qcfa:wordProbRunTimeMain}(\ref{thm:2qcfa:wordProbRunTimeMain:poly}) still holds even if $W_G$ is only assumed to be recognized in slightly super-polynomial time. In particular, by a quantitative version of Gromov's theorem due to Shalom and Tal \cite[Corollary 1.10]{shalom2010finitary}, $\exists c \in \mathbb{R}_{>0}$ such that if $\beta_{G,S}(n) \leq n^{c(\log \log n)^c}$, for some $n>1/c$, then $G$ is virtually nilpotent. 
\end{remark}

Let $\mathcal{G}_{\textbf{vAb}}$ (resp. $\mathcal{G}_{\textbf{vNilp}}$) denote the collection of all finitely generated virtually abelian (resp. nilpotent) groups. Let $\U(k,\overline{\mathbb{Q}})$ denote the group of $k \times k$ unitary matrices with algebraic number entries, and let $\mathcal{U}$ consist of all finitely generated subgroups of any $\U(k,\overline{\mathbb{Q}})$. We have recently shown that if $G \in \mathcal{U}$, then $W_G \in \mathsf{coRQE2QCFA}_{\overline{\mathbb{Q}}}$ \cite[Corollary 1.4.1]{remscrim2019power}. Observe that $\mathcal{G}_{\textbf{vAb}} \subseteq \mathcal{U}$ and that all groups in $\mathcal{U}$ are finitely generated linear groups over a field of characteristic zero. Moreover, $\mathcal{U} \cap \mathcal{G}_{\textbf{vNilp}}= \mathcal{G}_{\textbf{vAb}}$ \cite[Proposition 2.2]{thom2013convergent}. We, therefore, obtain the following corollary of \Cref{thm:2qcfa:wordProbRunTimeMain}, which exhibits a broad and natural class of languages that a 2QCFA can recognize in exponential time, but not in subexponential time. 

\begin{corollary}\label{thm:2qcfa:wordProbExp}
	$\forall G \in \mathcal{U} \setminus \mathcal{G}_{\textbf{vAb}}$, we have $W_G \in \mathsf{coRQE2QCFA}_{\overline{\mathbb{Q}}}$ but $W_G \not \in \mathsf{B2QCFA}(2^{o(n)})$.
\end{corollary}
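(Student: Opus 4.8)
The statement is a conjunction, and the two halves are proved by entirely separate means. The first half, $W_G \in \mathsf{coRQE2QCFA}_{\overline{\mathbb{Q}}}$, I would simply cite: by hypothesis $G \in \mathcal{U}$, and the earlier-stated result \cite[Corollary 1.4.1]{remscrim2019power} asserts exactly that every $G \in \mathcal{U}$ has $W_G \in \mathsf{coRQE2QCFA}_{\overline{\mathbb{Q}}}$. Since $\mathcal{U} \setminus \mathcal{G}_{\textbf{vAb}} \subseteq \mathcal{U}$, this inclusion holds for the groups in question with no further work. So the entire content of the corollary lies in the second half, the lower bound $W_G \not\in \mathsf{B2QCFA}(2^{o(n)})$.

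The plan for the second half is to invoke \Cref{thm:2qcfa:wordProbRunTimeMain}(\ref{thm:2qcfa:wordProbRunTimeMain:linGroupNotVirtNilp}), which says that a finitely generated linear group over a field of characteristic $0$ that is not virtually nilpotent has $W_G \not\in \mathsf{B2QCFA}(2^{o(n)})$. To apply it I must verify that every $G \in \mathcal{U} \setminus \mathcal{G}_{\textbf{vAb}}$ satisfies both hypotheses. The linearity-over-characteristic-$0$ hypothesis is immediate from the definition of $\mathcal{U}$: every $G \in \mathcal{U}$ is a finitely generated subgroup of some $\U(k,\overline{\mathbb{Q}})$, hence a finitely generated linear group over $\overline{\mathbb{Q}} \subseteq \mathbb{C}$, a field of characteristic $0$. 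The remaining hypothesis, that $G$ is \emph{not} virtually nilpotent, is where I must use the assumption $G \notin \mathcal{G}_{\textbf{vAb}}$ together with the cited fact $\mathcal{U} \cap \mathcal{G}_{\textbf{vNilp}} = \mathcal{G}_{\textbf{vAb}}$ \cite[Proposition 2.2]{thom2013convergent}.

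The key deduction is as follows. Suppose, for contradiction, that some $G \in \mathcal{U} \setminus \mathcal{G}_{\textbf{vAb}}$ were virtually nilpotent, i.e. $G \in \mathcal{G}_{\textbf{vNilp}}$. Then $G \in \mathcal{U} \cap \mathcal{G}_{\textbf{vNilp}}$, which by \cite[Proposition 2.2]{thom2013convergent} equals $\mathcal{G}_{\textbf{vAb}}$; hence $G \in \mathcal{G}_{\textbf{vAb}}$, contradicting $G \in \mathcal{U} \setminus \mathcal{G}_{\textbf{vAb}}$. Therefore $G$ is not virtually nilpotent. With both hypotheses of \Cref{thm:2qcfa:wordProbRunTimeMain}(\ref{thm:2qcfa:wordProbRunTimeMain:linGroupNotVirtNilp}) verified, that theorem yields $W_G \not\in \mathsf{B2QCFA}(2^{o(n)})$, completing the proof.

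I do not anticipate a genuine obstacle here: the corollary is a clean packaging of three previously established facts (the upper bound from \cite{remscrim2019power}, the lower bound from part~(\ref{thm:2qcfa:wordProbRunTimeMain:linGroupNotVirtNilp}) of the preceding theorem, and the group-theoretic identity $\mathcal{U} \cap \mathcal{G}_{\textbf{vNilp}} = \mathcal{G}_{\textbf{vAb}}$). The only point demanding any care is making sure the set-theoretic manipulation of $\mathcal{U} \setminus \mathcal{G}_{\textbf{vAb}}$ against $\mathcal{G}_{\textbf{vNilp}}$ is done correctly, so that ``not virtually abelian within $\mathcal{U}$'' is correctly translated into ``not virtually nilpotent''; the intersection identity is precisely what licenses that translation, and without it the implication would fail.
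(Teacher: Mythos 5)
Your proposal is correct and follows essentially the same route as the paper: the paper derives the corollary from \cite[Corollary 1.4.1]{remscrim2019power} for the upper bound, the observation that every group in $\mathcal{U}$ is a finitely generated linear group over a field of characteristic $0$, the identity $\mathcal{U} \cap \mathcal{G}_{\textbf{vNilp}} = \mathcal{G}_{\textbf{vAb}}$ \cite[Proposition 2.2]{thom2013convergent}, and \Cref{thm:2qcfa:wordProbRunTimeMain}(\ref{thm:2qcfa:wordProbRunTimeMain:linGroupNotVirtNilp}). Your explicit contradiction argument translating ``not virtually abelian'' into ``not virtually nilpotent'' is exactly the step the paper leaves implicit.
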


We have also recently shown that $W_G \in \mathsf{coRQP2QCFA}_{\overline{\mathbb{Q}}}(2) \subseteq \mathsf{BQP2QCFA}$, $\forall G \in \mathcal{G}_{\textbf{vAb}}$ \cite[Theorem 1.2]{remscrim2019power}. By \Cref{thm:2qcfa:wordProbRunTimeMain}, if $W_G \in \mathsf{BQP2QCFA}$, then $G \in \mathcal{G}_{\textbf{vNilp}}$. This naturally raises the question of whether or not there is some $G \in \mathcal{G}_{\textbf{vNilp}} \setminus \mathcal{G}_{\textbf{vAb}}$ such that $W_G \in \mathsf{BQP2QCFA}$. Consider the (three-dimensional discrete) Heisenberg group $H=\langle x,y,z|z=[x,y],[x,z]=[y,z]=1 \rangle$. $W_H$ is a natural choice for a potential ``hard'' word problem for 2QCFA, due to the lack of faithful finite-dimensional unitary representations of $H$ (see \cite{remscrim2019power} for further discussion). We next show that if $W_H  \not \in \mathsf{BQP2QCFA}$, then we have a complete classification of those word problems recognizable by 2QCFA in polynomial time.

\begin{proposition}\label{thm:2qcfa:wordProbHeisenberg}
	If $W_H  \not \in \mathsf{BQP2QCFA}$, then $W_G \in \mathsf{BQP2QCFA} \Leftrightarrow G \in \mathcal{G}_{\textbf{vAb}}$.
\end{proposition}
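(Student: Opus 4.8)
The plan is to establish a biconditional by proving both directions, leveraging the results already developed in the paper. The forward direction ($G \in \mathcal{G}_{\textbf{vAb}} \Rightarrow W_G \in \mathsf{BQP2QCFA}$) holds unconditionally and follows immediately from the cited result that $W_G \in \mathsf{coRQP2QCFA}_{\overline{\mathbb{Q}}}(2) \subseteq \mathsf{BQP2QCFA}$ for every $G \in \mathcal{G}_{\textbf{vAb}}$ \cite[Theorem 1.2]{remscrim2019power}. This direction requires no hypothesis on $W_H$. The interesting direction is the converse: assuming $W_H \not\in \mathsf{BQP2QCFA}$, I must show that $W_G \in \mathsf{BQP2QCFA}$ forces $G \in \mathcal{G}_{\textbf{vAb}}$.

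First I would invoke \Cref{thm:2qcfa:wordProbRunTimeMain}(\ref{thm:2qcfa:wordProbRunTimeMain:poly}): if $W_G \in \mathsf{BQP2QCFA}$, then $G$ is virtually nilpotent, i.e. $G \in \mathcal{G}_{\textbf{vNilp}}$. So it suffices to rule out every $G \in \mathcal{G}_{\textbf{vNilp}} \setminus \mathcal{G}_{\textbf{vAb}}$. The key structural fact I would use is that the Heisenberg group $H$ embeds (as a subgroup) in every finitely generated virtually nilpotent group that is \emph{not} virtually abelian; equivalently, any such $G$ contains a copy of $H$. This is a standard consequence of the structure theory of nilpotent groups: a torsion-free finitely generated nilpotent group that is not abelian has nilpotency class at least $2$, and the lowest such groups contain $H$ as a subgroup. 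The plan is then to combine this embedding with closure of $\mathsf{BQP2QCFA}$ under the relevant operation: since complexity classes here are closed under inverse homomorphism (as noted in the discussion of $W_G \in \mathcal{L}$), and since the word problem of a subgroup reduces to the word problem of the ambient group in a manner respecting $\mathsf{BQP2QCFA}$, membership $W_G \in \mathsf{BQP2QCFA}$ would propagate to $W_H \in \mathsf{BQP2QCFA}$, contradicting the hypothesis.

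\textbf{The main obstacle} will be establishing cleanly that $W_H \in \mathsf{BQP2QCFA}$ follows from $W_G \in \mathsf{BQP2QCFA}$ when $H \leq G$. The subtlety is that the reduction from a subgroup word problem to the ambient word problem is not literally an inverse homomorphism of free monoids in the most naive sense; one must fix a finite generating set of $H$, express each generator as a word in the generators of $G$, and verify that the induced substitution map gives a length-linear reduction preserving membership in the word problem and hence preserving expected-polynomial-time recognizability. I would argue this via the monoid homomorphism $\Sigma_H^* \to \Sigma_G^*$ sending each generator of $H$ to a fixed word representing it in $G$; this map sends $W_H$ into $W_G$ (and its complement into the complement), increases string length by at most a constant factor, and composing a 2QCFA recognizer for $W_G$ with this substitution yields a 2QCFA recognizing $W_H$ in polynomial time. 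The only care needed is confirming the length-linearity keeps the running time polynomial and that the bounded-error guarantee is preserved under the substitution, both of which are routine once the reduction is set up correctly.

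Putting these together: assuming $W_H \not\in \mathsf{BQP2QCFA}$, any $G$ with $W_G \in \mathsf{BQP2QCFA}$ must be virtually nilpotent by \Cref{thm:2qcfa:wordProbRunTimeMain}(\ref{thm:2qcfa:wordProbRunTimeMain:poly}), and cannot lie in $\mathcal{G}_{\textbf{vNilp}} \setminus \mathcal{G}_{\textbf{vAb}}$ (else it would contain $H$, forcing $W_H \in \mathsf{BQP2QCFA}$, a contradiction); hence $G \in \mathcal{G}_{\textbf{vAb}}$. Combined with the unconditional forward direction, this yields the desired equivalence $W_G \in \mathsf{BQP2QCFA} \Leftrightarrow G \in \mathcal{G}_{\textbf{vAb}}$.
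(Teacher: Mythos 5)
Your proposal is correct and follows essentially the same route as the paper's proof: both directions rest on the already-established facts ($W_G \in \mathsf{BQP2QCFA}$ for $G \in \mathcal{G}_{\textbf{vAb}}$, and $W_G \in \mathsf{BQP2QCFA} \Rightarrow G \in \mathcal{G}_{\textbf{vNilp}}$), and the remaining case $G \in \mathcal{G}_{\textbf{vNilp}} \setminus \mathcal{G}_{\textbf{vAb}}$ is eliminated exactly as in the paper, by noting that every such $G$ contains a subgroup isomorphic to $H$ and that $\mathsf{BQP2QCFA}$ is closed under the inverse-homomorphism reduction carrying $W_H$ to $W_G$ (the paper cites Holt et al.\ for both of these facts, whereas you sketch them directly, including the length-linearity check that keeps the running time polynomial). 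No gaps; this matches the paper's argument.
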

\begin{proof}
	By the above discussion, it suffices to show the following claim: if $W_G \in \mathsf{BQP2QCFA}$, for some $G \in \mathcal{G}_{\textbf{vNilp}} \setminus \mathcal{G}_{\textbf{vAb}}$, then $W_H  \in \mathsf{BQP2QCFA}$. Begin by noting that $\forall G \in \mathcal{G}_{\textbf{vNilp}} \setminus \mathcal{G}_{\textbf{vAb}}$, $G$ has a subgroup isomorphic to $H$ \cite[Theorem 12]{holt2005groups}. It is straightforward to see that $\mathsf{BQP2QCFA}$ is closed under inverse homomorphism and intersection with regular languages. Therefore, if $W_G \in \mathsf{BQP2QCFA}$, then $W_{H} \in \mathsf{BQP2QCFA}$ \cite[Lemma 2]{holt2005groups}. 
\end{proof}

We next obtain the following analogue of Theorem~\ref{thm:2qcfa:wordProbRunTimeMain} for small-space QTMs.

\begin{theorem}\label{thm:qtm:wordProbRunTimeMain}
	For any finitely generated group $G$, the following statements hold.
	\begin{enumerate}[(i)]
		\item\label{thm:qtm:wordProbRunTimeMain:exp} If $G$ has exponential growth, then $W_G \not \in \mathsf{BQTISP}(2^{n^{1-\Omega(1)}},o(\log n))$.
		\item\label{thm:qtm:wordProbRunTimeMain:linGroupNotVirtNilp} If $G$ is a linear group over a field of characteristic $0$, and $G$ is not virtually nilpotent, then $W_G \not \in \mathsf{BQTISP}(2^{n^{1-\Omega(1)}},o(\log n))$.
		\item\label{thm:qtm:wordProbRunTimeMain:poly} If $W_G \in \mathsf{BQTISP}(n^{O(1)},o(\log \log \log n))$, then $G$ is virtually nilpotent.
	\end{enumerate}
\end{theorem}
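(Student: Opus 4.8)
The plan is to mirror the proof of \Cref{thm:2qcfa:wordProbRunTimeMain}, substituting the QTM running-time bounds for the 2QCFA ones. Parts~(\ref{thm:qtm:wordProbRunTimeMain:exp}) and~(\ref{thm:qtm:wordProbRunTimeMain:linGroupNotVirtNilp}) require essentially no new ideas. For~(\ref{thm:qtm:wordProbRunTimeMain:exp}), if $G$ has exponential growth then, by \Cref{def:growthTypes}, $\beta_{G,S}(n)=2^{\Omega(n)}$; combining this with \Cref{thm:growthRateDistinguish} gives $D_{W_G}(n)=2^{\Omega(n)}$, and \Cref{thm:qtm:coarseLowerBoundRunningTime} then yields $W_G \notin \mathsf{BQTISP}(2^{n^{1-\Omega(1)}},o(\log n))$. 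For~(\ref{thm:qtm:wordProbRunTimeMain:linGroupNotVirtNilp}), the Tits' Alternative \cite{tits1972free} shows that a finitely generated linear group over a characteristic-$0$ field that is not virtually nilpotent has exponential growth, so part~(\ref{thm:qtm:wordProbRunTimeMain:exp}) applies directly.

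The only substantive work is part~(\ref{thm:qtm:wordProbRunTimeMain:poly}), which I would prove by contraposition: assume $W_G \in \mathsf{BQTISP}(n^{O(1)},o(\log\log\log n))$ but that $G$ is not virtually nilpotent, and derive a contradiction. Plain Gromov \cite{gromov1981groups} only gives that $\beta_G$ is super-polynomial, i.e.\ $\log\beta_{G,S}(n)=\omega(\log n)$ with an arbitrarily slowly growing quotient; this turns out to be too weak for the $o(\log\log\log n)$ space regime. Instead I would invoke the quantitative form of Gromov's theorem due to Shalom and Tal \cite[Corollary 1.10]{shalom2010finitary} (the same tool flagged in the remark after \Cref{thm:2qcfa:wordProbRunTimeMain}): there is a constant $c>0$ such that, since $G$ is not virtually nilpotent, $\beta_{G,S}(n)>n^{c(\log\log n)^c}$ for all large $n$. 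Via \Cref{thm:growthRateDistinguish} this gives the concrete rate $\log D_{W_G}(n)=\Omega\big((\log\log n)^c\log n\big)$.

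I would then feed this into \Cref{thm:qtm:lowerBoundRunningTime} with $L=W_G$. First one must check its hypothesis $S(n)=o(\log\log D_{W_G}(n))$: since $D_{W_G}(n)$ is now super-polynomial, $\log\log D_{W_G}(n)\geq\log\log\log n$ eventually, and $S(n)=o(\log\log\log n)$ by assumption, so the hypothesis holds. The theorem then furnishes some $b_0>0$ with
\[ \log T(n) \geq 2^{-b_0 S(n)}\log D_{W_G}(n) - b_0 S(n) - O(1). \]
The crux of the calculation is that $S(n)=o(\log\log\log n)$ forces $2^{-b_0 S(n)}\geq(\log\log n)^{-\delta}$ eventually, for every fixed $\delta>0$. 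Choosing $\delta=c/2$ and combining with the Shalom--Tal bound yields $2^{-b_0 S(n)}\log D_{W_G}(n)=\Omega\big((\log\log n)^{c/2}\log n\big)=\omega(\log n)$, while the subtracted term $b_0 S(n)=o(\log n)$ is negligible. Hence $\log T(n)=\omega(\log n)$, contradicting $T(n)=n^{O(1)}$. Gromov's theorem (polynomial growth $\Leftrightarrow$ virtually nilpotent) then gives the claim.

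The main obstacle, and the reason the space bound here is the unusual $o(\log\log\log n)$ rather than the $o(\log n)$ of the other parts, is precisely the balancing act between the damping factor $2^{-b_0 S(n)}$ in \Cref{thm:qtm:lowerBoundRunningTime} and the available growth margin. That factor can shrink the effective exponent on $D_L(n)$; with only $o(\log\log\log n)$ space it shrinks by at most $(\log\log n)^{o(1)}$, which the $(\log\log n)^c$ margin from Shalom--Tal can absorb, whereas $o(\log\log n)$ space would shrink it by $(\log n)^{o(1)}$ and swamp any fixed power of $\log\log n$. Getting this quantitative matchup to line up, together with verifying the $S(n)=o(\log\log D_L(n))$ hypothesis, is where the care is needed.
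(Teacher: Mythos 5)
Your proposal is correct and follows essentially the same route as the paper: parts (\ref{thm:qtm:wordProbRunTimeMain:exp}) and (\ref{thm:qtm:wordProbRunTimeMain:linGroupNotVirtNilp}) combine \Cref{thm:growthRateDistinguish}, \Cref{thm:qtm:coarseLowerBoundRunningTime}, and the Tits' Alternative exactly as in the paper, and part (\ref{thm:qtm:wordProbRunTimeMain:poly}) uses the same two ingredients the paper does, namely \Cref{thm:qtm:lowerBoundRunningTime} and the Shalom--Tal quantitative Gromov theorem via \Cref{thm:growthRateDistinguish}. The only difference is that you argue by contraposition (non-nilpotent $\Rightarrow$ growth lower bound $\Rightarrow$ superpolynomial time) while the paper runs the implication forward; your version is the contrapositive of the paper's argument, with the added benefit of making the verification of the hypothesis $S(n)=o(\log\log D_L(n))$ explicit.
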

\begin{proof}
	\begin{enumerate}[(i)]
		\item Follows immediately from Corollary~\ref{thm:qtm:coarseLowerBoundRunningTime} and Lemma~\ref{thm:growthRateDistinguish}.
		\item The claim follows from the Tits' Alternative \cite{tits1972free} and the first part of this theorem.
		\item If $W_G \in \mathsf{BQTISP}(n^{O(1)},o(\log \log \log n))$, then $\forall c \in \mathbb{R}_{>0}$ and for all sufficiently large $n$ we have, by Theorem~\ref{thm:qtm:lowerBoundRunningTime}, $D_L(n) \leq n^{c(\log \log n)^c}$. By Lemma~\ref{thm:growthRateDistinguish} and the quantitative version of Gromov's theorem due to Shalom and Tal \cite[Corollary 1.10]{shalom2010finitary}, $G$ is virtually nilpotent. \qedhere
	\end{enumerate}
\end{proof}

\section{Discussion}\label{sec:discussion}

In this paper, we established strong lower bounds on the expected running time of 2QCFA, or sublogarithmic-space QTMs, that recognize particular languages with bounded error. In particular, the language $L_{pal}$ had been shown by Ambainis and Watrous \cite{ambainis2002two} to be recognizable with bounded error by a single-qubit 2QCFA in expected time $2^{O(n)}$. We have given a matching lower bound: no 2QCFA (of any size) can recognize $L_{pal}$ with bounded error in expected time $2^{o(n)}$. Moreover, we have shown that no QTM, that runs in expected time $2^{n^{1-\Omega(1)}}$ and uses space $o(\log n)$, can recognize $L_{pal}$ with bounded error. This latter results is especially interesting, as a \textit{deterministic} TM can recognize $L_{pal}$ using space $O(\log n)$ (and, of course, polynomial time); therefore, polynomial time \textit{quantum} TMs have no (asymptotic) advantage over polynomial time \textit{deterministic} TMs in terms of the amount of space needed to recognize $L_{pal}$.

Our main technical result, Theorem~\ref{thm:2qcfa:lowerBoundRunningTime}, showed that, if a language $L$ is recognized with bounded error by a 2QCFA in expected time $T(n)$, then $\exists a \in \mathbb{R}_{>0}$ (that depends only on the number of states of the 2QCFA) such that $T(n)=\Omega(D_L(n)^a)$, where $D_L$ is the Dwork-Stockmeyer \textit{nonregularity} of $L$. This result is extremely (qualitatively) similar to the landmark result of Dwork and Stockmeyer \cite[Lemma 4.3]{dwork1990time}, which showed that, if a language $L$ is recognized with bounded error by a 2PFA in expected time $T(n)$, then $\exists a \in \mathbb{R}_{>0}$ (that depends only on the number of states of the 2PFA) such that $T(n)=\Omega(2^{D_L(n)^a})$. We again note that both of these lower bounds are tight.

We conclude by stating a few interesting open problems. While our lower bound on the expected running time $T(n)$, of a 2QCFA that recognizes a language $L$, in terms of $D_L(n)$ cannot be improved, it is natural to ask if one could establish a lower bound on $T(n)$ in terms of a different hardness measure of $L$ that would be stronger for certain languages. Generalizing the definitions made in Section~\ref{sec:runningTimeLowerBound:nonregularity}, let $\mathcal{F}$ denote a class of finite automata (e.g., DFA, NFA, 2DFA, etc.), let $L$ be a language over some alphabet $\Sigma$, and let $A_{L,\leq}^{\mathcal{F}}(n)=\min \{\lvert M \rvert:M \in \mathcal{F} \text{ and } L(M)\cap \Sigma^{\leq n}=L \cap \Sigma^{\leq n}\}$ denote the smallest number of states of an automaton of type $\mathcal{F}$ that agrees with $L$ on all strings of length at most $n$. As discussed earlier, $A_{L,\leq}^{DFA}(n)=D_L(n)$, for any language $L$ and for any $n \in \mathbb{N}$. Recall that DFA and 2DFA both recognize precisely the regular languages \cite{rabin1959finite}, but for some $\widehat{L} \in \mathsf{REG}$, the smallest 2DFA that recognizes $\widehat{L}$ might have many fewer states than the smallest DFA that recognizes $\widehat{L}$. In fact, there is a sequence of regular languages $(L_k)_{k \in \mathbb{N}}$ such that $L_k$ can be recognized by a $5k+5$-state 2DFA, but any DFA that recognizes $L_k$ requires at least $k^k$ states \cite{meyer1971economy}; however, this is (essentially) the largest succinctness advantage possible, as any language recognizable by a $d$-state 2DFA is recognizable by a $(d+2)^{d+1}$-state DFA \cite{shepherdson1959reduction}. Of course, for any language $L$, we have $A_{L,\leq}^{2DFA}(n) \leq A_{L,\leq}^{DFA}(n)$, $\forall n$. For certain languages $L$, we have $A_{L,\leq}^{2DFA}(n) \ll A_{L,\leq}^{DFA}(n)$, $\forall n$; most significantly, this holds for the languages $L_{pal}$ and $L_{eq}$ shown by Ambainis and Watrous \cite{ambainis2002two} to be recognizable with bounded error by 2QCFA in, respectively, expected exponential time and expected polynomial time. In particular, it is easy to show that $A_{L_{pal},\leq}^{DFA}(n)=2^{\Theta(n)}$, $A_{L_{eq},\leq}^{DFA}(n)=\Theta(n)$, and $A_{L_{pal},\leq}^{2DFA}(n)=n^{\Theta(1)}$; moreover, $A_{L_{eq},\leq}^{2DFA}(n)=\log^{\Theta(1)}(n)$ \cite[Theorem 3 and Corollary 4]{ibarra1988sublogarithmic}. In fact, this same phenomenon occurs for all the group word problems that we can show \cite{remscrim2019power} are recognized by 2QCFA. Might this be true for all languages recognizable by 2QCFA?

\begin{openProb}
	If a language $L$ is recognizable with bounded error by a 2QCFA in expected time $T(n)$, does a stronger lower bound than $T(n)=(A_{L,\leq}^{2DFA}(n))^{\Omega(1)}$ hold?
\end{openProb}

We have shown that the class of languages recognizable with bounded error by a 2QCFA in expected polynomial time is contained in $\mathsf{L/poly}$. This type of \textit{dequantization} result, which shows that the class of languages recognizable by a particular quantum model is contained in the class of languages recognizable by a particular classical model, is analogous to the Adleman-type \cite{adleman1978two} \textit{derandomization} result $\mathsf{BPL} \subseteq \mathsf{L/poly}$. It is natural to ask if our dequantization result might be extended, either to 2QCFA that run in a larger time bound, or to small-space QTM. Note that $\mathsf{L/poly}=\{L:A_{L,=}^{2DFA}(n)=n^{O(1)}\}=\{L:A_{L,\leq}^{2DFA}(n)=n^{O(1)}\} \supsetneq \{L:A_{L,\leq}^{DFA}(n)=n^{O(1)}\}$. This further demonstrates the value of the preceding open problem, as any improvement in the lower bound on $T(n)$ in terms of $A_{L,\leq}^{2DFA}(n)$ would directly translate into an improved dequantization result.

The seminal paper of Lipton and Zalcstein \cite{lipton1977word} showed that, if a finitely generated group $G$ has a faithful finite-dimensional (linear) representation over a field of characteristic $0$, then $W_G \in \mathsf{L}$ (deterministic logspace). We \cite{remscrim2019power} recently adapted their technique to show that 2QCFA can recognize the word problem $W_G$ of any group $G$ that belongs to a certain (proper) subset of the set of groups to which their result applies: any group $G$ that has a faithful finite-dimensional \textit{unitary} representation of a certain special type. The requirement, imposed by the laws of quantum mechanics, that the state of the quantum register of a 2QCFA must evolve unitarily, prevents a 2QCFA from (directly) implementing the Lipton-Zalcstein algorithm for any other groups; on the other hand, for those groups $G$ that do have such a representation, these same laws allow a 2QCFA to recognize $W_G$ using only a constant amount of space. The word problem $W_G$ of any group $G$ that lacks such a representation (for example, all $G \in \mathcal{G}_{\textbf{vNilp}} \setminus \mathcal{G}_{\textbf{vAb}}$, or any infinite Kazhdan group, or any group of intermediate growth) seems to be a plausible candidate for a hard problem for 2QCFA (see \cite{remscrim2019power} for further discussion).

\begin{openProb}\label{opProb:2qcfa:wordProbExp}
	Is there a finitely generated group $G$ that does not have a faithful finite-dimensional projective unitary representation for which $W_G \in \mathsf{BQE2QCFA}$? 
\end{openProb}

Concerning those groups with word problem recognizable by a 2QCFA in expected polynomial time, we have shown that, if $G \in \mathcal{G}_{\textbf{vAb}}$, then $W_G \in \mathsf{coRQP2QCFA}_{\overline{\mathbb{Q}}}(2) \subseteq \mathsf{BQP2QCFA}$ \cite[Theorem 1.2]{remscrim2019power}; moreover, if $W_G \in \mathsf{BQP2QCFA}$, then $G \in \mathcal{G}_{\textbf{vNilp}}$ (\Cref{thm:2qcfa:wordProbRunTimeMain}). We have also shown, if $W_H \not \in \mathsf{BQP2QCFA}$, where $H \in \mathcal{G}_{\textbf{vNilp}}$ is the (three-dimensional discrete) Heisenberg group, then the classification of those groups whose word problem is recognizable by a 2QCFA in expected polynomial time would be complete; in particular, we would have $W_G \in \mathsf{BQP2QCFA} \Leftrightarrow G \in \mathcal{G}_{\textbf{vAb}}$ (\Cref{thm:2qcfa:wordProbHeisenberg}). This naturally raises the following question.

\begin{openProb}\label{opProb:2qcfa:wordProbPoly}
	Is there a group $G \in \mathcal{G}_{\textbf{vNilp}} \setminus \mathcal{G}_{\textbf{vAb}}$ such that $W_G \in \mathsf{BQP2QCFA}$? In particular, is $W_H \in \mathsf{BQP2QCFA}$, where $H$ is the Heisenberg group?
\end{openProb}

\section*{Acknowledgments}

The author would like to express his sincere gratitude to Professor Michael Sipser for many years of mentorship and support, without which this work would not have been possible, and to thank Professor Richard Lipton, as well as the anonymous reviewers, for several helpful comments on an earlier draft of this paper.

\bibliographystyle{plainurl}
\let\OLDthebibliography\thebibliography
\renewcommand\thebibliography[1]{
	\OLDthebibliography{#1}
	\setlength{\parskip}{0pt}
	\setlength{\itemsep}{0pt plus 0.3ex}
}
\bibliography{qfaLinGroupRef} 

\begin{thebibliography}{10}

\bibitem{adleman1978two}
Leonard Adleman.
\newblock Two theorems on random polynomial time.
\newblock In {\em 19th Annual Symposium on Foundations of Computer Science
  (sfcs 1978)}, pages 75--83. IEEE, 1978.

\bibitem{ambainis2002two}
Andris Ambainis and John Watrous.
\newblock Two-way finite automata with quantum and classical states.
\newblock {\em Theoretical Computer Science}, 287(1):299--311, 2002.

\bibitem{ambainis2015automata}
Andris Ambainis and Abuzer Yakary{\i}lmaz.
\newblock Automata and quantum computing.
\newblock {\em arXiv preprint arXiv:1507.01988}, 2015.

\bibitem{anisimov1971group}
Ao~V Anisimov.
\newblock Group languages.
\newblock {\em Cybernetics and Systems Analysis}, 7(4):594--601, 1971.

\bibitem{arute2019quantum}
Frank Arute, Kunal Arya, Ryan Babbush, Dave Bacon, Joseph~C Bardin, Rami
  Barends, Rupak Biswas, Sergio Boixo, Fernando~GSL Brandao, David~A Buell,
  et~al.
\newblock Quantum supremacy using a programmable superconducting processor.
\newblock {\em Nature}, 574(7779):505--510, 2019.

\bibitem{birget2002isoperimetric}
J-C Birget, A~Yu Ol'shanskii, Eliyahu Rips, and Mark~V Sapir.
\newblock Isoperimetric functions of groups and computational complexity of the
  word problem.
\newblock {\em Annals of Mathematics}, pages 467--518, 2002.

\bibitem{borodin1983parallel}
Allan Borodin, Stephen Cook, and Nicholas Pippenger.
\newblock Parallel computation for well-endowed rings and space-bounded
  probabilistic machines.
\newblock {\em Information and Control}, 58(1-3), 1983.

\bibitem{condon1998power}
Anne Condon, Lisa Hellerstein, Samuel Pottle, and Avi Wigderson.
\newblock On the power of finite automata with both nondeterministic and
  probabilistic states.
\newblock {\em SIAM Journal on Computing}, 27(3):739--762, 1998.

\bibitem{dunwoody1985accessibility}
Martin~J Dunwoody.
\newblock The accessibility of finitely presented groups.
\newblock {\em Inventiones mathematicae}, 81(3):449--457, 1985.

\bibitem{dwork1990time}
Cynthia Dwork and Larry Stockmeyer.
\newblock A time complexity gap for two-way probabilistic finite-state
  automata.
\newblock {\em SIAM Journal on Computing}, 19(6):1011--1023, 1990.

\bibitem{dwork1992finite}
Cynthia Dwork and Larry Stockmeyer.
\newblock Finite state verifiers {I}: The power of interaction.
\newblock {\em Journal of the ACM (JACM)}, 39(4):800--828, 1992.

\bibitem{fefferman2020eliminating}
Bill Fefferman and Zachary Remscrim.
\newblock Eliminating intermediate measurements in space-bounded quantum
  computation, 2020.
\newblock \href {http://arxiv.org/abs/2006.03530} {\path{arXiv:2006.03530}}.

\bibitem{freivalds1981probabilistic}
R{\=u}si{\c{n}}{\v{s}} Freivalds.
\newblock Probabilistic two-way machines.
\newblock In {\em International Symposium on Mathematical Foundations of
  Computer Science}, pages 33--45. Springer, 1981.

\bibitem{greenberg1986lower}
Albert~G Greenberg and Alan Weiss.
\newblock A lower bound for probabilistic algorithms for finite state machines.
\newblock {\em Journal of Computer and System Sciences}, 33(1):88--105, 1986.

\bibitem{gromov1981groups}
Michael Gromov.
\newblock Groups of polynomial growth and expanding maps (with an appendix by
  {J}acques {T}its).
\newblock {\em Publications Math{\'e}matiques de l'IH{\'E}S}, 53:53--78, 1981.

\bibitem{grover1996fast}
Lov~K Grover.
\newblock A fast quantum mechanical algorithm for database search.
\newblock {\em Proceedings of the Twenty-Eighth Annual ACM Symposium of Theory
  of Computing}, pages 212--219, 1996.

\bibitem{harrow2009quantum}
Aram~W Harrow, Avinatan Hassidim, and Seth Lloyd.
\newblock Quantum algorithm for linear systems of equations.
\newblock {\em Physical review letters}, 103(15):150502, 2009.

\bibitem{hennie1965one}
Fred~C Hennie.
\newblock One-tape, off-line {T}uring machine computations.
\newblock {\em Information and Control}, 8(6):553--578, 1965.

\bibitem{herbst1991subclass}
Thomas Herbst.
\newblock On a subclass of context-free groups.
\newblock {\em RAIRO-Theoretical Informatics and Applications-Informatique
  Th{\'e}orique et Applications}, 25(3):255--272, 1991.

\bibitem{holt2005groups}
Derek~F Holt, Sarah Rees, Claas~E R{\"o}ver, and Richard~M Thomas.
\newblock Groups with context-free co-word problem.
\newblock {\em Journal of the London Mathematical Society}, 71(3):643--657,
  2005.

\bibitem{ibarra1988sublogarithmic}
Oscar~H Ibarra and Bala Ravikumar.
\newblock Sublogarithmic-space {T}uring machines, nonuniform space complexity,
  and closure properties.
\newblock {\em Mathematical systems theory}, 21(1):1--17, 1988.

\bibitem{kacneps1990minimal}
J{\=a}nis Ka{\c{n}}eps and R{\=u}si{\c{n}}{\v{s}} Freivalds.
\newblock Minimal nontrivial space complexity of probabilistic one-way {T}uring
  machines.
\newblock In {\em International Symposium on Mathematical Foundations of
  Computer Science}, pages 355--361. Springer, 1990.

\bibitem{karp1967some}
Richard~M Karp.
\newblock Some bounds on the storage requirements of sequential machines and
  {T}uring machines.
\newblock {\em Journal of the ACM (JACM)}, 14(3):478--489, 1967.

\bibitem{lipton1977word}
Richard~J Lipton and Yechezkel Zalcstein.
\newblock Word problems solvable in logspace.
\newblock {\em Journal of the ACM (JACM)}, 24(3):522--526, 1977.

\bibitem{loh2017geometric}
Clara L{\"o}h.
\newblock {\em Geometric group theory}.
\newblock Springer, 2017.

\bibitem{melkebeek2012time}
Dieter~van Melkebeek and Thomas Watson.
\newblock Time-space efficient simulations of quantum computations.
\newblock {\em Theory of Computing}, 8(1):1--51, 2012.

\bibitem{meyer1971economy}
Albert~R Meyer and Michael~J Fischer.
\newblock Economy of description by automata, grammars, and formal systems.
\newblock In {\em 12th Annual Symposium on Switching and Automata Theory (swat
  1971)}, pages 188--191. IEEE, 1971.

\bibitem{muller1983groups}
David~E Muller and Paul~E Schupp.
\newblock Groups, the theory of ends, and context-free languages.
\newblock {\em Journal of Computer and System Sciences}, 26(3):295--310, 1983.

\bibitem{nielsen2002quantum}
Michael~A Nielsen and Isaac Chuang.
\newblock Quantum computation and quantum information, 2002.

\bibitem{rabin1959finite}
Michael~O Rabin and Dana Scott.
\newblock Finite automata and their decision problems.
\newblock {\em IBM journal of research and development}, 3(2):114--125, 1959.

\bibitem{remscrim2019power}
Zachary Remscrim.
\newblock {The Power of a Single Qubit: Two-Way Quantum Finite Automata and the
  Word Problem}.
\newblock In {\em 47th International Colloquium on Automata, Languages, and
  Programming (ICALP 2020)}, volume 168 of {\em Leibniz International
  Proceedings in Informatics (LIPIcs)}, pages 139:1--139:18, 2020.

\bibitem{say2017magic}
AC~Say and Abuzer Yakaryilmaz.
\newblock Magic coins are useful for small-space quantum machines.
\newblock {\em Quantum Information \& Computation}, 17(11-12):1027--1043, 2017.

\bibitem{shallit1996automaticityIV}
Jeffrey Shallit.
\newblock Automaticity {IV}: sequences, sets, and diversity.
\newblock {\em Journal de th{\'e}orie des nombres de Bordeaux}, 8(2):347--367,
  1996.

\bibitem{shallit1996automaticityI}
Jeffrey Shallit and Yuri Breitbart.
\newblock Automaticity {I}: Properties of a measure of descriptional
  complexity.
\newblock {\em Journal of Computer and System Sciences}, 53(1):10--25, 1996.

\bibitem{shalom2010finitary}
Yehuda Shalom and Terence Tao.
\newblock A finitary version of gromov’s polynomial growth theorem.
\newblock {\em Geometric and Functional Analysis}, 20(6):1502--1547, 2010.

\bibitem{shepherdson1959reduction}
John~C Shepherdson.
\newblock The reduction of two-way automata to one-way automata.
\newblock {\em IBM Journal of Research and Development}, 3(2):198--200, 1959.

\bibitem{shor1994algorithms}
Peter~W Shor.
\newblock Algorithms for quantum computation: Discrete logarithms and
  factoring.
\newblock In {\em Proceedings 35th annual symposium on foundations of computer
  science}. Ieee, 1994.

\bibitem{ta2013inverting}
Amnon Ta-Shma.
\newblock Inverting well conditioned matrices in quantum logspace.
\newblock In {\em Proceedings of the forty-fifth annual ACM symposium on Theory
  of computing}, pages 881--890, 2013.

\bibitem{thom2013convergent}
Andreas Thom.
\newblock Convergent sequences in discrete groups.
\newblock {\em Canadian Mathematical Bulletin}, 56(2):424--433, 2013.

\bibitem{tits1972free}
Jacques Tits.
\newblock Free subgroups in linear groups.
\newblock {\em Journal of Algebra}, 20(2):250--270, 1972.

\bibitem{watrous2003complexity}
John Watrous.
\newblock On the complexity of simulating space-bounded quantum computations.
\newblock {\em Computational Complexity}, 12(1-2):48--84, 2003.

\bibitem{watrous2009encyclopedia}
John Watrous.
\newblock Encyclopedia of complexity and system science, chapter quantum
  computational complexity, 2009.

\bibitem{watrous2018theory}
John Watrous.
\newblock {\em The theory of quantum information}.
\newblock Cambridge University Press, 2018.

\bibitem{wolf1968growth}
Joseph~A Wolf et~al.
\newblock Growth of finitely generated solvable groups and curvature of
  riemannian manifolds.
\newblock {\em Journal of differential Geometry}, 2(4):421--446, 1968.

\bibitem{yakaryilmaz2010succinctness}
Abuzer Yakaryilmaz and AC~Cem Say.
\newblock Succinctness of two-way probabilistic and quantum finite automata.
\newblock {\em Discrete Mathematics and Theoretical Computer Science},
  12(4):19--40, 2010.

\end{thebibliography}

\end{document}